\theoremstyle{definition}
\newtheorem{definition}{Definition}[section]
\newtheorem{theorem}{Theorem}
\newtheorem{lemmma}[theorem]{Lemma}
\newtheorem{corollary}[theorem]{Corollary}
\renewcommand{\arraystretch}{2}
\renewcommand{\beta}{{a}}
\renewcommand{\gamma}{{b}}
\newcommand{\ALG}{\textrm{ALG}}
\newcommand{\OPT}{\textrm{OPT}}
\newtheorem*{remark}{Remark}
\title{Prophet Inequality: Order selection beats random order}
\author{
Archit Bubna \\ Indian Institute of Technology, Delhi  \\ \href{mailto:architbubna12@gmail.com}{architbubna12@gmail.com}
\and
Ashish Chiplunkar\\ Indian Institute of Technology, Delhi  \\ \href{mailto:ashishc@iitd.ac.in}{ashishc@iitd.ac.in}
}
\date{}
\begin{document}

\maketitle

\begin{abstract}
% In the prophet inequality problem, a gambler faces a sequence of items that arrive online with values drawn independently from known distributions. When an item arrives, the gambler must either accept it as her reward or reject it irrevocably. In this paper, we show that the power of selecting the arrival order of the items allows the gambler to attain a strictly better competitive ratio in the worst case scenario than if the items were to arrive randomly.

In the prophet inequality problem, a gambler faces a sequence of items arriving online with values drawn independently from known distributions. On seeing an item, the gambler must choose whether to accept its value as her reward and quit the game, or reject it and continue. The gambler's aim is to maximize her expected reward relative to the expected maximum of the values of all items.
% Several variants of this problem have been proposed, including the order selection setting, where the gambler selects the arrival order, and the random order setting (also known as prophet secretary), where the items arrive randomly.
Since the seventies, a tight bound of $\frac{1}{2}$ has been known for this competitive ratio in the setting where the items arrive in an adversarial order \citep*{Krengel1,Krengel2}. However, the optimum ratio still remains unknown in the order selection setting, where the gambler selects the arrival order, as well as in \textit{prophet secretary}, where the items arrive in a random order. Moreover, it is not even known whether a separation exists between the two settings.

In this paper, we show that the power of order selection allows the gambler to guarantee a strictly better competitive ratio than if the items arrive randomly. 
% In this paper, we establish a separation between the order selection setting and the random order setting by designing a 0.7258-competitive algorithm for the former and showing a hardness of 0.7254 for the latter. 
For the order selection setting, we identify an instance for which Peng and Tang’s \citep*{PT} state-of-the-art algorithm performs no better than their claimed competitive ratio of (approximately) $0.7251$, thus illustrating the need for an improved approach. We therefore extend their design and provide a more general algorithm design framework, using which we show that their ratio can be beaten, by designing a 0.7258-competitive algorithm.
% For the order selection setting, Peng and Tang \citep*{PT} designed a (approximately) 0.7251-competitive algorithm, which uses a continuous arrival time design and draws the arrival times of the items independently from carefully constructed distributions. Their algorithm uses a common time-dependent threshold function for all items. We provide a general framework to design algorithms that use independent arrival times. This framework allows the algorithm to use a different time-dependent threshold function for each item. Using this framework, we show that \citep*{PT}'s state-of-the-art ratio can be beaten, by designing a 0.7258-competitive algorithm. 
For the random order setting, we improve upon Correa, Saona and Ziliotto's \citep{Correa} $\sqrt{3}-1\approx$ 0.732-hardness result to show a hardness of 0.7254 for general algorithms - even in the setting where the gambler knows the arrival order beforehand, thus establishing a separation between the order selection and random order settings. 
% As a side result, we also show an upper bound of 0.723 on the competitive ratio of order-unaware algorithms in the random order setting.
\end{abstract}

% \pagenumbering{arabic}

\section{Introduction}\label{intro}
The prophet inequality is a cornerstone result in optimal stopping theory which concerns a game where a gambler faces a sequence of items that arrive online, with value drawn independently from distributions known to the gambler. Once an item arrives, the gambler can see its realized value and she must choose to either accept it as her reward and end the game, or reject it irrevocably. The gambler's goal is to maximize her reward, and compete against the expected reward accumulated by a prophet, who knows the value of each item beforehand, and hence only accepts the item with the maximum value. The prophet inequality due to \citet{Krengel1,Krengel2} asserts that, when the arrival order is adversarial, a $\frac{1}{2}$-competitive algorithm exists, that is, there exists an algorithm that enables the gambler to win a reward whose expectation is at least $\frac{1}{2}$ times the expected reward of the prophet. Moreover, no algorithm can guarantee a competitive ratio greater than $\frac{1}{2}$.

The prophet inequality problem is closely connected to \textit{posted price mechanisms} or PPMs. In a PPM, there is a seller who wants to sell an item, and a sequence of $N$ buyers who approach the seller one-at-a-time. Upon the arrival of a buyer, the seller offers her a price. The buyer may choose to accept or reject the offer, based on her valuation of the item. The first buyer to accept the offer gets the item and pays the seller the price that was offered to her. \citet{PPM} and \citet{Chawla} demonstrated that the problem of designing posted price mechanisms can be reduced to the prophet inequality problem. Later, \citet{CorreaPPM} showed a reduction in the opposite direction, demonstrating that the two problems are equivalent. \citet{Lucier} performs an extensive survey on the economic implications of prophet inequalities.

The economic relevance of the prophet inequality problem has led to an increased interest in the study of the problem and its variants in recent years. The study of many of these variants is aimed at beating the $\frac{1}{2}$ barrier in more relaxed settings. We now describe some of the most extensively studied variants of the prophet inequality problem, starting from the least relaxed setting to the most relaxed setting.
\begin{itemize}
    \item \textbf{Random Order Setting (or Prophet Secretary):} This variant, where the items arrive in a uniformly randomly chosen order, was first studied by \citet{Esfandiari}, who gave a $1-\frac{1}{e}\approx0.632$-competitive algorithm. Later, \citet{Correa0.632} and \citet{Ehsani} used different methods to achieve the same $1-\frac{1}{e}$ ratio. The $1-\frac{1}{e}$ barrier was first beaten by \citet{Azar}. Later, \citet{Correa} provided a $0.669$-competitive algorithm, which still stands as the best known bound for the random order setting. They also showed a hardness of $\sqrt{3}-1\approx 0.732$, thus showing for the first time that the prophet inequality problem is strictly harder in the random order setting than in the IID setting.
    \item \textbf{Order Selection Setting:} In this setting, the gambler is allowed to select the order in which the items arrive. This setting is no harder than the random order setting. For a long time, the best-known bound for the competitive-ratio in the order selection setting was $1-\frac{1}{e}$, as shown by \citet{Chawla}. This bound was believed to be tight because the $1-\frac{1}{e}$ ratio is optimal if we compare the algorithm against the \textit{ex-ante relaxation} objective. Later, the $1-\frac{1}{e}$ bound underwent a sequence of improvements \citep{Azar, Beyhaghi, Correa}. Very recently, \citet{PT} achieved a major improvement by using a novel technique, which they call continuous arrival time design, to construct an (approximately) $0.7251$-competitive\footnote{Peng and Tang's competitive ratio of $\Gamma_{PT}\approx 0.7251$ is given by $\Gamma_{PT}=\frac{\ln \alpha + 1}{\ln \alpha + 1 - \alpha}$, where $\alpha \approx 0.2109$ is the unique solution to $\int_\alpha^1 \frac{\ln\alpha+1}{(\ln\alpha+1)(-x\ln x+x)-\alpha}dx+\frac{1}{\alpha}=0$ on $(0,1)$.\label{PT footnote} } algorithm for this setting. 
    \item \textbf{IID Setting:} In this setting, all the value distributions are identical. Observe that this setting is no harder than the order selection setting in the worst case. This is because the IID setting can be seen as a special case of the order selection setting, since the power of order selection is useless in the case of identical distributions.  \citet{Hill} initiated the study of this setting and gave a $1-\frac{1}{e}\approx 0.632$-competitive algorithm. They also showed that no algorithm can obtain a competitive ratio greater than (approximately) $0.745$.\footnote{Hill and Kertz's IID bound of $\Gamma \approx 0.745$ is the unique solution to $\int_0^1\frac{1}{y(1-\ln y)+1/\Gamma-1}dy=1$.} Later, \citet{Abolhassani} improved the $0.632$ competitive-ratio to $0.738$. The problem was finally closed by \citet{Correa2}, who designed an algorithm that matched Hill and Kertz's hardness bound of 0.745. It is noteworthy that the work of \citet{Allaart} obtains the same competitive ratio for a closely related problem in which independent samples from a fixed probability distribution are generated by a Poisson process, and the algorithm is required to pick one of them irrevocably within a finite time horizon.
\end{itemize}

The arrival order of the buyers is of great importance in PPM design. What welfare guarantees can we provide if the buyers approach the seller in a random order? Can we do any better if the seller gets to select the arrival order of the buyers? The latter has been an important open problem, as sequential posted price mechanisms \citep{Chawla} allow the seller to select the order in which the buyers arrive, making it natural to study the benefit of order selection. The random order and order selection settings of the prophet inequality problem directly correspond to the above mentioned scenarios in PPM design. Bounds obtained for these variants of the prophet inequality problem can be directly used as bounds on the total welfare achieved through PPMs relative to offline welfare maximizing auctions.

\subsection{Our Results and Techniques}
In this paper, we establish a separation between the order selection setting and the random order setting by constructing a 0.7258-competitive algorithm for the former and showing a hardness of 0.7254 for the latter. 

For the order selection setting, \citet{PT} provided an (approximately) 0.7251-competitive\footref{PT footnote} algorithm that uses a continuous arrival time design, where each item $i$ is assumed to arrive at some timestamp $t_i \in [0,1]$. The algorithm draws the arrival time $t_i$ of each item independently from carefully constructed arrival time distributions, and subjects them to a common time-dependent threshold function. We extend their design and provide a more general framework to design algorithms that use independent arrival times. This framework allows the algorithm to use a different time-dependent threshold function for each item. Using this framework, we show that Peng and Tang's competitive ratio can be beaten. We prove,
\begin{restatable}{theorem}{orderselection} \label{Order Selection}
There exists an algorithm that guarantees a competitive ratio of at least $\Gamma^*=0.7258$ for all instances of the order selection prophet inequality problem.
\end{restatable}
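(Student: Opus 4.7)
My approach is to formalize and analyze a strict generalization of Peng and Tang's continuous arrival time design in which every item is assigned \emph{its own} threshold function, and then exhibit parameters for this generalized scheme that guarantee competitive ratio $\Gamma^* = 0.7258$. Concretely, I would assign to each item $i$ an arrival time distribution $F_i$ on $[0,1]$ and a non-increasing threshold function $\tau_i : [0,1] \to \mathbb{R}_{\geq 0}$, draw timestamps $t_i \sim F_i$ independently, process items in order of their timestamps, and accept item $i$ on arrival at time $t_i$ if its realized value exceeds $\tau_i(t_i)$ and no earlier item has been accepted. Peng and Tang's algorithm is the special case $\tau_i \equiv \tau$, so any lower bound proved for the generalized scheme is at least $\Gamma_{PT}$.

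Second, I would derive a lower bound on the competitive ratio of this scheme as a functional of $(F_i, \tau_i)_i$. Writing $q_i(t) = \P[v_i \geq \tau_i(t)]$ and letting $S(t)$ denote the probability that no item has been accepted by time $t$, the expected gambler reward decomposes as an integral over the timestamp axis against the $q_i$'s and $S$. Bounding $\E[\max_i v_i]$ from above uses the standard prophet-inequality quantile reduction: it suffices to consider instances in which, for each $i$ and $t$, the mass of $v_i$ above $\tau_i(t)$ is concentrated at $\tau_i(t)$ itself. The worst-case ratio then becomes the infimum of an explicit functional of $(F_i, \tau_i)_i$ over these reduced instances -- a reformulation already implicit in Peng and Tang's analysis that extends cleanly to per-item thresholds.

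The main obstacle is the joint design of $(F_i, \tau_i)_i$ that provably beats $\Gamma_{PT}$. The paper provides the starting point: the hard instance it identifies for Peng and Tang reveals the precise inefficiency of a single shared threshold, namely that items contributing differently to $\E[\max_j v_j]$ are forced into a common acceptance regime. I would remove this inefficiency by introducing carefully chosen item-dependent perturbations of $\tau$ and re-optimizing the arrival distributions $F_i$, using the Euler--Lagrange-type optimality conditions of the reduced functional as a guide. The technical hurdle is certifying that the perturbed scheme has no \emph{new} worst-case instance that drags the competitive ratio back below $\Gamma^*$. I plan to handle this by recasting the competitive ratio of the generalized scheme as the value of an optimal control problem on $[0,1]$, solving it (analytically in structure and numerically in evaluation), and verifying that its value strictly exceeds $0.7258$, which would complete the proof.
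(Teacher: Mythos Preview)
Your opening move---generalizing Peng and Tang to per-item thresholds $\tau_i$ with independent arrival times---matches the paper's framework exactly, and the paper proves (Theorem~\ref{ASD and CR of General Algorithm}) that whenever the resulting arrival-time densities $\bar f_i$ integrate to at most $1$, the scheme achieves $\Gamma$-approximate stochastic dominance, hence $\Gamma$-competitiveness. So that part of your plan is sound.

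The gap is in how you propose to choose $(F_i,\tau_i)_i$. You describe a single unified design obtained by perturbing $\tau$ according to Euler--Lagrange conditions and then certifying the value via an optimal control problem. The paper does \emph{not} do this, and there is good reason to think your route is harder than it looks. The paper's key structural idea---absent from your proposal---is a \emph{two-scheme case split}: first attempt Peng and Tang's construction with the inflated parameter $\Gamma^*=0.7258$; if the arrival densities $\{f_i(\cdot,\Gamma^*)\}$ are valid (the instance is not ``$\Gamma^*$-adverse''), you are done by their own analysis. If some item is $\Gamma^*$-adverse, the paper exploits that diagnosis: it modifies the threshold \emph{only} for that one adverse item via an explicit piecewise-linear reparametrization $h$ of its quantile, leaving all other $\tau_i=\tau$. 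The heart of the argument (Theorem~\ref{Correctness of 2-scheme ALG} and Lemma~\ref{Simultaneous hardness}) is that a $\Gamma^*$-adverse item forces tight constraints on $\tilde p_1$ and simultaneously rules out any \emph{other} item being even $0.7276$-weakly-adverse, which is exactly what makes the alternate scheme well defined. None of this structure is visible from an optimal-control formulation over all $(F_i,\tau_i)$ at once.

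A second, smaller issue: your ``quantile reduction'' bounding $\E[\max_i v_i]$ by concentrating mass at thresholds is not the mechanism the paper uses. The paper proves the pointwise ASD inequality $\P[\ALG>x]\ge\Gamma\,\P[\max_i v_i>x]$ directly from the explicit forms of $\bar f_i$ and $\bar g$; there is no reduction to worst-case instances at that stage. Your functional-infimum framing may be workable, but you would need to show it yields the same clean ASD statement, since that is what decouples the analysis from the specific shape of the $D_i$'s.
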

Note that our main contribution in the order selection setting is not the numerical improvement in the ratio, but rather it is the demonstration of the fact that Peng and Tang's ratio can be beaten by relaxing the constraints of their algorithm and using a more general approach. In order to motivate the need for this more general approach, we identify a set of distributions for which Peng and Tang's algorithm can perform no better than their claimed competitive ratio. This instance is composed of $N$ IID variables whose maximum is distributed uniformly over $[0,1]$, and another variable which is uniformly distributed over the interval $[\alpha,\alpha+\frac{1}{N}]$, where $\alpha\approx0.2109$.\footnote{$\alpha \approx 0.2109$ is the unique solution to $\int_\alpha^1 \frac{\ln\alpha+1}{(\ln\alpha+1)(-x\ln x+x)-\alpha}dx+\frac{1}{\alpha}=0$ on $(0,1)$. } Using straightforward but tedious calculations, it can be verified that as $N$ approaches $\infty$, the maximum competitive ratio that Peng and Tang's algorithm can achieve for the above instance approaches $\Gamma_{PT}\approx 0.7251$.\footref{PT footnote} 

Peng and Tang's algorithm satisfies a stronger condition than competitiveness - the reward $\ALG$ accumulated by the algorithm satisfies $\mathbb{P}[\ALG > x]\geq\Gamma\cdot\mathbb{P}[\max_i v_i > x] \textrm{\ \ \ for all } x>0$ (for $\Gamma=\Gamma_{PT}$ in their case). Here, $v_i$ denotes the value of the $i$-th item.
We call this condition $\Gamma-$\textit{approximate stochastic dominance} (ASD), and we shall refer to such an algorithm as a $\Gamma$-ASD algorithm. It is easy to see that this condition is no weaker than $\Gamma$-competitiveness, i.e. a $\Gamma$-ASD algorithm is $\Gamma$-competitive as well. We show a surprising result, \footnote{This result and its proof are analogous to the result of \citet{Lee}, which states that the existence of a \textit{$\Gamma$-competitive ex-ante prophet inequality} implies the existence of a \textit{$\Gamma$-selectable online contention resolution scheme}.}

\begin{restatable}{theorem}{LP} \label{LP}
If there exists a $\Gamma$-competitive algorithm for all finite support instances of the order selection prophet inequality, then there also exists a $\Gamma$-ASD algorithm for all finite support instances of the order selection prophet inequality.
\end{restatable}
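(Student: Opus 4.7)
The strategy is a minimax / LP duality argument combined with a value-transformation reduction. Fix an arbitrary finite-support instance with independent values $v_1, \ldots, v_N$ whose supports lie in a finite set $V = \{v^{(1)} < \cdots < v^{(k)}\}$, and set $v^{(0)} := 0$. Since $\P[\ALG > x]$ and $\P[\max_i v_i > x]$ are both right-continuous step functions of $x$ with jumps only at points of $V$, the $\Gamma$-ASD condition is equivalent to the finite collection of linear inequalities
\[
\P[\ALG > v^{(j)}] \;\geq\; \Gamma \cdot \P[\max_i v_i > v^{(j)}], \qquad j = 0, 1, \ldots, k-1.
\]
The set of randomized order-selection algorithms for this instance is a finite-dimensional simplex (mixtures of the finitely many deterministic algorithms on a finite-support instance), so by von Neumann's minimax theorem, some algorithm satisfies all $k$ inequalities simultaneously if and only if, for every probability vector $\mu = (\mu_j)_{j=0}^{k-1}$ on these thresholds, some algorithm $\ALG$ satisfies the $\mu$-weighted aggregate
\[
\sum_j \mu_j \P[\ALG > v^{(j)}] \;\geq\; \Gamma \sum_j \mu_j \P[\max_i v_i > v^{(j)}]. \qquad (\star)
\]

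To produce such an algorithm for a given $\mu$, introduce the non-decreasing map $h(v) := \sum_j \mu_j \mathbf{1}[v > v^{(j)}]$ and the transformed finite-support instance with values $v_i' := h(v_i)$. Two identities are immediate: $\max_i v_i' = h(\max_i v_i)$ by monotonicity of $h$, and $\E[h(X)] = \sum_j \mu_j \P[X > v^{(j)}]$ for any random variable $X$ by linearity of expectation. Applying the hypothesized $\Gamma$-competitive order-selection algorithm to the transformed instance yields an algorithm $\ALG'$ with $\E[\ALG'] \geq \Gamma \E[\max_i v_i'] = \Gamma \sum_j \mu_j \P[\max_i v_i > v^{(j)}]$. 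I then pull $\ALG'$ back to the original instance by processing items in the same order and feeding it $h(v_i)$ in place of $v_i$ online; the resulting algorithm $\ALG$ selects the same item as $\ALG'$ does, so $\E[\ALG'] = \E[h(\ALG)] = \sum_j \mu_j \P[\ALG > v^{(j)}]$. Chaining these yields exactly $(\star)$.

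The main obstacle I foresee is the minimax step itself—ensuring that the algorithm space and the threshold-distribution space are sufficiently nice (convex, compact, finite-dimensional) so that von Neumann's theorem applies without technicalities. The finite-support hypothesis is precisely what makes this painless: both sets become finite-dimensional simplices. A secondary book-keeping point is ensuring the identity $\E[h(X)] = \sum_j \mu_j \P[X > v^{(j)}]$ holds exactly; this is handled by using strict inequality ($v > v^{(j)}$) inside the indicator defining $h$, sidestepping any tie-breaking issue at atoms of $\mu$. Once these pieces are in place the proof is purely mechanical, modeled on the analogous reduction of \citet{Lee} from ex-ante prophet inequalities to online contention resolution.
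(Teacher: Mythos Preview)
Your proposal is correct and follows essentially the same approach as the paper: an LP-duality/minimax argument combined with a monotone value transformation that reduces the $\mu$-weighted aggregate inequality to ordinary $\Gamma$-competitiveness on a modified finite-support instance. The only cosmetic difference is that the paper writes out an explicit primal LP and its dual and appeals to strong LP duality, whereas you invoke von Neumann's minimax theorem directly; the value-transformation step (your map $h$, the paper's map $f(a_j)=c_1+\cdots+c_j$) and the pull-back of the competitive algorithm are essentially identical in both arguments.
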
 
In fact, this result also applies to a wide range of arrival order settings, including the random order setting (prophet secretary) and the constrained order setting \citep*{constrained}, and hence we believe this result would be of independent interest. It is convenient to have this result, because designing and analysing a $\Gamma$-ASD algorithm could be much simpler than showing the $\Gamma$-competitiveness of an algorithm without the ASD property. The state-of-the-art results for both the order selection setting and random order setting \citep{Correa} use the ASD property for their competitive analysis. The competitive analysis of algorithms generated by our framework also relies on this ASD condition. The framework allows us to choose a set of identity-dependent threshold functions $\{\tau_i(t)\}$ for the items, and we show that we can find a set of thresholds for which the algorithm generated by the framework guarantees $\Gamma^*$-ASD for $\Gamma^*=0.7258>\Gamma_{PT}$. This opens up the possibility for further exploration in this direction, i.e. finding a set of threshold functions $\{\tau_i(t)\}$ for which the independent arrival time framework provides a significantly higher (possibly optimal) ASD guarantee. For the special case when all the threshold functions are identical, the algorithm generated by the framework is the same as Peng and Tang's algorithm, which achieves $\Gamma_{PT}$-ASD. 

The existing literature on the prophet secretary problem does not clearly define whether or not the algorithm is aware of the arrival order of the items beforehand. Both the possibilities are fairly natural, and we shall refer to them as \textit{order-aware} prophet secretary and \textit{order-unaware} prophet secretary. Our hardness result of 0.7254 for prophet secretary applies to the \textit{order-aware} setting (and hence also to the \textit{order-unaware} setting), thus separating both variants of the prophet secretary problem from the order selection setting.
% As a side result, we also show a hardness of 0.723 for algorithms that at any instant are unaware of the arrival order of the items that have not already arrived. 
\begin{restatable}{theorem}{randomorder} \label{Random Order}
There exists no algorithm that guarantees a competitive ratio greater than $0.7254$ for all instances of the \textit{order-aware} prophet secretary problem.
\end{restatable}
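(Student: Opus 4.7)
The plan is to exhibit a concrete family of instances $\{I_N\}$ (with $N\to\infty$) on which every order-aware algorithm attains competitive ratio at most $0.7254$. A natural template, echoing the hard instance for Peng--Tang's algorithm that is mentioned in the introduction, is to take $I_N$ to consist of $N$ i.i.d.\ items whose maximum converges to the uniform distribution on $[0,1]$, together with one (or possibly two) auxiliary items whose value is supported on a short subinterval of $(0,1)$ around a carefully chosen point $\alpha$. Unlike in the order-selection setting, the position of the auxiliary item within the arrival sequence is drawn uniformly at random; the algorithm is told the realized position in advance but must still commit to accept/reject decisions online.

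First I would restrict attention to a canonical class of algorithms. Using the exchangeability of the i.i.d.\ block together with a standard symmetrization/exchange argument, one can show that for any realized permutation $\sigma$ (revealed to the algorithm), the optimal order-aware strategy can be taken to (a) apply a threshold $\tau(t;s)$ to the i.i.d.\ items, where $t\in[0,1]$ is the normalized arrival time and $s\in[0,1]$ is the known position of the auxiliary item, and (b) apply a single (possibly randomized) threshold rule to the auxiliary item itself. Under this reduction, both $\E[\ALG]$ and $\E[\OPT]$ admit clean closed-form expressions as functionals of $\tau(\cdot\,;\cdot)$ in the $N\to\infty$ limit, using only standard facts about the running maximum of uniform order statistics.

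The inner maximization of the competitive ratio over $\tau(\cdot\,;\cdot)$ is a calculus-of-variations problem whose Euler--Lagrange equation is an ODE of the Hill--Kertz type. Solving this ODE in closed or semi-closed form, substituting back, and finally optimizing the adversary's parameter $\alpha$ (together with the shape/probability of the auxiliary item) yields the claimed ratio $0.7254$. The main obstacle will be the \emph{tightness} of this bound: because $0.7254$ lies only marginally below the order-selection guarantee $0.7258$ from Theorem~\ref{Order Selection}, the proof must carefully quantify how the random placement of the auxiliary item penalizes the algorithm relative to the chosen placement in the order-selection instance, and must control the $N\to\infty$ limit with explicit error bounds. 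In particular, the optimal $\alpha$ and the corresponding ODE solution will likely need to be certified by a rigorous numerical procedure with explicit error analysis, rather than expressed in closed form, and the symmetrization reduction will have to be verified specifically in the order-aware setting where the algorithm has strictly more information than in order-unaware prophet secretary.
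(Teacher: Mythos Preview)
Your proposal has a genuine gap: you are building your hardness instance on the wrong template. The instance ``$N$ i.i.d.\ items whose maximum is uniform on $[0,1]$ plus an auxiliary item near $\alpha\approx 0.2109$'' that you borrow from the introduction is the instance on which \emph{Peng and Tang's particular algorithm} is tight in the \emph{order-selection} setting. It was never claimed to be a hard instance for arbitrary algorithms in random order, and there is no reason to expect that the optimal order-aware prophet-secretary algorithm on that instance is limited to $0.7254$. In particular, your template lacks the ``superstar'' component that drives the known random-order lower bounds: the paper's construction (extending Correa--Saona--Ziliotto) uses one deterministic item of value $a\approx 0.82$ together with $N$ i.i.d.\ items supported on $\{0,b_1,\ldots,b_k,\tfrac{1}{N\epsilon}\}$, where the mass-$\epsilon$ value $\tfrac{1}{N\epsilon}$ contributes an additive $1$ to both $\E[\OPT]$ and $\E[\max_i v_i]$ in the $\epsilon\to 0$ limit. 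Without this superstar term the numerator and denominator behave very differently, and your variational optimization over $\tau(\cdot;\cdot)$ will not land near $0.7254$.

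Methodologically, the paper does not pass to an $N\to\infty$ continuum limit, does not symmetrize to threshold rules, and does not solve any Euler--Lagrange/Hill--Kertz ODE. It simply fixes finite parameters ($N=10^5$, $k=12$, explicit $a,b_i,p_i$), observes that with one non-i.i.d.\ item there are only $N+1$ distinct orderings, and for each ordering computes the \emph{exact} optimal online reward by the backward recursion $\tau_i=\tau_{i+1}+\E[(v_i-\tau_{i+1})^+]$; averaging over the $N+1$ orderings and dividing by the closed-form $\E[\max_i v_i]$ gives a ratio below $0.7254$ (in the limit $\epsilon\to 0$). Your plan to characterize the optimal strategy via calculus of variations is far heavier machinery than needed, and because it is coupled to the wrong instance, carrying it out would not yield the claimed bound. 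If you want to salvage your outline, replace the uniform-max template by the deterministic-plus-superstar-i.i.d.\ family and replace the variational analysis by direct backward induction on a finite-support instance.
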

This result improves upon the 0.732-hardness result due to \citet{Correa}, who analysed a hard instance composed of $N$ two-point IID variables and one deterministic variable. We extend their construction by increasing the support size of the IID variables, allowing us to obtain this improved hardness result. Following our result, a more recent work by \citet{Giordano} used the same idea to show a hardness of 0.7235, however, their result only applies to the \textit{order-unaware} setting.
% \begin{restatable}{theorem}{randomorderunaware} \label{Random Order Unaware}
% There exists no order-unaware algorithm that guarantees a competitive ratio greater than $0.723$ for all instances of the prophet secretary problem. We define order-unaware algorithms as those algorithms which at any instant are not aware of the arrival order of the items that have not already arrived.
% \end{restatable}

\subsection{Related Work}
\textbf{Order Selection Prophet Inequality:} 
% Apart from the results mentioned in Section \ref{intro}, we refer the reader to some important results related to the order selection prophet inequality problem.
\citet{Abolhassani} showed that a competitive ratio of 0.738 can be attained for instances where each type of distribution occurs $\Omega(\log n)$ times. \citet{Liu} showed that if the algorithm is allowed to remove a constant number of items, then it can attain a competitive ratio that is arbitrarily close to the IID bound of 0.745.

\noindent \textbf{Optimal Ordering:} This problem deals with maximizing the algorithm's reward relative to the optimal online algorithm's reward in the order selection setting. The problem of selecting the optimal arrival order was shown to be NP-hard by \citet{Agrawal}.
%, who also designed an FPTAS for when the support size of each distribution is 3. 
\citet{Chakraborty} designed a PTAS for the optimal ordering problem, which was improved to an EPTAS by \citet{Liu}.

\noindent \textbf{Matroid Prophet Inequalities:} In the matroid prophet inequality, the algorithm is allowed to pick a set of items, with the feasible sets of items being independent sets of a given matroid. For $k$-uniform matroids, \citet{Alaei} obtained an asymptotically optimal competitive ratio of $1-O(k^{-\frac{1}{2}})$. \citet{Kleinberg} gave a $\frac{1}{2}$-competitive algorithm for general matroids. Analogously, in the random order setting, \citet{Ehsani} gave a $\left(1-\frac{1}{e}\right)$-competitive algorithm for general matroid feasibility constraints.

\subsection{Organization of the Paper}
Sections \ref{Preliminaries} to \ref{2scheme} of this paper are dedicated to proving Theorem \ref{Order Selection}. More specifically, in Section \ref{General Framework}, we describe our general framework for constructing algorithms that use independent arrival times.  In Section \ref{2scheme}, we construct our 0.7258-competitive algorithm using this framework. Section \ref{LPproof} contains the proof for Theorem \ref{LP}. The proof for Theorem \ref{Random Order}  is contained in Section \ref{Random Order Section} of the paper, and can be read independently from the rest of the paper.

\section{Preliminaries: The Order Selection Setting}\label{Preliminaries}
An instance of the order selection prophet inequality problem is composed of $n>1$ items, along with their corresponding probability distributions $D_1, D_2, D_3,\ldots,D_n$ which are known to the algorithm. The values $\{v_i\}$ of the items are drawn independently from the distributions $\{D_i\}$. The algorithm first selects the order in which the $n$ items arrive. Once the $i$'th item arrives, the algorithm is shown its value $v_i$, and the algorithm must choose to either accept the item and stop, or reject the item irrevocably and move to the next. The aim is to maximize the expected value of the item accepted by the algorithm and compete against a prophet, who can see the future and only accepts the item with the maximum value. We call an algorithm $\Gamma$-competitive if it satisfies the following relation:
$$\mathbb{E}[\ALG]\geq\Gamma\cdot\mathbb{E}[\max_i v_i],$$
where $\ALG$ is a random variable denoting the value of the item accepted by the algorithm.

\subsection{Approximate Stochastic Dominance}\label{ASD}
We say that a non-negative random variable $X$ attains $\Gamma$-approximate stochastic dominance (or $\Gamma$-ASD) over another non-negative random variable $Y$ if the following relation is satisfied.
$$\mathbb{P}[X> x]\geq\Gamma\cdot\mathbb{P}[Y > x] \textrm{\ \ \ for all } x>0.$$
Using the fact that the relation $\int_0^\infty \mathbb{P}[Z > x]\cdot dx = \mathbb{E}[Z]$ holds for every non-negative random variable $Z$ with finite mean, and integrating the above inequality from $x=0$ to $\infty$ on both sides, we directly obtain that $X$ attaining $\Gamma$-ASD over $Y$  implies $\mathbb{E}[X]\geq\Gamma\cdot\mathbb{E}[Y]$. It follows from here that if an algorithm attains $\Gamma$-ASD over the prophet (i.e. $\max_i v_i$), then it is also $\Gamma$-competitive, i.e.
$$\mathbb{P}[\ALG > x]\geq\Gamma\cdot\mathbb{P}[\max_i v_i > x] \textrm{\ \ \ for all } x>0 \ \ \implies \ \  \mathbb{E}[\ALG]\geq\Gamma\cdot\mathbb{E}[\max_i v_i].$$

\subsection{Arrival Time Design and Notation}
We use a continuous arrival time design, similar to the one used by \citet{PT}, wherein we assume that each item $i$ arrives at a time $t_i \in [0,1]$. They define fixed time-dependent thresholds given by the function $\tau(t)$, where $\tau(t)$ is given as

$$\mathbb{P}[\max_{i}v_i>\tau(t)]=t.$$
An item $i$ arriving at time $t_i$ is accepted by their algorithm if the algorithm reaches the item and $v_i > \tau(t_i)$. We preserve the definition of this notation in this paper. We also borrow the following notations. For every $t\in [0,1]$ and $i \in [n]$, we define
$$p_i(t)\overset{\underset{\mathrm{def}}{}}{=}\mathbb{P}[v_i>\tau(t)] \textrm{\ \ \ \ \ and\ \ \ \ \ } q_i(t)\overset{\underset{\mathrm{def}}{}}{=}\mathbb{P}[\max_{j \neq i}v_j>\tau(t)].$$
Throughout Sections \ref{General Framework} and \ref{2scheme}, we assume that we are working with continuous distributions $\{D_i\}$, and hence it is safe to assume that $p_i(t)$ and $q_i(t)$ are non-decreasing continuous functions of time. To see how discrete distributions can be handled, the reader may refer to \citet{Correa}.

From the definitions, it is clear that $p_i(0)=q_i(0)=0$ for all $i \in [n]$. For the ease of presentation, we also assume that all the distributions are supported on some contiguous interval of real numbers, with non-zero probability densities throughout the interval. This allows us to assume that $p_i(1)=q_i(1)=1$ holds for all $i \in [n]$, and that $p_i(t)$ and $q_i(t)$ are strictly increasing on $[0,1]$.

\subsection{Peng and Tang's Independent Arrival Time Algorithm}\label{Peng and Tang}

In this section, we give a brief description of Peng and Tang's \citep{PT} independent arrival time algorithm, which forms a critical subroutine in our algorithm. We also develop some useful notation along the way. Note that we have slightly modified the presentation of the algorithm, considering our assumption that $p_i(1)=q_i(1)=1$ holds for all $i \in [n]$. For some $\Gamma \in (0,1)$:
\begin{itemize}
    \item The algorithm samples the arrival time $t_i$ of each item $i$ independently from carefully constructed distributions. For each $i$, the probability density of $t_i$ at $t\in[0,1)$ is given by a function $f_i(t,\Gamma)$. We let $t_i=1$ with probability $1-\int_0^1 f_i(t,\Gamma)\cdot dt$.
    \item The items are made to arrive in ascending order of their arrival times.
    \item The algorithm accepts the first item that satisfies $v_i>\tau(t_i)$, where $\tau(t)$ satisfies
    $$\mathbb{P}[\max_{i}v_i>\tau(t)]=t.$$
    \item An item appearing at $t=1$ is always rejected.
\end{itemize}
Before we define the arrival time distributions, we define an auxiliary function $g(t, \Gamma)$. 
$$g(t, \Gamma) \overset{\underset{\mathrm{def}}{}}{=} \Gamma \left( \sum_i(1-q_i(t))p_i(t)-t\right)+1.$$
We now define the arrival time distributions $\{f_i(t,\Gamma)\}$.

$$f_i(t, \Gamma) \overset{\underset{\mathrm{def}}{}}{=} \Gamma\frac{q_i'(t)}{g(t, \Gamma)}\exp\left ( -\Gamma \int_{0}^{t}\frac{q_i'(s)p_i(s)}{g(s, \Gamma)}ds\right ).$$ We use the above definitions for $f_i(t, \Gamma)$ and $g(t, \Gamma)$ throughout this paper.
\bigbreak
\noindent \textbf{Condition for the construction to be well defined:} The only condition required for this construction to be well defined is that 
$$\int_0^1f_i(t,\Gamma)\cdot dt \leq 1 \textrm{\  for all\  }{i \in [n]}.$$
Now, we state a result from \citet{PT}, which says that the  construction is always well defined if $\Gamma = \Gamma_{PT} \approx 0.7251$. 

\begin{theorem}\label{Condition for Peng and Tang to be valid}
For any set of distributions $\{D_i\}$ with $n$ items, the following inequality holds for all $i \in [n]$.
$$\int_0^1f_i(t,\Gamma_{PT})\cdot dt \leq 1 \textrm{\  for all\  }{i \in [n]}$$
where $\Gamma_{PT} = \frac{\ln\alpha+1}{\ln\alpha+1-\alpha} \approx 0.7251$ and $\alpha \approx 0.2109$ is the unique solution to the following equation on $(0,1)$.
$$\int_\alpha^1 \frac{\ln\alpha+1}{(\ln\alpha+1)(-x\ln x+x)-\alpha}dx+\frac{1}{\alpha}=0$$
\end{theorem}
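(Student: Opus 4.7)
The theorem asserts that Peng and Tang's construction is well-defined at $\Gamma = \Gamma_{PT}$, that is, the total arrival-time mass $J_i(\Gamma) = \int_0^1 f_i(t,\Gamma)\,dt$ never exceeds $1$ for any item $i$. My plan is to frame this as an extremal problem: maximize $J_i(\Gamma)$ over all instances and all items $i$, and argue that the supremum equals $1$ precisely when $\Gamma = \Gamma_{PT}$.

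The first step is analytic reformulation. Independence of the $v_j$'s gives $(1-p_i(t))(1-q_i(t)) = 1-t$, hence $(1-q_i(t))p_i(t) = t - q_i(t)$ and $g(t,\Gamma) = 1 + \Gamma\bigl((n-1)t - \sum_j q_j(t)\bigr)$. Setting $\pi_i(t) = \exp(-\Gamma h_i(t))$, a direct differentiation gives $\pi_i'(t) = -f_i(t,\Gamma)\,p_i(t)$, so $\pi_i$ is a marginal survival function and the identity $\int_0^t f_i(s)\,p_i(s)\,ds = 1 - \pi_i(t) \le 1$ holds for free. The nontrivial quantity is $J_i(\Gamma) = \int_0^1 \Gamma\, q_i'(t)\,\pi_i(t)/g(t,\Gamma)\,dt$, which can blow up when $p_i$ is small (so that $\pi_i$ stays close to $1$).

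The second step is a worst-case reduction. Two pressures drive the maximizer in the same direction: $g(t,\Gamma)$ is smallest when $\sum_{j\ne i} q_j(t)$ is largest, which happens when the non-$i$ items form a continuum of infinitesimal items whose maximum is uniform on $[0,1]$; and $\pi_i(t)$ stays closest to $1$ when $p_i$ is concentrated on a short interval. Together these suggest that the worst case is precisely the hard instance described in the introduction: $n-1$ IID items whose maximum is uniform on $[0,1]$, plus one atom item living near a threshold $\tau(1-\alpha)$. I would formalize the reduction via a splitting argument, where replacing one non-$i$ item with many small IID copies (keeping the aggregate max distribution fixed) weakly decreases $g$ pointwise, coupled with a Gr\"onwall-type comparison for the ODE $\pi_i' = -f_i\,p_i$ to propagate the pointwise inequality into the integral $J_i$.

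The third step is the one-parameter computation. Passing to $n \to \infty$ in the worst-case family, $J_i(\Gamma)$ becomes an explicit function of $\alpha \in (0,1)$ and $\Gamma$. Imposing $J_i(\Gamma) = 1$ together with stationarity in $\alpha$ yields the coupled system $\Gamma_{PT} = (\ln\alpha+1)/(\ln\alpha+1-\alpha)$ and $\int_\alpha^1 \frac{\ln\alpha+1}{(\ln\alpha+1)(-x\ln x + x) - \alpha}\,dx + \tfrac{1}{\alpha} = 0$ defining $\alpha$ and $\Gamma_{PT}$. The main obstacle is step two: $g$ enters both the denominator of $f_i$ and the exponent of $\pi_i$, so a naive pointwise bound on $g$ does not directly translate into a bound on $J_i$; the cleanest remedy is a Gr\"onwall comparison across two parallel ODEs under the splitting operation.
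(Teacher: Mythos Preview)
The paper does not prove this theorem itself; it is quoted from Peng and Tang and used as a black box. However, Appendix~\ref{Full proof} reproduces enough of their machinery (Lemmas~\ref{g tilde bound} and~\ref{rho1>rho2}, and the ODE analysis in Appendix~\ref{Proof of properties}) to see how the actual argument goes, and it differs from your plan in an important way.

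Your reduction over the \emph{other} items is essentially sound: splitting an item $j\ne i$ into pieces $j_1,j_2$ with $(1-p_{j_1})(1-p_{j_2})=1-p_j$ does decrease $g$ pointwise (the change in $\sum_k\int_0^t p_k q_k'\,ds$ works out to $p_{j_1}(t)p_{j_2}(t)(1-q_j(t))\ge 0$), and in the continuum limit one recovers exactly the bound of Lemma~\ref{g tilde bound},
\[
\tilde g_i(x,\Gamma)\ \ge\ \hat g_i(x,\Gamma)\ :=\ \Gamma\bigl[-(1-x)\ln(1-x)\,(1-\tilde p_i(x))-x\bigr]+1.
\]
Peng and Tang bypass the splitting and prove this inequality directly; Lemma~\ref{rho1>rho2} is then precisely the one-sided Gr\"onwall comparison you want (if the integral is at most $1$ for the smaller $\hat g_i$, it is at most $1$ for the true $\tilde g_i$).

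The genuine gap is in your treatment of item $i$ itself. After the reduction above one must still show $\hat G_i(0,\Gamma_{PT})\le 1$ for \emph{every} nondecreasing $\tilde p_i$, and your plan---assume the worst $\tilde p_i$ is a near-atom at level $\alpha$, compute $J_i=1$ there, and read off the defining equations for $\alpha,\Gamma_{PT}$ from stationarity---only establishes tightness on that one-parameter family, not the universal bound. Peng and Tang handle this step by writing
\[
\hat G_\Gamma'(z)\ =\ \frac{\Gamma\bigl(\tilde p_i(z)\,\hat G_\Gamma(z)-1\bigr)}{\hat g_i(z,\Gamma)},
\]
observing that the right side is monotone in $\tilde p_i(z)\in[0,1]$ for each fixed value of $\hat G_\Gamma(z)$, and sandwiching $\hat G_\Gamma$ between explicit barrier functions $H_\Gamma,K_\Gamma,M_\Gamma$ obtained by setting $\tilde p_i\equiv 0$ or $1$ (see Appendix~\ref{Proof of properties}); the constants $\alpha$ and $\Gamma_{PT}$ emerge from the tangency of these barriers. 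Your proposal is missing this piece of the argument, and without it the ``one-parameter computation'' in your third step yields the right constants but no inequality.
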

We now state a result from \citet{PT} about the attainment of ASD (refer to Section \ref{ASD}) by the algorithm. Here, the random variable $\ALG$ denotes the value of the item accepted by the algorithm.

\begin{theorem}\label{ASD and CR of Peng and Tang}
If for some $\Gamma \in (0,1)$ the arrival time distributions $\{f_i(t, \Gamma)\}$ are well defined, then for all $t\in [0,1]$ the following inequality holds
$$\mathbb{P}[\ALG>\tau(t)] \geq \Gamma \cdot  \mathbb{P}[\max_{i}v_i>\tau(t)]$$ when the algorithm selects the arrival times of the items from the distributions $\{f_i(t, \Gamma)\} $ and uses the threshold function $\tau(t)$.
\end{theorem}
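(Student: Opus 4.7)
My plan is to establish the stronger equality $\mathbb{P}[\ALG > \tau(t)] = \Gamma t$ for all $t \in [0,1]$, which yields the claimed inequality since $\mathbb{P}[\max_i v_i > \tau(t)] = t$ by the definition of $\tau$. The argument is a direct computation that exploits the particular algebraic form of $\{f_i(\cdot, \Gamma)\}$ and $g(\cdot,\Gamma)$. First I would track $P(s)$, the probability that the algorithm has not accepted any item by time $s$. By independence of arrival times and values across items,
$$P(s) = \prod_i \left(1 - \int_0^s f_i(u, \Gamma)\, p_i(u)\, du \right).$$
Reading off the definition of $f_i$, one recognizes that $f_i(u,\Gamma)\, p_i(u) = -\frac{d}{du} e^{-\Gamma I_i(u)}$, where $I_i(u) = \int_0^u q_i'(s) p_i(s)/g(s,\Gamma)\, ds$, so the product collapses to $P(s) = \exp\!\left(-\Gamma \sum_i I_i(s)\right)$.

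The central step is to identify $P$ with $g$. Differentiating the defining identity $t = \mathbb{P}[\max_i v_i > \tau(t)] = 1 - \prod_i (1 - p_i(t))$ and noting that $\prod_{j \neq i}(1 - p_j(t)) = 1 - q_i(t)$ gives the key combinatorial identity $\sum_i (1 - q_i(t))\,p_i'(t) = 1$. Substituting this into the derivative of $g$ yields $g'(t,\Gamma) = -\Gamma \sum_i q_i'(t)\,p_i(t)$. Comparing with the logarithmic derivative $-P'(t)/P(t) = \Gamma \sum_i q_i'(t)\,p_i(t)/g(t,\Gamma)$ obtained from the expression above, we get $(\ln P)' = (\ln g)'$, and since $P(0) = g(0,\Gamma) = 1$, we conclude $P(t) = g(t,\Gamma)$ throughout $[0,1]$.

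With this identity in hand, I would evaluate $\Phi(t) := \mathbb{P}[\ALG > \tau(t)]$ by conditioning on the arrival time $t_i = s$ of the item $i$ eventually accepted. Since $\tau$ is decreasing, for $s \leq t$ the event $v_i > \tau(s)$ already forces $v_i > \tau(t)$, while for $s > t$ we must additionally impose $v_i > \tau(t)$, which has probability $p_i(t)$. Summing over $i$ and splitting the integral at $s = t$, the $s \leq t$ piece telescopes to $1 - P(t)$ (its integrand is exactly $-P'(s)$), while the $s > t$ piece, after substituting $P = g$, simplifies to $\Gamma \sum_i p_i(t)(1 - q_i(t))$. Expanding the definition of $g(t,\Gamma)$ we then obtain
$$\Phi(t) = 1 - g(t,\Gamma) + \Gamma \sum_i p_i(t)(1 - q_i(t)) = \Gamma t,$$
which is the desired conclusion (in fact with equality).

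The hardest step, I expect, is recognizing the identity $P(t) = g(t,\Gamma)$: both the differentiated identity $\sum_i (1-q_i)\,p_i' = 1$ and the very specific algebraic form of $g$ are needed to make the two logarithmic derivatives coincide. Once this is in place, the remainder is bookkeeping, and the computation retroactively explains why the Peng--Tang construction attains the target ratio exactly, rather than with slack.
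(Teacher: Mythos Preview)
Your proposal is correct and follows essentially the same route as the paper. The paper does not give a separate proof of this statement (it is cited from Peng and Tang), but its proof of the generalization, Theorem~\ref{ASD and CR of General Algorithm}, specializes exactly to your computation: one shows that the ``not stopped by $t$'' probability equals $g(t,\Gamma)$ via the identity $\sum_i (1-q_i)\,p_i' = 1$, then splits $\mathbb{P}[\ALG>\tau(t)]$ into the before-$t$ and after-$t$ contributions, obtaining $\Gamma t$ with equality. Your write-up is slightly more streamlined in that you observe the $s\le t$ piece telescopes to $1-P(t)$ directly, whereas the paper computes both pieces $A(x)$ and $B(x)$ explicitly before combining, but the underlying mechanism is identical.
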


\section{A General Framework for Independent Arrival Time Algorithms}\label{General Framework}

In this section, we extend Peng and Tang's \citep{PT} algorithm and provide a general framework for algorithms that select the arrival time for each item independently. For this, we expand into the space of algorithms that use a (not necessarily) different strictly decreasing threshold function $\tau_i(t)$ for each item $i$. For a constant $\Gamma \in (0,1)$, the following procedure is followed by such an algorithm:
\begin{itemize}
    \item The algorithm samples the arrival time $t_i$ of each item $i$ independently from carefully constructed distributions. For each $i$, the probability density of $t_i$ at $t\in[0,1)$ is given by a function $\bar{f}_i(t,\Gamma)$. We let $t_i=1$ with probability $1-\int_0^1 \bar{f}_i(t,\Gamma)\cdot dt$.
    \item The items are made to arrive in ascending order of their arrival times.
    \item The algorithm accepts the first item that satisfies $v_i>\tau_i(t_i)$, where $\tau_i(t)$ is the threshold function for item $i$. \textbf{This is where we differ from Peng and Tang's algorithm, which uses a common threshold function for all items.}
    \item An item appearing at $t=1$ is always rejected.
\end{itemize}

Before we go into the analysis of this family of algorithms, we define the following notations:

$$\bar{p}_i(t)\overset{\underset{\mathrm{def}}{}}{=}\mathbb{P}[v_i>\tau_i(t)] \textrm{\ \ \ \ \ and\ \ \ \ \ } \bar{q}_i(t)\overset{\underset{\mathrm{def}}{}}{=}\mathbb{P}[\max_{j \neq i}v_j>\tau_i(t)].$$
We call a threshold function $\tau_i(t)$ \textit{surjective} if it satisfies $\bar{p}_i(0)=\bar{q}_i(0)=0$ and $\bar{p}_i(1)=\bar{q}_i(1)=1$, i.e. the range of $\tau_i(t)$ contains the support of the distributions. We will only be dealing with such threshold functions. The notations $p_i(t), q_i(t)$ and $\tau(t)$ retain their original meanings from Section \ref{Peng and Tang}. 

 We define an auxiliary function $\bar{g}(t, \Gamma)$ before we define the arrival time distributions.

$$\bar{g}(t, \Gamma) \overset{\underset{\mathrm{def}}{}}{=} 1 - \Gamma \cdot \sum_{i} \int_0^{t}\bar{p}_i(s)\bar{q}_i'(s)ds.$$
We now provide the construction for the arrival time distributions $\{\bar{f}_i(t, \Gamma)\}$. 
$$\bar{f}_i(t, \Gamma) \overset{\underset{\mathrm{def}}{}}{=} \Gamma\cdot\frac{\bar{q}_i'(t)}{\bar{g}(t, \Gamma)}\cdot\exp\left ( -\Gamma \int_{0}^{t}\frac{\bar{q}_i'(s)\bar{p}_i(s)}{\bar{g}(s, \Gamma)}ds\right ).$$
\bigbreak
\noindent \textbf{Condition for the construction to be well defined:} The only condition required for this construction to be well defined is that 
$$\int_0^1\bar{f}_i(t,\Gamma)\cdot dt \leq 1 \textrm{\  for all\  }{i \in [n]}.$$
\bigbreak
We now prove a theorem about the attainment of ASD (refer to Section \ref{ASD}) by the algorithm given that the construction is well defined. Here, the random variable $\ALG$ is the value of the item accepted by the algorithm. 
\begin{theorem}\label{ASD and CR of General Algorithm}
If the arrival time distributions $\{\bar{f}_i(t, \Gamma)\}$ are well defined for some $\Gamma \in (0,1)$ and a set of strictly decreasing continuous \textit{surjective} threshold functions $\{\tau_i(t)\}$ on $t\in[0,1]$, then for all $x > 0$ the following inequality holds
$$\mathbb{P}[\ALG>x] \geq  \Gamma \cdot \mathbb{P}[\max_{i}v_i>x].$$ when the algorithm selects the arrival times of the items from the distributions $\{\bar{f}_i(t, \Gamma)\}$ and uses the threshold functions $\{\tau_i(t)\}$.
\end{theorem}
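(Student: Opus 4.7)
The plan is to compute $\mathbb{P}[\ALG > x]$ directly and show it equals $\Gamma \cdot \mathbb{P}[\max_i v_i > x]$, in fact establishing equality — strictly stronger than the stated $\geq$. The engine driving the argument is that the specific functional form of $\bar{f}_i$ is designed precisely so that each item's contribution collapses to a clean expression involving only $\bar{q}_i'$ and the truncated acceptance probability $\pi_i(s, x) := \mathbb{P}[v_i > \max(\tau_i(s), x)]$.

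First I would establish a survival identity. Let $\sigma_i(t) := 1 - \int_0^t \bar{f}_i(s, \Gamma) \bar{p}_i(s)\,ds$ denote the probability that item $i$ has not yet qualified by time $t$ (i.e., has either not arrived or arrived but failed its threshold). Plugging in the explicit formula for $\bar{f}_i$ and solving the resulting first-order ODE, one obtains $\sigma_i(t) = \exp\!\bigl(-\Gamma \int_0^t \bar{q}_i'(u) \bar{p}_i(u)/\bar{g}(u, \Gamma)\,du\bigr)$. Combined with the identity $\bar{g}'(t, \Gamma) = -\Gamma \sum_i \bar{p}_i(t) \bar{q}_i'(t)$, which is immediate from the definition of $\bar{g}$, the exponent becomes $\log \bar{g}(t, \Gamma)$ after summing over $i$, yielding the telescoping relation $\prod_i \sigma_i(t) = \bar{g}(t, \Gamma)$. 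By independence across items, the product is also the probability that no item has qualified by time $t$.

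Next I would compute the per-item acceptance probability $S_i := \mathbb{P}[\text{item } i \text{ is accepted with } v_i > x]$. Conditioning on the arrival time $t_i = s$ and using independence across items,
\[S_i = \int_0^1 \bar{f}_i(s, \Gamma) \cdot \pi_i(s, x) \cdot \prod_{j \neq i} \sigma_j(s)\,ds.\]
The critical simplification is that $\bar{f}_i(s, \Gamma) \cdot \prod_{j \neq i} \sigma_j(s) = \bar{f}_i(s, \Gamma) \cdot \bar{g}(s, \Gamma)/\sigma_i(s)$ collapses, by the defining formula for $\bar{f}_i$, to exactly $\Gamma \bar{q}_i'(s)$. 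Hence $S_i = \Gamma \int_0^1 \bar{q}_i'(s) \pi_i(s, x)\,ds$.

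Finally, because $\tau_i$ is strictly decreasing, continuous, and surjective, $\bar{q}_i$ is precisely the CDF on $[0, 1]$ of the random variable $\tau_i^{-1}(\max_{j \neq i} v_j)$. A change of variables together with independence of $v_i$ from $\{v_j\}_{j \neq i}$ and continuity of the distributions gives
\[\int_0^1 \bar{q}_i'(s) \pi_i(s, x)\,ds = \mathbb{P}\!\bigl[v_i > \max(\max_{j \neq i} v_j,\; x)\bigr] = \mathbb{P}[v_i = \max_j v_j \text{ and } v_i > x].\]
Summing over $i$ and using that $\{v_i = \max_j v_j\}_i$ partition $\{\max_j v_j > x\}$ almost surely yields $\mathbb{P}[\ALG > x] = \Gamma \cdot \mathbb{P}[\max_j v_j > x]$. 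The main technical obstacle is the telescoping identity $\prod_i \sigma_i(t) = \bar{g}(t, \Gamma)$ in the first step, which hinges on recognising $\sum_i \bar{q}_i'(u) \bar{p}_i(u)/\bar{g}(u, \Gamma)$ as $-(\log \bar{g})'/\Gamma$. The surjectivity hypothesis on $\tau_i$ is indispensable in the final step: without $\bar{q}_i(0) = 0$ and $\bar{q}_i(1) = 1$, the probabilistic identification of $\bar{q}_i$ as a full CDF on $[0, 1]$ would break down.
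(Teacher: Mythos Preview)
Your proof is correct and follows essentially the same approach as the paper: both establish the survival identity $\prod_i \sigma_i(t) = \bar{g}(t,\Gamma)$ and exploit the resulting collapse $\bar{f}_i \cdot \prod_{j\neq i}\sigma_j = \Gamma\,\bar{q}_i'$ to compute the per-item contribution, arriving at the exact equality $\mathbb{P}[\ALG>x]=\Gamma\cdot\mathbb{P}[\max_i v_i>x]$. The only difference is packaging in the final step: the paper splits the time integral at $l_i(x)=\tau_i^{-1}(x)$ into two pieces $A(x)$ and $B(x)$, then integrates by parts and proves a separate lemma expressing $\mathbb{P}[\max_i v_i>x]$ as $\sum_i\int_0^{l_i(x)}\bar{p}_i'(t)(1-\bar{q}_i(t))\,dt$, whereas you keep the integral whole via $\pi_i(s,x)=\mathbb{P}[v_i>\max(\tau_i(s),x)]$ and use the probabilistic identification of $\bar{q}_i$ as the CDF of $\tau_i^{-1}(\max_{j\neq i}v_j)$ --- a slightly more direct route to the same destination.
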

\begin{proof}
We first define an auxiliary function $l_i(x)$ for each item as follows. $$l_i(x)\overset{\underset{\mathrm{def}}{}}{=} \inf\left(\{t|\tau_i(t) \leq x, t\in [0,1]\} \cup \{1\}\right) \textrm{ for all } x \geq 0.$$
Note that $l_i(x)$ is simply an extended inverse function of $\tau_i(t)$ if $\tau_i(1)=0$. Also note that $l_i(x)$ is continuous.

 We now break down the expression for $\mathbb{P}[\ALG>x]$. Note that we abuse the notation $\ALG$ here to denote both the algorithm itself and the reward collected by the algorithm.

\renewcommand{\arraystretch}{2}
\begin{eqnarray*}
\mathbb{P}[\ALG > x]  & = & \sum_i \mathbb{P}[\textrm{item } i \textrm{ is accepted by ALG\ and } v_i>x ] \\  
 & = & \sum_i \int_{0}^{1}\mathbb{P}[v_i>x, v_i>\tau_i(t), \  \textrm{ALG does not stop before } t | t_i = t]\cdot\bar{f}_i(t, \Gamma)\cdot dt \\ 
 & = & \sum_i \int_{0}^{l_i(x)}\mathbb{P}[v_i>\tau_i(t)]  \cdot  \mathbb{P}[\textrm{ALG does not stop before } t | t_i=t]\cdot\bar{f}_i(t, \Gamma)\cdot dt  \\
 &   & + \sum_i \int_{l_i(x)}^{1}\mathbb{P}[v_i>x]  \cdot \mathbb{P}[\textrm{ALG does not stop before } t | t_i=t]\cdot\bar{f}_i(t, \Gamma)\cdot dt.
\end{eqnarray*}
The final equality follows from the fact that $\tau_i(t)>x$ for $t<l_i(x)$ and $\tau_i(t)<x$ for $t>l_i(x)$. Let us define two functions $A(x)$ and $B(x)$ on $x > 0$.
\begin{equation} \label{A(x)}
A(x) \overset{\underset{\mathrm{def}}{}}{=} \sum_i \int_{0}^{l_i(x)}\mathbb{P}[v_i>\tau_i(t)]  \cdot  \mathbb{P}[\textrm{ALG does not stop before } t | t_i=t]\cdot\bar{f}_i(t, \Gamma)\cdot dt .
\end{equation}
\begin{equation} \label{B(x)}
B(x) \overset{\underset{\mathrm{def}}{}}{=} \sum_i \int_{l_i(x)}^{1}\mathbb{P}[v_i>x]  \cdot \mathbb{P}[\textrm{ALG does not stop before } t | t_i=t]\cdot\bar{f}_i(t, \Gamma)\cdot dt .
\end{equation}
It is clear that $\mathbb{P}[\ALG > x] = A(x)+B(x)$ for $x>0$.
\bigbreak

%  We have $\mathbb{P}[v_i>\tau_i(t)] = \bar{p}_i(t)$ from the definition of $\bar{p}_i(t)$, and we also have $P[v_i>x]=P[v_i>\tau_i(l_i(x))]=\bar{p}_i(l_i(x))$. 
\begin{lemmma}
For all $t \in [0,1]$ and $i \in [n]$.
$$\mathbb{P}[\textrm{ALG does not stop before } t | t_i=t] =\bar{g}(t, \Gamma) \cdot \exp\left ( \Gamma \int_{0}^{t}\frac{\bar{q}_i'(u)\bar{p}_i(u)}{\bar{g}(u, \Gamma)}du\right ).$$
\end{lemmma}
\begin{proof}
The probability of ALG not stopping before $t$ given $t_i=t$ is simply the probability that no other item arrives at a time $s<t$ with a value greater than its threshold at time $s$. Hence, we have

\begin{equation} \label{ALG does not stop}
\mathbb{P}[\textrm{ALG does not stop before } t | t_i=t] = \prod_{j \neq i} {\left ( 1 - \int_0^{t}\bar{p}_i(s)\bar{f}_i(s, \Gamma)ds \right )}.
\end{equation} 
From the definition of $\bar{f}_i(t, \Gamma)$, we have

\renewcommand{\arraystretch}{2}
\begin{eqnarray*}
\left ( 1 - \int_0^{t}\bar{p}_i(s)\bar{f}_i(s, \Gamma)ds \right) &  =  & 1 - \int_0^t \bar{p}_i(s)\cdot \Gamma \cdot \frac{\bar{q}_i'(s)}{\bar{g}(s, \Gamma)}\cdot\exp\left ( -\Gamma \int_{0}^{s}\frac{\bar{q}_i'(u)\bar{p}_i(u)}{\bar{g}(u, \Gamma)}du\right )ds \\
&  =  & 1+\int_{s=0}^t 1\ d \left(\exp\left ( -\Gamma \int_{0}^{s}\frac{\bar{q}_i'(u)\bar{p}_i(u)}{\bar{g}(u, \Gamma)}du\right )\right)\\
&  =  & 1+\left.\begin{matrix}\exp\left ( -\Gamma \int_{0}^{s}\frac{\bar{q}_i'(u)\bar{p}_i(u)}{\bar{g}(u, \Gamma)}du\right )\end{matrix}\right|_{s=0}^t = \exp\left ( -\Gamma \int_{0}^{t}\frac{\bar{q}_i'(u)\bar{p}_i(u)}{\bar{g}(u, \Gamma)}du\right ).
\end{eqnarray*}
Taking the product of the above expression over all $i \in [n]$, we have

\begin{eqnarray*}\
\prod_{i}{\left ( 1 - \int_0^{t}\bar{p}_i(s)\bar{f}_i(s, \Gamma)ds \right)} &  =  & \prod_i \exp\left ( -\Gamma \int_{0}^{t}\frac{\bar{q}_i'(u)\bar{p}_i(u)}{\bar{g}(u, \Gamma)}du\right ) = \exp\left (  \int_{0}^{t}\frac{-\Gamma\sum_i  \bar{q}_i'(u)\bar{p}_i(u)}{\bar{g}(u, \Gamma)}du\right )\\
&  =  & \exp\left (  \int_{0}^{t}\frac{\bar{g}'(u,\Gamma)}{\bar{g}(u, \Gamma)}du\right ) = \exp\left (  \left.\begin{matrix}\ln (\bar{g}(u, \Gamma))\end{matrix}\right|_{u=0}^t\right ) \\
&  =  & \frac{\bar{g}(t, \Gamma)}{\bar{g}(0, \Gamma)} =   \bar{g}(t, \Gamma). \\
\end{eqnarray*}
Plugging the above results into (\ref{ALG does not stop}), we obtain 
$$\mathbb{P}[\textrm{ALG does not stop before } t | t_i=t] = \frac{\bar{g}(t, \Gamma)}{\left ( 1 - \int_0^{t}\bar{p}_i(s)\bar{f}_i(s, \Gamma)ds \right)}=\bar{g}(t, \Gamma) \cdot \exp\left ( \Gamma \int_{0}^{t}\frac{\bar{q}_i'(u)\bar{p}_i(u)}{\bar{g}(u, \Gamma)}du\right ).$$
\end{proof}
We now proceed to simplify the expressions for $A(x)$ and $B(x)$. From (\ref{A(x)}), we have
\begin{eqnarray*}\
A(x) &  =  & \sum_i \int_{0}^{l_i(x)}\mathbb{P}[v_i>\tau_i(t)]  \cdot  \mathbb{P}[\textrm{ALG does not stop before } t | t_i=t]\cdot\bar{f}_i(t, \Gamma)\cdot dt  \\
&  =  & \sum_i \int_{0}^{l_i(x)}\bar{p}_i(t)  \cdot \bar{g}(t, \Gamma)\cdot \exp\left ( \Gamma \int_{0}^{t}\frac{\bar{q}_i'(u)\bar{p}_i(u)}{\bar{g}(u, \Gamma)}du\right )\cdot\bar{f}_i(t, \Gamma)\cdot dt\\
&  =  & \sum_i \int_{0}^{l_i(x)}\bar{p}_i(t) \bar{g}(t, \Gamma)\exp\left ( \Gamma \int_{0}^{t}\frac{\bar{q}_i'(u)\bar{p}_i(u)}{\bar{g}(u, \Gamma)}du\right )\cdot\Gamma\frac{\bar{q}_i'(t)}{\bar{g}(t, \Gamma)}\exp\left ( -\Gamma \int_{0}^{t}\frac{\bar{q}_i'(s)\bar{p}_i(s)}{\bar{g}(s, \Gamma)}ds\right ) dt \\
&  =  & \Gamma \cdot \sum_i \int_{0}^{l_i(x)} \bar{p}_i(t)   \cdot \bar{q}_i'(t)\cdot dt . \\
\end{eqnarray*}
From (\ref{B(x)}), we have
\begin{eqnarray*}
B(x) &  =  & \sum_i \int_{l_i(x)}^{1}\mathbb{P}[v_i>x]  \cdot \mathbb{P}[\textrm{ALG does not stop before } t | t_i=t]\cdot\bar{f}_i(t, \Gamma)\cdot dt  \\
&  =  & \sum_i \int_{l_i(x)}^{1}\bar{p}_i(l_i(x))  \cdot \bar{g}(t, \Gamma) \cdot \exp\left ( \Gamma \int_{0}^{t}\frac{\bar{q}_i'(u)\bar{p}_i(u)}{\bar{g}(u, \Gamma)}du\right )\cdot\bar{f}_i(t, \Gamma)\cdot dt\\
&  =  & \sum_i \int_{l_i(x)}^{1}\bar{p}_i(l_i(x))  \bar{g}(t, \Gamma) \exp\left ( \Gamma \int_{0}^{t}\frac{\bar{q}_i'(u)\bar{p}_i(u)}{\bar{g}(u, \Gamma)}du\right )\cdot\Gamma\frac{\bar{q}_i'(t)}{\bar{g}(t, \Gamma)}\exp\left ( -\Gamma \int_{0}^{t}\frac{\bar{q}_i'(s)\bar{p}_i(s)}{\bar{g}(s, \Gamma)}ds\right )dt \\
&  =  & \Gamma \cdot \sum_i \bar{p}_i(l_i(x)) \cdot \int_{l_i(x)}^{1}   \bar{q}_i'(t)\cdot dt  = \Gamma \cdot \sum_i \left[\bar{p}_i(l_i(x))  - \bar{p}_i(l_i(x))\cdot \bar{q}_i(l_i(x)) \right] \\
&  =  & \Gamma \cdot \sum_i \int_0^{l_i(x)} \left[ \bar{p}_i'(t)  - (\bar{p}_i(t)\cdot \bar{q}_i(t))' \right]dt \\
& = & \Gamma \cdot \sum_i \int_0^{l_i(x)} \left[ \bar{p}_i'(t)  - \bar{p}_i(t)\cdot \bar{q}_i'(t) - \bar{p}_i'(t)\cdot \bar{q}_i(t) \right]dt .\\
\end{eqnarray*}
The second equality follows from the fact that $\mathbb{P}[v_i>x]=\mathbb{P}[v_i>\tau_i(l_i(x))]=\bar{p}_i(l_i(x))$. Adding the expressions for $A(x)$ and $B(x)$, we get
\begin{eqnarray*}\
A(x)+B(x) &  =  & \Gamma \cdot \sum_i \int_0^{l_i(x)}  \bar{p}_i'(t) (1- \bar{q}_i(t)) dt .\\
\end{eqnarray*}
We obtain the following expression for $\mathbb{P}[\ALG > x]$ for all $x>0$.
\begin{equation} \label{P ALG>x}
\mathbb{P}[\ALG > x] = A(x)+B(x) = \Gamma \cdot \sum_i \int_0^{l_i(x)}  \bar{p}_i'(t) (1- \bar{q}_i(t)) dt .
\end{equation}
We now come up with an expression for $\mathbb{P}[\max_{i}v_i>x]$.
\begin{lemmma}
For all $x>0$,
$$\mathbb{P}[\max_{i}v_i>x]=\sum_i\int_{0}^{l_i(x)}{\bar{p}_i'(t_i)\cdot (1-\bar{q}_i(t_i)) dt_i}.$$
\end{lemmma}
\begin{proof}
Since the distributions are assumed to be continuous, there will almost surely be only one item that has the largest value among all items.

$$\mathbb{P}[\max_{i}v_i>x] = \sum_i\mathbb{P}[ \textrm{Item } i \textrm{ is the maximum value item and } v_i>x].$$ 

Since $\bar{p}_i(t)$ and $l_i(x)$ are differentiable almost everywhere, we can use the expression $\bar{p}_i(l_i(x))-\bar{p}_i(l_i(x+dx)) = -\bar{p}_i'(l_i(x)) \cdot l_i'(x)dx $ to denote the probability of $v_i$ lying in the infinitesimal interval $(x,x+dx)$. Hence, the expression for $\mathbb{P}[\max_{i}v_i>x]$ takes the form 
\begin{eqnarray*}\
\mathbb{P}[\max_{i}v_i>x] &  =  & \sum_i\int_x^\infty{-\bar{p}_i'(l_i(x)) \cdot l_i'(x)\cdot \mathbb{P}[\textrm{Item i is the maximum value item }|v_i=x] dx} \\
 &  =  & \sum_i\int_x^\infty{-\bar{p}_i'(l_i(x)) \cdot l_i'(x)\cdot (1-\bar{q}_i(l_i(x)))\cdot dx}. \\
\end{eqnarray*}
The second equality follows from the definition of $\bar{q}_i(t)$. Now, we perform the change of variables $t_i=l_i(x)$. We have
\begin{eqnarray*}\
\mathbb{P}[\max_{i}v_i>x] &  =  & \sum_i\int_{l_i(x)}^{l_i(\infty)}{-\bar{p}_i'(t_i)\cdot (1-\bar{q}_i(t_i)) dt_i} = \sum_i\int_{0}^{l_i(x)}{\bar{p}_i'(t_i)\cdot (1-\bar{q}_i(t_i))\cdot dt_i}. \\
\end{eqnarray*}
\end{proof}
The above result along with (\ref{P ALG>x}) gives us 
$$\mathbb{P}[\ALG>x] = \Gamma \cdot \mathbb{P}[\max_{i}v_i>x] \textrm{ for all } x>0.$$
 This completes our proof for Theorem \ref{ASD and CR of General Algorithm}.
\end{proof}

\section{The \textit{2-scheme algorithm}}\label{2scheme}
In this section, we describe an algorithm that we call the \textit{2-scheme algorithm}, which attains $\Gamma^*$-ASD over the prophet (Refer to Section \ref{ASD}) for every instance of the order selection prophet inequality problem, where $\Gamma^* = 0.7258$. 
$$\mathbb{P}[\ALG>x] \geq \Gamma^*  \cdot \mathbb{P}[\max_{i}v_i>x] \textrm{ for all } x>0.$$
\textbf{Informal Description: } For a given instance of the problem, the algorithm first checks whether the construction $\{f_i(t,\Gamma^*)\}$ is well defined. Recall that $\{f_i(t,\Gamma^*)\}$ is the set of arrival time distributions used by Peng and Tang's \citep{PT} algorithm. Informally, the \textit{2-scheme algorithm} first attempts to execute Peng and Tang's algorithm with the parameter $\Gamma = \Gamma^*(=0.7258)$. If the constructed arrival time distributions are not well defined, then the algorithm uses an alternate independent arrival time scheme constructed using the framework described in Section \ref{General Framework}. We describe this alternate scheme in detail when we describe the algorithm formally. The claim is that at least one of the two schemes will construct well defined arrival time distributions for $\Gamma = \Gamma^*$.

Before we give a formal description of the algorithm, we define some useful terms.
\begin{definition}[$\Gamma-adverse$ instance, $\Gamma-adverse$ item]\label{adverse}
For some $\Gamma \in (0,1)$, we call an instance of distributions $\{D_i\}$ with $n$ items $\Gamma-adverse$, if there exists at least one item $i \in [n]$ such that

$$\int_0^1 f_i(t, \Gamma)\cdot dt > 1$$
where $\{f_i(t, \Gamma)\}$ are the arrival time distributions constructed by Peng and Tang's algorithm for the instance $\{D_i\}$. Also, we call such an item $\Gamma - adverse$. Note that an item being $\Gamma-adverse$ or not is dependent on the distributions of the remaining items in the instance as well.

\end{definition}
\bigbreak
\noindent \textbf{Formal Description:} We now give a formal description of the algorithm. The algorithm has the option to choose between two Independent Arrival Time schemes, namely \textbf{Scheme I} and \textbf{Scheme II}. If the problem instance is not $\Gamma^* -adverse$, then the algorithm implements \textbf{Scheme I}, and it implements \textbf{Scheme II} otherwise. The descriptions of the 2 schemes follow:
\begin{itemize}
    \item \textbf{Scheme I:} We simply use the arrival time distributions $\{f_i(t, \Gamma^*)\}$, with the threshold function $\tau(t)$ for each item, i.e. we execute Peng and Tang's \citep{PT} algorithm with the parameter $\Gamma=\Gamma^*$
    \item \textbf{Scheme II:} If the instance is $\Gamma^* -adverse$, then there must exist at least one $\Gamma^* -adverse$ item in the instance. Let us say that item $1$ is $\Gamma^*-adverse$ without loss of generality. Now we construct an independent arrival time scheme using the framework provided in the previous section. For $i \neq 1$, we define the threshold functions $\tau_i(t)$ in the following manner:
    $$\tau_i(t) = \tau(t) \textrm{ for all } t \in [0,1].$$
    For $i=1$, we provide a unique construction for the threshold function. The threshold function $\tau_1(t)$ is given by the following equation:
    $$\mathbb{P}[\max_{j\neq 1}v_j>\tau_1(t)] = h(q_1(t)) \textrm{ for } t \in [0,1]$$
    where $h:[0,1]\rightarrow  [0,1]$ is a strictly increasing continuous function satisfying $h(0)=0$ and $h(1)=1$. It is easy to see that $\tau_1(t)$ is \textit{surjective}, continuous and strictly decreasing. The function $h(x)$ is defined in the following manner:
    $$h(x)\overset{\underset{\mathrm{def}}{}}{=}\left\{\begin{matrix}
    \frac{c-\epsilon}{\epsilon}\cdot x & 0 \leq x < \epsilon \\
    c+\epsilon\cdot\frac{x-c}{c-\epsilon} & \epsilon \leq x < c \\
    x & c \leq x \leq 1\\
    \end{matrix}\right.$$
    where $c$ and $\epsilon$ are some constants such that $0 < \epsilon < c < 1$. Specifically, the algorithm uses $c=0.28$ and $\epsilon$ as an arbitrarily small value. Using this construction for the threshold functions $\{\tau_i(t)\}$, the algorithm executes as described in Section \ref{General Framework} with the parameter $\Gamma=\Gamma^*$, i.e. the algorithm uses the arrival time distributions $\{\bar{f}_i(t,\Gamma^*)\}$.
\end{itemize}
% \noindent \textbf{Condition for the correctness of the algorithm and attainment of $\Gamma^*$-ASD: }
% The above algorithm is correct if it can be shown that for every $\Gamma^*-adverse$ instance, the construction of the arrival time distributions $\{\bar{f}_i(t,\Gamma^*)\}$ used by Scheme II is well defined. If this can be shown, then we also have from Theorems \ref{ASD and CR of Peng and Tang} and \ref{ASD and CR of General Algorithm} that the algorithm attains $\Gamma^*$-ASD over the prophet, i.e. 
% $$\mathbb{P}[\ALG>x] \geq \Gamma^*  \cdot \mathbb{P}[\max_{i}v_i>x] \textrm{ for all } x>0$$
We now claim that for every $\Gamma^*-adverse$ instance, the arrival time distributions $\{\bar{f}_i(t,\Gamma^*)\}$ constructed by Scheme II are well defined. Note that in a $\Gamma^*-adverse$ instance, there must be at least one $\Gamma^*-adverse$ item, so it sufficient to discuss the instances where item $1$ is $\Gamma^*-adverse$ without loss of generality.

\begin{restatable}{theorem}{correctness}\label{Correctness of 2-scheme ALG}
For $\Gamma^*=0.7258$, if the following inequality holds for an instance of distributions $\{D_i\}$ with $n$ items, 
$$\int_0^1 f_1(t,\Gamma^*)\cdot dt>1$$
then we have that 
$$\int_0^1 \bar{f}_i(t,\Gamma^*)\cdot dt \leq 1 \textrm{ for all } i \in [n],$$
where $\{\bar{f}_i(t,\Gamma^*)\}$ are the arrival time distributions constructed by Scheme II of the \textit{2-scheme algorithm}.
\end{restatable}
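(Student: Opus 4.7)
My approach splits into a structural reduction and a numerical verification. I begin by recording two algebraic identities that connect Scheme~II back to Peng and Tang's scheme. First, substituting $v = h(q_1(s))$ in $\int_0^1 \bar{p}_1(s)\,\bar{q}_1'(s)\,ds$ yields $\int_0^1 p_1(w)\,q_1'(w)\,dw$, so the total contribution of item~$1$ to $\bar{g}(1,\Gamma^*)$ coincides with its Peng and Tang contribution, giving $\bar{g}(1,\Gamma^*) = g(1,\Gamma^*) = 1-\Gamma^*$. Second, because $h(x) \geq x$ on $[0,1]$ with equality only at $x=0$ and for $x \geq c$, the running integral $\int_0^t \bar{p}_1 \bar{q}_1'\,ds$ is pointwise at least $\int_0^t p_1 q_1'\,ds$, yielding $\bar{g}(t,\Gamma^*) \leq g(t,\Gamma^*)$ for every $t \in [0,1]$. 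The first identity anchors the endpoint of $\bar{g}$; the second encodes the front-loading of item~$1$'s mass under Scheme~II.

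For items $i \neq 1$, one has $\bar{p}_i = p_i$ and $\bar{q}_i = q_i$, so the only difference from Peng and Tang is the substitution $g \mapsto \bar{g}$ in both the denominator of $\bar{f}_i$ and in the cumulative hazard appearing in its exponent. Using the telescoping identity $\ln \bar{g}(t) = -\Gamma^* \sum_j \psi_j(t)$, where $\psi_j(t) = \int_0^t \bar{p}_j \bar{q}_j' / \bar{g}\,ds$, I would rewrite
\[
\int_0^1 \bar{f}_i(t,\Gamma^*)\,dt = \Gamma^* \int_0^1 q_i'(t) \exp\!\Bigl(\Gamma^* \sum_{j \neq i} \psi_j(t)\Bigr)dt,
\]
and then use the scheme-independent endpoint constraint $\sum_j \psi_j(1) = -\ln(1 - \Gamma^*)/\Gamma^*$ to bound the total mass appearing in the exponent. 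The argument must proceed at the integral level rather than through a pointwise comparison with $f_i$, since $\bar{g} \leq g$ cuts both ways: it enlarges the $1/\bar{g}$ prefactor but also strengthens the negative exponent.

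For $i = 1$, I would exploit the near-step profile of $h$ as $\epsilon \to 0$. The interval $[0,1]$ in $t$ decomposes into: a vanishing sub-interval on which $q_1(t) \in [0,\epsilon)$, where $\bar{q}_1$ surges from $0$ to approximately $c$ and contributes a finite atom of arrival mass; a plateau on which $q_1(t) \in [\epsilon,c)$ and $\bar{q}_1' \approx 0$ forces $\bar{f}_1 \approx 0$; and a tail on which $q_1(t) \in [c,1]$, $\bar{q}_1' = q_1'$, and the dynamics mirror Peng and Tang's but start from $\bar{q}_1 = c$ rather than $0$. Summing the two non-vanishing contributions in the $\epsilon \to 0$ limit yields a closed-form upper bound on $\int_0^1 \bar{f}_1\,dt$, and the choice $c = 0.28$ is calibrated precisely so that this bound is at most $1$ when $\Gamma^* = 0.7258$.

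The hard part will be the case $i \neq 1$: the inequality $\bar{g} \leq g$ does not translate into a pointwise comparison $\bar{f}_i \leq f_i$, so I would need an integral-level argument that leverages the endpoint equality $\bar{g}(1,\Gamma^*) = g(1,\Gamma^*)$ together with the fact that item~$1$'s mass redistribution in Scheme~II is tightly constrained by its $\Gamma^*$-adversity. The case $i = 1$, by contrast, should reduce to a concrete $\epsilon \to 0$ limit computation that isolates the numerical role of the design parameter $c = 0.28$.
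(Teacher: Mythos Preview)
Your structural observations are correct: the identity $\bar g(1,\Gamma^*)=g(1,\Gamma^*)=1-\Gamma^*$, the pointwise inequality $\bar g\le g$, and the rewriting $\int_0^1\bar f_i=\Gamma^*\int_0^1 q_i'\exp\bigl(\Gamma^*\sum_{j\ne i}\psi_j\bigr)dt$ all hold. But the proposal stops at exactly the point where the real work begins, and the plan you sketch is missing the two ingredients that actually make the paper's proof go through.

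\textbf{For $i=1$.} Your three-region decomposition will produce an upper bound, but that bound still depends on the instance through $\tilde p_1$: the atom near $t=0$ has size roughly $\Gamma^*\!\int_0^c\! dv/(1-\Gamma^*\!\int_0^v\tilde p_1)$, and the tail contribution on $[c,1]$ likewise involves $\tilde p_1$. No universal choice of $c$ will make this $\le 1$ for every instance unless you first control $\int_0^x\tilde p_1(z)\,dz$ for $x\le c$. The paper does this via its ``Property~A'': if item~$1$ is $\Gamma^*$-weakly-adverse then $\int_0^x\tilde p_1\le\mu_{\Gamma^*}(x)$ for $x<\beta_{\Gamma^*}$, an instance-independent function. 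This is where the hypothesis $\int_0^1 f_1>1$ actually enters; your sketch never invokes it, so your ``closed-form upper bound'' cannot be instance-free.

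\textbf{For $i\ne 1$.} You correctly identify that $\bar g\le g$ cuts both ways and that the endpoint constraint $\bar g(1)=1-\Gamma^*$ alone will not suffice. But the ``integral-level argument'' you gesture at does not materialize, and the paper's route is not a cleverer integration by parts: it is a structural rigidity result. The paper first bounds (Lemma~12) $\tilde{\bar g}_i(x,\Gamma^*)\ge\hat g_i(x,\Gamma^*)-\Gamma^*\!\int_0^c\tilde p_1$, and then uses Property~A again to bound the correction by $\Gamma^*\mu_{\Gamma^*}(c)$. This rewrites as $\tilde{\bar g}_i(x,\Gamma^*)\ge(\Gamma^*/\Gamma')\,\hat g_i(x,\Gamma')$ for $\Gamma'\approx 0.7276$. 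A monotonicity lemma (Lemma~13) then reduces the claim to $\hat G_i(0,\Gamma')\le 1$, i.e., to showing that item~$i$ is \emph{not} $\Gamma'$-weakly-adverse. That last step is the paper's Lemma~16 (``Simultaneous Hardness''): two distinct items cannot both be adverse at these nearby thresholds. Its proof uses further necessary conditions on adverse items (Properties~B and~C, concerning the auxiliary functions $Y_\Gamma$ and $W_\Gamma$) together with numerical checks. None of this machinery is visible in your plan, and without it there is no route from the $i\ne 1$ setup to the conclusion.
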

\begin{proof}
The proof for this theorem has been deferred to Appendix \ref{Full proof}.
\end{proof}
Finally, we prove Theorem \ref{Order Selection} by arguing that the \textit{2-scheme algorithm} is $0.7258$-competitive. Recall the statement of Theorem \ref{Order Selection}.
\orderselection*
\begin{proof}
We now argue that the \textit{2-scheme algorithm} is $\Gamma^*$-competitive for all instances of the order selection prophet inequality problem. Consider any instance $\{D_i\}$ of the order selection prophet inequality problem. 

If the instance is not $\Gamma^*-adverse$, then the algorithm executes Scheme I, i.e. it executes Peng and Tang's \citep{PT} algorithm with the parameter $\Gamma=\Gamma^*$. Since the instance is not $\Gamma^*-adverse$, we know that the arrival time distributions $\{f_i(t,\Gamma^*)\}$ constructed by Scheme I are well defined, and from Theorem \ref{ASD and CR of Peng and Tang}, we have that the algorithm attains $\Gamma^*$-ASD over the prophet in this case.

Now let us consider the case when the instance is $\Gamma^*-adverse$. Without loss of generality, let us assume that item $1$ of the instance is $\Gamma^*-adverse$. In this case, the \textit{2-scheme} algorithm executes Scheme II. From Theorem \ref{Correctness of 2-scheme ALG}, we have that the arrival time distributions $\{\bar{f}_i(t,\Gamma^*)\}$ constructed by Scheme II are well defined. Now, from Theorem \ref{ASD and CR of General Algorithm}, we have that the algorithm attains $\Gamma^*$-ASD over the prophet in this case.

It is now clear that the \textit{2-scheme algorithm} attains $\Gamma^*$-ASD over the prophet for all instances of the problem. As mentioned in Section \ref{ASD}, ASD implies competitiveness. From here, it follows that the \textit{2-scheme algorithm} is $\Gamma^*$-competitive.
\end{proof}

\section{The equivalence of competitiveness and ASD} \label{LPproof}

Recall the statement of \textbf{Theorem} \ref{LP}. 
\LP*

Let $\{X_1,\ldots,X_n\}$ be an instance of the order selection prophet inequality problem where $X_1,\ldots,X_n$ are independent random variables taking values from the finite set $\{a_1,\ldots,a_k\}$, where $0=a_1<\cdots<a_k$. Let $X=\max(X_1,\ldots,X_n)$. Consider all algorithms which do the following: pick a permutation $\sigma$ of $\{1,\ldots,n\}$, a sequence of thresholds $\tau_1,\ldots,\tau_n$ and then each $i$ going from $1$ to $n$, if $X_{\sigma(i)}\geq\tau_i$ then accept $X_{\sigma(i)}$ and stop. The behavior of every such algorithm is identical to that of an algorithm which picks the thresholds from the set $\{a_1,\ldots,a_k,\infty\}$. Thus, essentially, there is a finite set $\{A_1,\ldots,A_N\}$ of distinct deterministic algorithms for the order selection prophet problem when the support of the random variables is finite. We overload notation and use $A_i$ to also denote the random variable taking value equal to the value accepted by algorithm $A_i$ when run on $\{X_1,\ldots,X_n\}$.

Consider the following linear program (\hyperref[lpmin]{LP1}).
\begin{samepage}
\begin{framed} \label{lpmin}
$\textbf{min}_{\mu,\boldsymbol{c}}$ $\mu$ subject to
\\\\
$\sum_{j=1}^{k}{c_j\cdot \mathbb{P}[X \geq a_j]}=1$
\\\\
$\mu - \sum_{j=1}^{k}{c_j\cdot \mathbb{P}[A_i \geq a_j]} \geq 0$ for all $i \in [N]$
\\\\
$c_j \geq 0$ for all $j \in [k]$
\end{framed}
\end{samepage}
%TODO: Archit, please check whether there is a way to label a linear program and ref it wherever necessary. I'd like to give this LP he label 'lpmin'. I have already \ref{}'ed it at several places.
As a stepping stone for proving Theorem~\ref{LP}, we need the following lemma.

\begin{lemmma} \label{lp_lemma}
Suppose there exists a $\Gamma$-competitive algorithm for all instances of the order selection prophet inequality with finite support size. Then the optimum of the linear program~\hyperref[lpmin]{LP1} is at least $\Gamma$.
\end{lemmma}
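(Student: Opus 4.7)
The plan is to establish the contrapositive. Suppose for contradiction that the LP1 optimum equals some $\mu^\star < \Gamma$, attained by a feasible point $(\mu^\star, c^\star_1, \ldots, c^\star_k)$. I will build a finite-support instance of the order-selection prophet problem on which no algorithm achieves ratio $\Gamma$, contradicting the lemma hypothesis and thereby forcing the LP1 optimum to be at least $\Gamma$.

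The construction pushes each $X_i$ through a monotone transformation determined by $c^\star$. Define $\phi \colon \{a_1, \ldots, a_k\} \to \mathbb{R}_{\geq 0}$ by $\phi(a_j) := \sum_{\ell=1}^{j} c^\star_\ell$ (with the convention $\phi(a_0) := 0$), and let $Y_i := \phi(X_i)$. Since every $c^\star_\ell \geq 0$, $\phi$ is non-decreasing and the $Y_i$ are independent, non-negative, and finite-supported, so they form a legal instance. The engine of the proof is Abel summation on the telescoping increments $\phi(a_j) - \phi(a_{j-1}) = c^\star_j$. Applied to $\max_i Y_i = \phi(\max_i X_i)$ it yields
$$\mathbb{E}[\max_i Y_i] \;=\; \sum_{j=1}^{k} c^\star_j \,\mathbb{P}[X \geq a_j] \;=\; 1,$$
the last equality being the LP1 normalization. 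Applied to each deterministic catalog algorithm $A_i$ on the $X$-instance it yields
$$\mathbb{E}[\phi(A_i)] \;=\; \sum_{j=1}^{k} c^\star_j \,\mathbb{P}[A_i \geq a_j] \;\leq\; \mu^\star,$$
by the $i$-th LP1 inequality constraint.

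To convert this into a bound on algorithms on the $Y$-instance, I will argue that every deterministic algorithm $A'$ on $\{Y_1, \ldots, Y_n\}$ is realized as $A'(Y) = \phi(A_i(X))$ for some $A_i$ in the catalog. Because $\{Y_i\}$ has support $\phi(\{a_1, \ldots, a_k\})$, the preamble to LP1 lets me assume $A'$ uses thresholds from $\phi(\{a_1, \ldots, a_k\}) \cup \{\infty\}$; pulling each threshold $\phi(a_j)$ back to $a_j$ (and $\infty$ to $\infty$) gives an $X$-algorithm $A_i$ whose step-by-step accept/reject decision coincides with that of $A'$, so the accepted index is the same and $A'(Y) = \phi(A_i(X))$. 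Consequently $\mathbb{E}[A'(Y)] \leq \mu^\star$ for every deterministic $A'$, and averaging extends the bound to randomized algorithms. Combined with $\mathbb{E}[\max_i Y_i] = 1$, the $Y$-instance admits no $\Gamma$-competitive algorithm, producing the desired contradiction.

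I expect the main obstacle to be the bookkeeping when some $c^\star_j = 0$, so that $\phi(a_{j-1}) = \phi(a_j)$ and the threshold pullback is no longer canonical: a $Y$-algorithm using the collapsed threshold must be lifted to the $X$-algorithm using $a_{j-1}$ rather than $a_j$ to preserve equivalence of decisions. This can be handled by a short case analysis, or sidestepped entirely by perturbing $c^\star$ to $c^\star + \varepsilon \mathbf{1}$ and rescaling to restore the equality constraint: for $\varepsilon > 0$ sufficiently small the perturbed vector remains feasible with objective strictly below $\Gamma$, the corresponding $\phi$ is strictly increasing, and the threshold correspondence becomes a clean bijection, after which the argument above applies verbatim.
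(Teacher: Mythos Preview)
Your proposal is correct and follows essentially the same approach as the paper: both define the monotone map $\phi(a_j)=\sum_{\ell\le j}c_\ell$, push the instance through $\phi$ to obtain a new finite-support instance, and use the Abel summation identity $\mathbb{E}[\phi(Z)]=\sum_j c_j\,\mathbb{P}[Z\ge a_j]$ to translate between LP constraints and expected rewards. The only cosmetic difference is that the paper argues directly (for every feasible $c$ it exhibits a single catalog algorithm $A_{i^*}$ with $\sum_j c_j\,\mathbb{P}[A_{i^*}\ge a_j]\ge\Gamma$, by pulling back the \emph{optimal} algorithm on the $\phi$-instance), whereas you argue the contrapositive (assume $\mu^\star<\Gamma$ and show the $\phi$-instance admits no $\Gamma$-competitive algorithm by lifting \emph{every} $Y$-algorithm); the paper's direction is slightly cleaner since it only needs to lift one algorithm and hence sidesteps the $c_j=0$ bookkeeping you flag, but your handling of that case (pull back to the smallest preimage, or perturb) is fine.
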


\begin{proof}
Let $(c_1,\ldots,c_k)$ be arbitrary non-negative numbers that satisfy the first and the third constraints in linear program~\hyperref[lpmin]{LP1}. We will prove that there necessarily exists an $i^*\in\{1,\ldots,N\}$ such that $\sum_{j=1}^{k}{c_j\cdot \mathbb{P}[A_{i^*} \geq a_j]}\geq\Gamma$. This forces $\mu$ to be at least $\Gamma$ for $(\mu,c_1,\ldots,c_k)$ to be a feasible point.

For $j\in\{1,\ldots,k\}$, define $\theta_j$ to be $c_1+\cdots+c_j$, so that $\theta_1\leq\cdots\leq\theta_k$. Define the function $f:\{a_1,\ldots,a_k\}\longrightarrow\{\theta_1,\ldots,\theta_k\}$ as $f(a_j)=\theta_j$. Observe that $f$ is monotone. Note that some of the $c_j$'s could be zero, which means $\theta_1,\ldots,\theta_k$ need not be all distinct, and hence, $f$ need not be strictly monotone. However, if $c_j>0$, then it implies that $a_j$ is the smallest number which is mapped to $\theta_j$ under $f$.

Let $X'_1,\ldots,X'_n$ be independent random variables defined as $X'_i=f(X_i)$ (recall that $X_i$'s are independent). Let \[X'=\max(X'_1,\ldots,X'_n)=\max(f(X_1),\ldots,f(X_n))=f(\max(X_1,\ldots,X_n))=f(X)\text{,}\]
where the third equality holds because $f$ is monotone. Let $A'$ be a (deterministic) optimal algorithm for the order selection prophet inequality instance $\{X'_1,\ldots,X'_n\}$. Let the random variable denoting the value selected by $A'$ be also denoted by $A'$. Then we have 
$$\mathbb{E}[A']\geq \Gamma\cdot\mathbb{E}[X'].$$
The above inequality arises from our initial assumption that a $\Gamma$-competitive algorithm exists for all finite support instances of the problem. We rewrite it as the following,
\[\sum_{j:c_j>0}\theta_j\cdot\mathbb{P}[A'=\theta_j]\geq\Gamma\sum_{j:c_j>0}\theta_j\cdot\mathbb{P}[X'=\theta_j]\text{.}\]
Since $\theta_j=c_1+c_2+\cdots+c_j$, we have
\begin{equation}\label{eq1}
\sum_{j=1}^kc_j\cdot\mathbb{P}[A'\geq\theta_j]\geq\Gamma\sum_{j=1}^kc_j\cdot\mathbb{P}[X'\geq\theta_j]\text{.}
%TODO Archit: add an explanation why the above is true, if you think it's worthwhile doing it.
\end{equation}

Define the algorithm $A$ for the order selection prophet inequality problem on $\{X_1,\ldots,X_n\}$ to be the following. Order $X_1,\ldots,X_n$ exactly as $A'$ orders $X'_1,\ldots,X'_n$. On receiving a sample $x_j$ of $X_j$, pass $f(x_j)$ to $A'$, and accept $x_j$ if and only if $A'$ accepts $f(x_j)$. Since $A'$ is a deterministic algorithm, so is $A$. Therefore, the behavior of $A$ is identical to that of $A_{i^*}$ for some $i^*\in\{1,\ldots,N\}$. Thus, for every $j$ such that $c_j>0$, we have $A_{i^*}\geq a_j$ if and only if $A'\geq\theta_j$, where the ``if'' part follows from the fact that $a_j$ is the smallest number mapped to $\theta_j$ under $f$. Thus,
\begin{equation}\label{eq2}
\sum_{j:c_j>0}c_j\cdot\mathbb{P}[A_{i^*}\geq a_j]=\sum_{j:c_j>0}c_j\cdot\mathbb{P}[A'\geq\theta_j]\text{.}
\end{equation}

Now, we have
\begin{eqnarray*}
\sum_{j=1}^kc_j\cdot\mathbb{P}[A_{i^*}\geq a_j] & = & \sum_{j:c_j>0}c_j\cdot\mathbb{P}[A_{i^*}\geq a_j]=\sum_{j:c_j>0}c_j\cdot\mathbb{P}[A'\geq\theta_j]=\sum_{j=1}^kc_j\cdot\mathbb{P}[A'\geq\theta_j]\\
 & \geq & \Gamma\sum_{j=1}^kc_j\cdot\mathbb{P}[X'\geq\theta_j]\geq\Gamma\sum_{j=1}^kc_j\cdot\mathbb{P}[X\geq a_j]=\Gamma\text{,}
\end{eqnarray*}
as required, where the second equality is same as equation~\ref{eq2}, the first inequality is same as inequality~\ref{eq1}, the second inequality follows from the fact that $X'=f(X)$ and $f$ is monotone, and the last equality follows because $(c_1,\ldots,c_k)$ satisfies the first constraint of the linear program~\hyperref[lpmin]{LP1}.
\end{proof}

Let us construct the dual of the linear program~\hyperref[lpmin]{LP1}. We shall refer to this dual linear program as \hyperref[lpmax]{LP2}.
\begin{samepage}

\begin{framed} \label{lpmax}
$\textbf{max}_{\alpha,\boldsymbol{\lambda}}$ $\alpha$ subject to
\\\\
$\sum_{i=1}^{N}{\lambda_i\cdot \mathbb{P}[A_i \geq a_j]} \geq \alpha \cdot \mathbb{P}[X \geq a_j]$ for all $j \in [k]$
\\\\$\lambda_i \geq 0$ for all $i \in [N]$
\\\\
$\sum_{i=1}^N{\lambda_i}=1$
\end{framed}
\end{samepage}
%TODO: I'd like to give this LP he label 'lp_max'.
We now have the tools required for proving Theorem \ref{LP}.
Let us suppose that there exists a $\Gamma$-competitive algorithm for all instances of the order selection prophet inequality with finite support size. By Lemma~\ref{lp_lemma} and strong LP duality, the optimum of the linear program~\hyperref[lpmax]{LP2} is at least $\Gamma$. Let $(\alpha^*,\lambda^*_1,\ldots,\lambda^*_N)$ be an optimal feasible point, so that $\alpha^*\geq\Gamma$. Consider the probability distribution on the set $\{A_1,\ldots,A_N\}$ which assigns probability mass $\lambda^*_i$ to the algorithm $A_i$. Let $A$ be a random algorithm sampled from this distribution, and let $A$ also denote the random variable which takes value equal to the value accepted by the algorithm $A$. We claim that algorithm $A$ is $\Gamma$-ASD. Observe that since $X_1,\ldots,X_n$, $X=\max(X_1,\ldots,X_n)$, and $A$ take values from the set $\{a_1,\ldots,a_k\}$, it is sufficient to check the approximate stochastic dominance condition only for $a_1,\ldots,a_k$. Indeed, we have,
\[\mathbb{P}[A\geq a_j]=\sum_{i=1}^{N}{\lambda^*_i\cdot \mathbb{P}[A_i \geq a_j]} \geq \alpha^* \cdot \mathbb{P}[X \geq a_j]\geq \Gamma \cdot \mathbb{P}[X \geq a_j]\text{,}\]
as required.

\begin{remark}
It is worth noting that just by changing the definition of a "deterministic algorithm" in the above proof, we can show the same result for a wide range of arrival order settings, including the \textit{order-aware} prophet secretary, the \textit{order-unaware} prophet secretary problem and the constrained order prophet inequality \citep*{constrained}. More specifically, for these problems, if there exists a $\Gamma$-competitive algorithm for all finite support instances, then there also exists a $\Gamma$-ASD algorithm for all finite support instances. For example, in the proof of this claim for the \textit{order-aware} prophet secretary, the finite set of deterministic algorithms will consist of all mappings from the set of arrival orders to the set of sequences of thresholds.
\end{remark}

\section{The Random Order Setting (Prophet Secretary)} \label{Random Order Section}
% \subsection{A hardness result for the prophet secretary problem}
In this section, we prove Theorem \ref{Random Order}.
\randomorder*
\begin{proof}
We show this by constructing an instance of the prophet secretary problem and arguing that even the optimal algorithm can not achieve a competitive ratio greater than 0.7254 on that instance.
\bigbreak

\noindent \textbf{Construction:} The instance is composed of $N+1$ items, where 1 item is deterministic and has the value $a>0$. The remaining $N$ items are IIDs supported on $\{0,b_1,b_2,\ldots,b_k,\frac{1}{N\cdot \epsilon}\}$ for some $k>0$ and $\epsilon>0$ with $a<b_1<b_2<\cdots<b_k<\frac{1}{N\cdot\epsilon}$. These IID variables share the following distribution for some $p_i>0$ for $i \in [k]$:
$$X=\left\{\begin{matrix}
0 & w.p. &  1-\sum_{i=1}^k \frac{p_i}{N}-\epsilon\\
b_i & w.p. &  \frac{p_i}{N}, \textrm{ for } i \in [k]\\
\frac{1}{N\cdot \epsilon} & w.p. &  \epsilon.\\
% 1-\sum_{i=1}^k \frac{p_i}{N}-\epsilon & \textrm{for }x=0 \\
% \frac{p_i}{N}& \textrm{for }x=b_i, i \in [k] \\
%  \epsilon & \textrm{for }x=\frac{1}{N\cdot \epsilon} \\
\end{matrix}\right. $$
We now look at the distribution of $\max_i{v_i}$, i.e. the maximum of the values drawn from all $N+1$ distributions. 
$$\mathbb{P}\left[\max_i{v_i}=\frac{1}{N\cdot \epsilon}\right]=1-(1-\epsilon)^N,$$
and for all $i \in [k]$
$$\mathbb{P}\left[\max_i{v_i}=b_i\right]=\left(1-\epsilon-\sum_{j=i+1}^k \frac{p_j}{N}\right)^N-\left(1-\epsilon-\sum_{j=i}^k\frac{p_j}{N}\right)^N.$$
We also have
$$\mathbb{P}\left[\max_i{v_i}=a\right]=\left(1-\epsilon-\sum_{j=1}^k\frac{p_j}{N}\right)^N.$$
From here we have
$$\mathbb{E}[\max_i{v_i}] = a\cdot \left(1-\epsilon-\sum_{j=1}^k\frac{p_j}{N}\right)^N + \sum_{i=1}^{k} b_i \cdot \left[\left(1-\epsilon-\sum_{j=i+1}^k \frac{p_j}{N}\right)^N-\left(1-\epsilon-\sum_{j=i}^k\frac{p_j}{N}\right)^N\right]+\frac{[1-(1-\epsilon)^N]}{N\cdot \epsilon}.$$
We shall denote this expected value by $\textrm{MAX}(N,k,a,\mathbf{b},\mathbf{p},\epsilon)$. The following expression will be of use later,
\begin{equation}\label{MAX}
\lim_{\epsilon\rightarrow 0^+}\textrm{MAX}(N,k,a,\mathbf{b},\mathbf{p},\epsilon) = a\cdot \left(1-\sum_{j=1}^k\frac{p_j}{N}\right)^N + \sum_{i=1}^{k} b_i \cdot \left[\left(1-\sum_{j=i+1}^k \frac{p_j}{N}\right)^N-\left(1-\sum_{j=i}^k\frac{p_j}{N}\right)^N\right]+1.
\end{equation}
\bigbreak
\noindent \textbf{The Optimal Algorithm:} Suppose that the items arrive in a fixed order $D_1, D_2, D_3, \ldots, D_n$. It is well understood how the optimal algorithm operates in such a scenario. Suppose that for some $1\leq i<n$, the optimal algorithm is currently at the $i$-th item, whose realised value is $v_i\sim D_i$. The optimal algorithm accepts the item if and only if $v_{i}$ exceeds the expected reward of the optimal algorithm if it were to reject it. This leads to the relation
$$\tau_i=\tau_{i+1}+\mathbb{E}[(v_i-\tau_{i+1})^+] \ \ \ \textrm{for \ \ \ } 1\leq i < n$$
where $\tau_i$ denotes the expected reward of the optimal algorithm given that it starts at the $i$-th item. It is clear that $\tau_n=\mathbb{E}[v_n]$, where $v_n \sim D_n$. The optimal algorithm's expected reward for this fixed ordering is given by $\tau_1$. 

In our instance, all items except one are identically distributed, and hence there are only $N+1$ distinct orderings in which the items can arrive. For each ordering, we can use the above recursive relation and numerically compute the optimal algorithm's expected reward. Since each ordering is equally likely to occur, we simply take the mean of the optimal algorithm's expected reward for each ordering to obtain the optimal algorithm's expected reward in the random order setting. We denote this expected reward by $\OPT(N,k,a,\mathbf{b},\mathbf{p},\epsilon)$, since this value is dependent on the values of the parameters of the instance. 
For any given $N,k,a,\mathbf{b}$ and $\mathbf{p}$, it is easy to compute the value of $\lim_{\epsilon\rightarrow0^+}\OPT(N,k,a,\mathbf{b},\mathbf{p},\epsilon)$ numerically, since the recursive relation described above can be simplified using the following expression,
$$\lim_{\epsilon\rightarrow 0^+}\mathbb{E}[(v_i-\tau_{i+1})^+] = \sum_{j:b_j>\tau_{i+1}} (b_j-\tau_{i+1})\cdot\frac{ p_j}{N}+\frac{1}{N},$$
when the $i$-th item is one of the IID variables. The case when the $i$-th item is the deterministic variable is handled trivially.

Fixing the following values for the parameters of our construction, $$N=10^5,\ \  k=12,\ \  a=0.82,$$ 
\begin{center}

\begin{tabular}{||c | c c c c c c c c c c c c||}  
  \hline
  i & 1 & 2 & 3 & 4 & 5 & 6 & 7 & 8 & 9 & 10 & 11 & 12 \\
  \hline\hline
  $b_i$ & 1.2 & 1.25 & 1.3 & 1.35 & 1.4 & 1.45 & 1.5 & 1.55 & 1.6 & 1.65 & 1.7 & 1.8 \\ 
  \hline
  $p_i$ & 0.02 & 0.03 & 0.04 & 0.05 & 0.04 & 0.03 & 0.03 & 0.02 & 0.02 & 0.02 & 0.02 & 0.005 \\ 
  \hline
\end{tabular}    
\end{center}
% $$\{b_i\}=\{1.2, 1.25, 1.3, 1.35, 1.4, 1.45, 1.5, 1.55, 1.6, 1.65, 1.7, 1.8\}$$ $$\{p_i\}=\{0.02, 0.03, 0.04, 0.05, 0.04, 0.03, 0.03, 0.02, 0.02, 0.02, 0.02, 0.005\}$$
we observe that 
$$\frac{\lim_{\epsilon\rightarrow0^+}\OPT(N,k,a,\mathbf{b},\mathbf{p},\epsilon)}{\lim_{\epsilon\rightarrow0^+}\textrm{MAX}(N,k,a,\mathbf{b},\mathbf{p},\epsilon)} \approx 0.725398 < 0.7254 ,$$
where $\lim_{\epsilon\rightarrow 0^+}\OPT(N,k,a,\mathbf{b},\mathbf{p},\epsilon)$ is evaluated numerically using the technique described above, and $\lim_{\epsilon\rightarrow 0^+}\textrm{MAX}(N,k,a,\mathbf{b},\mathbf{p},\epsilon)$ is evaluated using Equation \ref{MAX}. Observe that
$$\lim_{\epsilon\rightarrow 0^+}\left(\frac{\OPT(N,k,a,\mathbf{b},\mathbf{p},\epsilon)}{\textrm{MAX}(N,k,a,\mathbf{b},\mathbf{p},\epsilon)}\right)=\frac{\lim_{\epsilon\rightarrow0^+}\OPT(N,k,a,\mathbf{b},\mathbf{p},\epsilon)}{\lim_{\epsilon\rightarrow0^+}\textrm{MAX}(N,k,a,\mathbf{b},\mathbf{p},\epsilon)}<0.7254.$$
This holds because both the limits are non-zero and finite. From the definition of limits, we now have that there exists an $\epsilon>0$ for which $$\frac{\OPT(N,k,a,\mathbf{b},\mathbf{p},\epsilon)}{\textrm{MAX}(N,k,a,\mathbf{b},\mathbf{p},\epsilon)}<0.7254.$$
This completes our proof for Theorem \ref{Random Order}.

\end{proof}

\appendix
\section{Proof of Theorem \ref{Correctness of 2-scheme ALG}: Correctness of the \textit{2-scheme algorithm}} \label{Full proof}
This section of the appendix is dedicated to proving Theorem \ref{Correctness of 2-scheme ALG}. Recall the statement of the theorem.
\correctness*

\subsection{Setting up some tools and notation}
 In this section, the functions $\bar{p}_i(t)$, $\bar{q}_i(t)$ and $\bar{g}(t, \Gamma)$ are defined with respect to the threshold functions constructed in Scheme II of the \textit{2-scheme algorithm}. The functions $p_i(t)$, $q_i(t)$ and $g(t, \Gamma)$ retain their original meanings, i.e. they are defined with respect to the common threshold function $\tau(t)$ used by Scheme I.

 We define $q_i^{-1}(t)$ and $\bar{q}_i^{-1}(t)$ as the inverses of  $q_i(t)$ and $\bar{q}_i(t)$ respectively. Since we have assumed $q_i(t)$ to be strictly increasing with $q_i(0)=0$ and $q_i(1)=1$, the inverse is well defined and has those same properties. We can say the same about the functions $\bar{q}_1(t)=h(q_1(t))$, and $\bar{q}_i(t)=q_i(t)$ for $i \neq 1$. We also define the following functions:
$$\tilde{p}_i(x)\overset{\underset{\mathrm{def}}{}}{=}p_i(q_i^{-1}(x)) \textrm{\ \ \ and \ \ \ }{\tilde{\bar{p}}_i(x)\overset{\underset{\mathrm{def}}{}}{=}\bar{p}_i(\bar{q}_i^{-1}(x))}.$$
From the definitions, it is easy to see that $\tilde{p}_i(x) = \tilde{\bar{p}}_i(x)$ for all $x\in [0,1]$. We also define:
$$\tilde{g}_i(x,\Gamma)\overset{\underset{\mathrm{def}}{}}{=}g(q_i^{-1}(x),\Gamma) \textrm{\ \ \ and \ \ \ }{\tilde{\bar{g}}_i(x,\Gamma)\overset{\underset{\mathrm{def}}{}}{=}\bar{g}(\bar{q}_i^{-1}(x),\Gamma)}.$$
We have
$$\int_0^1 f_i(t, \Gamma)\cdot dt = \int_0^1 \frac{\Gamma \cdot q_i'(t)}{g(t, \Gamma)\exp\left(\Gamma \cdot \int_0^t\frac{p_i(s)q_i'(s)}{g(s, \Gamma)}ds\right)}dt.$$
We perform the change of variables x=$q_i(t)$, y=$q_i(s)$.
\begin{equation}\label{f_i(t) with changed variables}
\int_0^1 f_i(t, \Gamma)\cdot dt = \int_0^1 \frac{\Gamma}{\tilde{g}_i(x, \Gamma)\exp\left(\Gamma \cdot \int_0^x\frac{\tilde{p}_i(y)}{\tilde{g}_i(y, \Gamma)}dy\right)}dx.
\end{equation}
Similarly, we obtain
\begin{equation}\label{bar f_i(t) with changed variables}
\int_0^1 \bar{f}_i(t, \Gamma)\cdot dt = \int_0^1 \frac{\Gamma}{\tilde{\bar{g}}_i(x, \Gamma)\exp\left(\Gamma \cdot \int_0^x\frac{\tilde{p}_i(y)}{\tilde{\bar{g}}_i(y, \Gamma)}dy\right)}dx.
\end{equation}
Note that we have used $\tilde{p}_i(x) = \tilde{\bar{p}}_i(x)$ to obtain the above expression. We now state a lemma from \citet{PT}.
\begin{lemmma}\label{g tilde bound}
For all $i \in [n]$ and $x \in [0,1)$,
$$\tilde{g}_i(x,\Gamma) \geq \Gamma\cdot [-(1-x)\cdot \ln(1-x)\cdot (1-\tilde{p}_i(x))-x]+1.$$.
\end{lemmma}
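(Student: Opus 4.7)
The plan is to unfold $\tilde g_i(x,\Gamma)$ via the substitution $t := q_i^{-1}(x)$, so that $x = q_i(t)$ and $\tilde p_i(x) = p_i(t)$. The definition $\tilde g_i(x,\Gamma) = g(t,\Gamma) = \Gamma\bigl(\sum_j (1-q_j(t))p_j(t) - t\bigr) + 1$ then reduces the claim, after dividing by $\Gamma > 0$ and rearranging, to the pointwise inequality
\[
\sum_j (1-q_j(t))\,p_j(t) \;+\; (1-q_i(t))(1-p_i(t))\ln(1-q_i(t)) \;\geq\; t - q_i(t).
\]

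Two structural identities will do most of the work. Since $\tau(t)$ is defined by $\mathbb{P}[\max_j v_j > \tau(t)] = t$ and the $v_j$'s are independent, one has $\prod_j (1 - p_j(t)) = 1 - t$, and consequently $1 - q_i(t) = \prod_{j\neq i}(1-p_j(t)) = \frac{1-t}{1-p_i(t)}$. From this I read off $(1-q_i(t))(1-p_i(t)) = 1-t$, the logarithm identity $\ln(1-q_i(t)) = \ln(1-t) - \ln(1-p_i(t))$, and the rewriting $(1-q_j(t))p_j(t) = \frac{(1-t)p_j(t)}{1-p_j(t)}$ for every $j$. In particular, the $j = i$ term satisfies the exact identity $(1-q_i(t))p_i(t) = \frac{(1-t)p_i(t)}{1-p_i(t)} = t - q_i(t)$, which will account precisely for the $t - q_i(t)$ on the right-hand side.

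The analytic content is the elementary one-variable inequality $\frac{p}{1-p} \geq -\ln(1-p)$ on $[0,1)$, which follows since $h(p) := \frac{p}{1-p} + \ln(1-p)$ has $h(0) = 0$ and derivative $h'(p) = \frac{p}{(1-p)^2} \geq 0$. Applying this to $p = p_j(t)$ for each $j \neq i$, multiplying by the non-negative factor $(1-t)$, and summing gives
\[
\sum_{j\neq i}(1-q_j(t))\,p_j(t) \;\geq\; -(1-t)\sum_{j\neq i}\ln(1-p_j(t)) \;=\; -(1-t)\bigl[\ln(1-t) - \ln(1-p_i(t))\bigr],
\]
using $\sum_j \ln(1-p_j(t)) = \ln(1-t)$. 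Adding the exact $j = i$ contribution $(1-q_i(t))p_i(t) = t - q_i(t)$, together with the identity $(1-q_i(t))(1-p_i(t))\ln(1-q_i(t)) = (1-t)\bigl[\ln(1-t) - \ln(1-p_i(t))\bigr]$, shows the left-hand side of the reduced inequality is at least $t - q_i(t)$, with the two logarithmic pieces cancelling cleanly.

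The main subtlety I anticipate is the careful separation of the $j = i$ summand from the rest: for $j \neq i$ the scalar inequality supplies the needed logarithmic lower bound, while for $j = i$ an exact identity accounts for the residual $t - q_i(t)$. Once this split is in place, everything else is bookkeeping based on the product formula coming from the definition of $\tau(t)$ and the convexity-style bound $\frac{p}{1-p} \geq -\ln(1-p)$; no deeper structural facts about the individual distributions $D_j$ are required.
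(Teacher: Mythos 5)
Your proof is correct: the reduction via $t=q_i^{-1}(x)$, the identities $\prod_j(1-p_j(t))=1-t$, $(1-q_i(t))(1-p_i(t))=1-t$ and $(1-q_i(t))p_i(t)=t-q_i(t)$, and the scalar bound $\tfrac{p}{1-p}\geq-\ln(1-p)$ applied to the $j\neq i$ terms all check out, and the logarithmic terms cancel exactly as you claim. The paper itself gives no proof of this lemma (it is imported from Peng and Tang), and your argument is essentially the standard one used there, so there is nothing further to compare.
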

 We now prove the following lemma about $g(t,\Gamma)$.
\begin{lemmma}\label{g(t) definition}
For all $t \in [0,1]$,
$$g(t,\Gamma) = 1 - \Gamma \cdot \sum_i \int_0^t p_i(s)q_i'(s)ds.$$.
\end{lemmma}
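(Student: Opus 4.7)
The plan is to verify that both sides of the claimed equality agree at $t=0$ and have the same derivative on $[0,1]$, thereby reducing the identity to a single infinitesimal statement.

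First I would check the boundary: from the definition $g(0,\Gamma)=\Gamma(\sum_i(1-q_i(0))p_i(0)-0)+1=1$ since $p_i(0)=q_i(0)=0$, while the right-hand side of the target expression at $t=0$ is trivially $1$. So it suffices to prove that the derivatives with respect to $t$ agree. The right-hand side differentiates to $-\Gamma\sum_i p_i(t)q_i'(t)$. Differentiating the definition of $g$ gives
\[
g'(t,\Gamma)=\Gamma\!\left(\sum_i\bigl[(1-q_i(t))p_i'(t)-q_i'(t)p_i(t)\bigr]-1\right)=-\Gamma\sum_i p_i(t)q_i'(t)+\Gamma\!\left(\sum_i(1-q_i(t))p_i'(t)-1\right).
\]
Thus the lemma reduces entirely to the auxiliary identity
\[
\sum_i(1-q_i(t))p_i'(t)=1 \quad\text{for all } t\in[0,1].
\]

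The main (and only nontrivial) step is verifying this identity. I would derive it by differentiating the defining equation $\mathbb{P}[\max_i v_i>\tau(t)]=t$ with respect to $t$. Concretely, the same decomposition used in the second lemma inside the proof of Theorem~\ref{ASD and CR of General Algorithm} (applied with the common threshold $\tau_i=\tau$, so that $\bar p_i=p_i$ and $\bar q_i=q_i$) yields
\[
\mathbb{P}[\max_i v_i>\tau(t)]=\sum_i\int_0^t p_i'(s)(1-q_i(s))\,ds.
\]
Since the left-hand side equals $t$ by definition of $\tau$, differentiating gives exactly $\sum_i(1-q_i(t))p_i'(t)=1$, as required. (Intuitively, $p_i'(s)(1-q_i(s))\,ds$ is the probability that item $i$ is the unique maximum and has value in the infinitesimal band $(\tau(s+ds),\tau(s)]$; summing over $i$ and integrating over $s\le t$ gives the probability that some item exceeds $\tau(t)$.)

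With the auxiliary identity in hand, substituting it back shows $g'(t,\Gamma)=-\Gamma\sum_i p_i(t)q_i'(t)$, which matches the derivative of the target expression. Combined with the matching value at $t=0$, the fundamental theorem of calculus yields the claimed equality on $[0,1]$. The potential obstacle is merely rigorously justifying the interchange of differentiation and summation and invoking the previously established decomposition of $\mathbb{P}[\max_i v_i>x]$; both are straightforward under the continuity and monotonicity assumptions on $p_i, q_i$ stated in Section~\ref{Preliminaries}.
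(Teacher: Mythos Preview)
Your proposal is correct and follows the same overall strategy as the paper: differentiate $g(t,\Gamma)$, reduce to the identity $\sum_i(1-q_i(t))p_i'(t)=1$, and combine with the boundary value $g(0,\Gamma)=1$. The only difference is how the auxiliary identity is established: the paper obtains it in one line by differentiating the product relation $\prod_i(1-p_i(t))=1-t$ (the product rule gives $\sum_i p_i'(t)\prod_{j\neq i}(1-p_j(t))=\sum_i p_i'(t)(1-q_i(t))$, while the right side differentiates to $1$), whereas you invoke the decomposition lemma from the proof of Theorem~\ref{ASD and CR of General Algorithm} and then differentiate $\mathbb{P}[\max_i v_i>\tau(t)]=t$. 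Both routes are valid and ultimately equivalent; the paper's is slightly more self-contained, while yours reuses an already-proven fact.
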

\begin{proof}
We have that 
$$g'(t,\Gamma) = \left(\Gamma \cdot \left(\sum_i (1-q_i(t))\cdot p_i(t)-t \right)+1 \right)',$$
where $g'(t,\Gamma)=\frac{d}{dt}g(t,\Gamma)$. The above expression simplifies to
\begin{equation}\label{g'(t)}
g'(t,\Gamma) = -\Gamma\cdot \sum_i q_i'(t)p_i(t) + \Gamma\cdot \sum_i (1-q_i(t))\cdot p_i'(t) - \Gamma.
\end{equation}
From the definitions of $p_i(t)$ and $q_i(t)$, we have
$$\prod_i (1-p_i(t))=1-t \textrm{\ \ \ and \ \ \ } \prod_{j\neq i} (1-p_j(t))=1-q_i(t)$$
for all $i\in[n]$, $t\in[0,1]$. Taking derivative of $-\prod_i (1-p_i(t))$, we obtain
$\sum_i (1-q_i(t))\cdot p_i'(t) = 1$.
Combining this with (\ref{g'(t)}), we obtain
$$g'(t, \Gamma) = -\Gamma\cdot \sum_i q_i'(t)p_i(t).$$
Applying the boundary condition $g(0,\Gamma)=1$, we obtain
$g(t,\Gamma) = 1 - \Gamma \cdot \sum_i \int_0^t p_i(s)q_i'(s)ds$.
\end{proof}
\begin{corollary}\label{bound on q_i^-1(q_i())}
For all $i \in [n]$ and $x \in [0,1)$,
$$\sum_j \int_0^{q_j(q_i^{-1}(x))} \tilde{p}_j(s)\cdot ds \leq - [-(1-x)\cdot \ln(1-x)\cdot (1-\tilde{p}_i(x))-x].$$
\end{corollary}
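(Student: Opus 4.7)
The plan is to derive this corollary by combining the identity for $g(t,\Gamma)$ established in Lemma~\ref{g(t) definition} with the lower bound on $\tilde{g}_i(x,\Gamma)$ given in Lemma~\ref{g tilde bound}, via a straightforward change of variables. There is no real obstacle here — this is essentially a bookkeeping step that rewrites the $g$-bound in the $\tilde{p}$-parametrization that will be needed downstream.

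First I would start from the representation
\[
g(t,\Gamma) \;=\; 1 - \Gamma\cdot\sum_j \int_0^{t} p_j(s)\, q_j'(s)\, ds
\]
and specialize $t = q_i^{-1}(x)$, so that the left-hand side becomes $\tilde{g}_i(x,\Gamma)$. In each summand I would perform the change of variable $y = q_j(s)$, giving $dy = q_j'(s)\, ds$ and mapping the interval of integration $s\in[0,q_i^{-1}(x)]$ to $y\in[0,q_j(q_i^{-1}(x))]$. Since $p_j(q_j^{-1}(y)) = \tilde{p}_j(y)$ by definition, this yields
\[
\tilde{g}_i(x,\Gamma) \;=\; 1 - \Gamma\cdot\sum_j \int_0^{q_j(q_i^{-1}(x))} \tilde{p}_j(y)\, dy.
\]

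Next I would invoke Lemma~\ref{g tilde bound}, which states
\[
\tilde{g}_i(x,\Gamma) \;\geq\; \Gamma\cdot\bigl[-(1-x)\ln(1-x)\,(1-\tilde{p}_i(x))-x\bigr] + 1.
\]
Equating this with the expression above and cancelling the $+1$ on both sides gives
\[
-\Gamma\cdot\sum_j \int_0^{q_j(q_i^{-1}(x))} \tilde{p}_j(y)\, dy \;\geq\; \Gamma\cdot\bigl[-(1-x)\ln(1-x)(1-\tilde{p}_i(x))-x\bigr].
\]
Dividing both sides by $-\Gamma$ (which flips the inequality, since $\Gamma>0$) yields precisely the claimed bound. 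The whole argument amounts to an algebraic manipulation and a one-line substitution, so there is no technically difficult step; the value of the corollary lies in packaging the inequality in a form that depends only on $x$ and $\tilde{p}_i(x)$, which is what the subsequent proof of Theorem~\ref{Correctness of 2-scheme ALG} will use.
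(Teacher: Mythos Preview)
Your proposal is correct and follows essentially the same approach as the paper: both combine Lemma~\ref{g(t) definition} (the integral representation of $g$) with Lemma~\ref{g tilde bound}, specialize at $t=q_i^{-1}(x)$, and apply the change of variable $s\mapsto q_j(s)$ termwise. The only difference is cosmetic ordering---the paper first writes the combined inequality and then changes variables, whereas you first rewrite $\tilde g_i$ and then apply the bound.
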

\begin{proof}
The following statement follows directly from Lemmas \ref{g tilde bound} and \ref{g(t) definition}.
$$1-\Gamma \cdot \sum_j \int_0^{q_i^{-1}(x)} p_j(t)\cdot q_j'(t)\cdot dt \geq \Gamma\cdot [-(1-x)\cdot \ln(1-x)\cdot (1-\tilde{p}_i(x))-x]+1.$$
Performing the change of variable $s=q_j(t)$ in each term of the summation on the LHS, we get
$$1-\Gamma \cdot \sum_j \int_0^{q_j(q_i^{-1}(x))} \tilde{p}_j(s)\cdot ds \geq \Gamma\cdot [-(1-x)\cdot \ln(1-x)\cdot (1-\tilde{p}_i(x))-x]+1.$$
The statement of the corollary follows directly from here.
\end{proof}

Recall the construction of threshold functions used by Scheme II. For $i\neq 1$, we have
$$\tau_i(t) = \tau(t) \textrm{ for all } t \in [0,1] \textrm{ and } i \in [n]-\{1\}.$$
This means that the following holds.
\begin{equation}\label{q_i(t)}
    \bar{q}_i(t)=q_i(t) \textrm{\ for \ } i \in [n]-\{1\}.
\end{equation}
For $i=1$, we have a different threshold function. $\tau_1(t)$ is given by the following equation:
\begin{equation}\label{q_1(t)}
\bar{q}_1(t)=\mathbb{P}[\max_{j\neq 1}v_j>\tau_1(t)] = h(q_1(t)) \textrm{ for } t \in [0,1].
\end{equation}
Also, recall the definition of $h(x)$. 
$$h(x)\overset{\underset{\mathrm{def}}{}}{=}\left\{\begin{matrix}
    \frac{c-\epsilon}{\epsilon}\cdot x & 0 \leq x < \epsilon \\
    c+\epsilon\cdot\frac{x-c}{c-\epsilon} & \epsilon \leq x < c \\
    x & c \leq x \leq 1.\\
    \end{matrix}\right.$$
We now prove two useful lemmas about functions $\tilde{\bar{g}}_i(x,\Gamma)$.
\begin{lemmma}\label{Bound on g1 tilde}
For all $x \in [0,c-\epsilon)$,
$$\tilde{\bar{g}}_1(x,\Gamma) \geq 1-\Gamma\cdot\int_0^x\tilde{p}_1(s)\cdot ds - \Gamma\cdot \epsilon$$
and for all $x \in [c,1)$,
$$\tilde{\bar{g}}_1(x,\Gamma) \geq \Gamma\cdot [-(1-x)\cdot \ln(1-x)\cdot (1-\tilde{p}_1(x))-x]+1 .$$
\end{lemmma}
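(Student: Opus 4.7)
The plan is to unfold the definition of $\tilde{\bar{g}}_1(x,\Gamma) = \bar{g}(\bar{q}_1^{-1}(x),\Gamma)$ using
$$\bar{g}(t,\Gamma) = 1 - \Gamma \sum_i \int_0^t \bar{p}_i(s)\, \bar{q}_i'(s)\, ds,$$
and then transform each summand by its own change of variables, exploiting the fact that Scheme II uses the common threshold $\tau$ for $i \neq 1$ and a distinct threshold $\tau_1$ for item $1$. Writing $t^* = \bar{q}_1^{-1}(x)$, using $\bar{q}_i = q_i$ and $\bar{p}_i = p_i$ for $i \neq 1$, and performing the substitution $y = \bar{q}_i(s)$ in each integral (together with $\tilde{\bar{p}}_1 = \tilde{p}_1$), I would arrive at
$$\tilde{\bar{g}}_1(x, \Gamma) = 1 - \Gamma \int_0^x \tilde{p}_1(y)\, dy - \Gamma \sum_{i \neq 1} \int_0^{q_i(t^*)} \tilde{p}_i(y)\, dy.$$

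For the first bound, when $x \in [0, c-\epsilon)$, the piecewise definition of $h$ forces $h^{-1}$ to map $[0, c-\epsilon)$ into $[0, \epsilon)$, so $q_1(t^*) = h^{-1}(x) < \epsilon$. I would then invoke Corollary~\ref{bound on q_i^-1(q_i())} with $i = 1$ at the point $q_1(t^*)$, and combine this with the inequalities $(1-q_1(t^*)) \ln(1-q_1(t^*)) \leq 0$ and $\tilde{p}_1 \geq 0$ to conclude
$$\sum_{i \neq 1} \int_0^{q_i(t^*)} \tilde{p}_i(y)\, dy \leq q_1(t^*) - \int_0^{q_1(t^*)} \tilde{p}_1(y)\, dy \leq q_1(t^*) < \epsilon.$$
Substituting this back into the expression for $\tilde{\bar{g}}_1(x,\Gamma)$ yields exactly the claimed lower bound $1 - \Gamma \int_0^x \tilde{p}_1(s)\, ds - \Gamma \epsilon$.

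For the second bound, when $x \in [c,1)$, the definition of $h$ gives $h(x) = x$ on $[c, 1]$, so $h^{-1}(x) = x$ and thus $q_1(t^*) = x$, i.e. $t^* = q_1^{-1}(x)$. This lets me identify $\int_0^x \tilde{p}_1(y)\, dy = \int_0^{q_1(t^*)} \tilde{p}_1(y)\, dy$, so the item-$1$ contribution and the remaining sum merge into the single expression $\sum_i \int_0^{q_i(q_1^{-1}(x))} \tilde{p}_i(y)\, dy$. Inverting the change of variables inside each summand and comparing with Lemma~\ref{g(t) definition} shows that $\tilde{\bar{g}}_1(x,\Gamma) = g(q_1^{-1}(x),\Gamma) = \tilde{g}_1(x,\Gamma)$, and Lemma~\ref{g tilde bound} delivers the desired inequality verbatim.

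The main obstacle is keeping the asymmetric change of variables straight for the two families of integrals (item $1$ uses $\bar{q}_1 = h \circ q_1$ while items $i \neq 1$ use $q_i$), and verifying that the piecewise design of $h$ indeed yields a clean $O(\epsilon)$ slack when $x < c - \epsilon$ while restoring the Peng and Tang bound without loss when $x \geq c$. Once this bookkeeping is done, both bounds fall out directly from the previously established Corollary~\ref{bound on q_i^-1(q_i())} and Lemma~\ref{g tilde bound}.
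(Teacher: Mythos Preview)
Your proposal is correct and follows essentially the same route as the paper: unfold $\tilde{\bar{g}}_1(x,\Gamma)$ via the change of variables $y=\bar{q}_i(s)$, use $\bar{q}_1=h\circ q_1$ to identify $q_1(t^*)=h^{-1}(x)$, and then invoke Corollary~\ref{bound on q_i^-1(q_i())} (equivalently Lemma~\ref{g tilde bound}) together with the piecewise structure of $h$. The only cosmetic difference is that the paper first derives a single unified inequality in terms of $h^{-1}(x)$ and then specializes to the two regimes, whereas you split into the two cases from the outset; the underlying argument and the tools used are the same.
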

\begin{proof}
We perform the change of variables $u=\bar{q}_i(s)$ in each term of the summation in the definition of $\bar{g}(t,\Gamma)$.
$$\bar{g}(t,\Gamma)= 1 - \Gamma \cdot \sum_{i} \int_0^{t}\bar{p}_i(s)\bar{q}_i'(s)\cdot ds = 1 - \Gamma \cdot \sum_{i } \int_0^{\bar{q}_i(t)}\tilde{p}_i(u)\cdot du.$$
From the definition of $\tilde{\bar{g}}_i(x,\Gamma)$, we have
$$\tilde{\bar{g}}_1(x,\Gamma)= \bar{g}(\bar{q}_1^{-1}(x),\Gamma)=1 - \Gamma \cdot \sum_{i } \int_0^{\bar{q}_i(\bar{q}_1^{-1}(x))}\tilde{p}_i(u)\cdot du.$$
Recall that $\bar{q}_1(t)=h(q_1(t))$, and hence
$\bar{q}_1^{-1}(x)=q_1^{-1}(h^{-1}(x))$.
We now resolve the expression for $\tilde{\bar{g}}_1(x,\Gamma)$.
\begin{eqnarray*}\
\tilde{\bar{g}}_1(x,\Gamma)  & = & 1 - \Gamma \cdot \sum_{i } \int_0^{\bar{q}_i(q_1^{-1}(h^{-1}(x)))}\tilde{p}_i(u)\cdot du\\  
& = & 1 - \Gamma \cdot \sum_{i \neq 1} \int_0^{\bar{q}_i(q_1^{-1}(h^{-1}(x)))}\tilde{p}_i(u)\cdot du - \Gamma \cdot \int_0^{\bar{q}_1(q_1^{-1}(h^{-1}(x)))}\tilde{p}_1(u)\cdot du\\ 
& = & 1 - \Gamma \cdot \sum_{i \neq 1} \int_0^{q_i(q_1^{-1}(h^{-1}(x)))}\tilde{p}_i(u)\cdot du - \Gamma \cdot \int_0^{\bar{q}_1(q_1^{-1}(h^{-1}(x)))}\tilde{p}_1(u)\cdot du\\
& = & 1 - \Gamma \cdot \sum_{i} \int_0^{q_i(q_1^{-1}(h^{-1}(x)))}\tilde{p}_i(u)\cdot du - \Gamma \cdot \int_{q_1(q_1^{-1}(h^{-1}(x)))}^{\bar{q}_1(q_1^{-1}(h^{-1}(x)))}\tilde{p}_1(u)\cdot du\\
& = & 1 - \Gamma \cdot \sum_{i} \int_0^{q_i(q_1^{-1}(h^{-1}(x)))}\tilde{p}_i(u)\cdot du - \Gamma \cdot \int_{h^{-1}(x)}^{x}\tilde{p}_1(u)\cdot du.\\
\end{eqnarray*}
Applying Corollary \ref{bound on q_i^-1(q_i())}, we have:
$$\tilde{\bar{g}}_1(x,\Gamma) \geq \Gamma\cdot [-(1-h^{-1}(x))\cdot \ln(1-h^{-1}(x))\cdot (1-\tilde{p}_i(h^{-1}(x)))-h^{-1}(x)]+1 - \Gamma \cdot \int_{h^{-1}(x)}^{x}\tilde{p}_1(u)\cdot du.$$
From the definition of $h(x)$, we have that $h^{-1}(x)< \epsilon$ for $x<c-\epsilon$, and $h^{-1}(x)=x$ for $x\geq c$. Hence,
for all $x \in [0,c-\epsilon)$,
$$\tilde{\bar{g}}_1(x,\Gamma) \geq 1-\Gamma\cdot\int_0^x\tilde{p}_1(s)\cdot ds - \Gamma\cdot\epsilon$$
and for all $x \in [c,1)$,
$$\tilde{\bar{g}}_1(x,\Gamma) \geq \Gamma\cdot [-(1-x)\cdot \ln(1-x)\cdot (1-\tilde{p}_1(x))-x]+1 .$$
\end{proof}

\begin{lemmma}\label{Bound on gi tilde}
For $i\neq 1$ and $x \in [0,1)$,
$$\tilde{\bar{g}}_i(x,\Gamma) \geq \Gamma\cdot [-(1-x)\cdot \ln(1-x)\cdot (1-\tilde{p}_i(x))-x]+1-\Gamma\cdot\int_0^c\tilde{p}_1(s)\cdot ds.$$
\end{lemmma}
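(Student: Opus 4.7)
The plan is to follow the same kind of manipulation used in the proof of Lemma \ref{Bound on g1 tilde}, but now with the roles reversed: the ``fresh'' threshold lives on index $1$ and we are estimating $\tilde{\bar{g}}_i$ for an index $i\neq 1$ whose threshold is just $\tau$.

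First I would use the change of variables identity already derived inside the proof of Lemma \ref{Bound on g1 tilde}, namely
$$\bar{g}(t,\Gamma)=1-\Gamma\cdot\sum_{j}\int_0^{\bar{q}_j(t)}\tilde{p}_j(u)\,du,$$
and specialize to $t=\bar{q}_i^{-1}(x)=q_i^{-1}(x)$ (using $\bar{q}_i=q_i$ for $i\neq 1$, by \eqref{q_i(t)}). This gives
$$\tilde{\bar{g}}_i(x,\Gamma)=1-\Gamma\sum_{j\neq 1}\int_0^{q_j(q_i^{-1}(x))}\tilde{p}_j(u)\,du-\Gamma\int_0^{\bar{q}_1(q_i^{-1}(x))}\tilde{p}_1(u)\,du.$$
I would then add and subtract the $j=1$ term written with $q_1$ instead of $\bar{q}_1$ to obtain
$$\tilde{\bar{g}}_i(x,\Gamma)=\Bigl(1-\Gamma\sum_{j}\int_0^{q_j(q_i^{-1}(x))}\tilde{p}_j(u)\,du\Bigr)+\Gamma\int_{\bar{q}_1(q_i^{-1}(x))}^{q_1(q_i^{-1}(x))}\tilde{p}_1(u)\,du.$$
Applying Corollary \ref{bound on q_i^-1(q_i())} bounds the parenthesized term from below by $\Gamma\cdot[-(1-x)\ln(1-x)(1-\tilde{p}_i(x))-x]+1$, giving exactly the first two summands of the claimed inequality.

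The remaining task, which I expect to be the main (though not deep) obstacle, is to show
$$\Gamma\int_{\bar{q}_1(q_i^{-1}(x))}^{q_1(q_i^{-1}(x))}\tilde{p}_1(u)\,du\;\geq\;-\Gamma\int_0^c\tilde{p}_1(s)\,ds.$$
For this I would inspect the piecewise definition of $h$. Writing $y=q_1(q_i^{-1}(x))$, we have $\bar{q}_1(q_i^{-1}(x))=h(y)$, and a direct check of the three pieces of $h$ (using $c>2\epsilon$, which holds since $\epsilon$ is arbitrarily small and $c=0.28$) shows that $h(y)\geq y$ for $y\in[0,c]$ and $h(y)=y$ for $y\in[c,1]$; in particular $h(y)\leq c$ for every $y\in[0,1]$. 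So the integral from $\bar{q}_1(\cdot)$ to $q_1(\cdot)$ equals $-\int_y^{h(y)}\tilde{p}_1(u)\,du$, and since $0\leq y\leq h(y)\leq c$ and $\tilde{p}_1\geq 0$, we have
$$\int_y^{h(y)}\tilde{p}_1(u)\,du\;\leq\;\int_0^c\tilde{p}_1(u)\,du,$$
which yields the desired lower bound. Combining the two estimates gives the statement of the lemma.
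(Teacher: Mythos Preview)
Your approach is essentially identical to the paper's proof: both expand $\tilde{\bar{g}}_i(x,\Gamma)$ via the change-of-variables identity, add and subtract the $j=1$ term to invoke Corollary~\ref{bound on q_i^-1(q_i())}, and then bound the residual integral $\int_{y}^{h(y)}\tilde{p}_1(u)\,du$ by $\int_0^c\tilde{p}_1(u)\,du$.

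One small slip: your claim that ``$h(y)\leq c$ for every $y\in[0,1]$'' is false as stated, since for $y\in[c,1]$ we have $h(y)=y$, which may exceed $c$. The inequality $0\leq y\leq h(y)\leq c$ holds only for $y\in[0,c]$; for $y\in[c,1]$ you should instead note that $h(y)=y$ makes the integral $\int_y^{h(y)}\tilde{p}_1(u)\,du$ vanish, which trivially satisfies the bound. With that case handled, the argument is complete and matches the paper.
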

\begin{proof}
From the definition of $\tilde{\bar{g}}_i(x,\Gamma)$, we have
$$\tilde{\bar{g}}_i(x,\Gamma)= \bar{g}(\bar{q}_i^{-1}(x),\Gamma)=1 - \Gamma \cdot \sum_{j } \int_0^{\bar{q}_j(\bar{q}_i^{-1}(x))}\tilde{p}_j(u)\cdot du.$$
Applying (\ref{q_i(t)}) and (\ref{q_1(t)}),
\begin{eqnarray*}\
\tilde{\bar{g}}_i(x,\Gamma) &  =  & 1 - \Gamma \cdot \sum_{j\neq 1 } \int_0^{q_j(q_i^{-1}(x))}\tilde{p}_j(u)\cdot du - \Gamma \cdot \int_0^{\bar{q}_1(q_i^{-1}(x))}\tilde{p}_1(u)\cdot du\\  
&  =  &  1 - \Gamma \cdot \sum_{j } \int_0^{q_j(q_i^{-1}(x))}\tilde{p}_j(u)\cdot du - \Gamma \cdot \int_{q_1(q_i^{-1}(x))}^{\bar{q}_1(q_i^{-1}(x))}\tilde{p}_1(u)\cdot du\\  
&  =  &  1 - \Gamma \cdot \sum_{j } \int_0^{q_j(q_i^{-1}(x))}\tilde{p}_j(u)\cdot du - \Gamma \cdot \int_{q_1(q_i^{-1}(x))}^{h(q_1(q_i^{-1}(x)))}\tilde{p}_1(u)\cdot du.\\ 
\end{eqnarray*}
From the definition of $h(x)$, notice that the maximum value that $\int_{y}^{h(y)}\tilde{p}_1(u)\cdot du$ can take for some $y\in[0,1]$ is no more than $\int_{0}^{c}\tilde{p}_1(u)\cdot du$. Using this fact and applying Corollary \ref{bound on q_i^-1(q_i())}, we have for all $x \in [0,1)$:
$$\tilde{\bar{g}}_i(x,\Gamma) \geq \Gamma\cdot [-(1-x)\cdot \ln(1-x)\cdot (1-\tilde{p}_i(x))-x]+1-\Gamma\cdot\int_0^c\tilde{p}_1(s)\cdot ds.$$
\end{proof}
We now define a new notation $\hat{g}_i(x, \Gamma)$.
$$\hat{g}_i(x, \Gamma) \overset{\underset{\mathrm{def}}{}}{=} \Gamma\cdot [-(1-x)\cdot \ln(1-x)\cdot (1-\tilde{p}_i(x))-x]+1.$$
From Lemma \ref{g tilde bound}, we have $\tilde{g}_i(x,\Gamma) \geq \hat{g}_i(x, \Gamma)$ for all $i\in [n]$ and $x \in [0,1)$. We also define two notations $G_i(z,\Gamma)$ and $\hat{G}_i(z,\Gamma)$.
$$G_i(z,\Gamma)\overset{\underset{\mathrm{def}}{}}{=}\int_z^1 \frac{\Gamma}{\tilde{g}_i(x, \Gamma)\exp\left(\Gamma \cdot \int_z^x\frac{\tilde{p}_i(y)}{\tilde{g}_i(y, \Gamma)}dy\right)}dx \ \ \ \textrm{and} \ \ \ \hat{G}_i(z,\Gamma)\overset{\underset{\mathrm{def}}{}}{=}\int_z^1 \frac{\Gamma}{\hat{g}_i(x, \Gamma)\exp\left(\Gamma \cdot \int_z^x\frac{\tilde{p}_i(y)}{\hat{g_i}(y, \Gamma)}dy\right)}dx.$$
It is clear from (\ref{f_i(t) with changed variables}) that 
$$\int_0^1 f_i(t,\Gamma) = G_i(0,\Gamma).$$
Hence, an item is $\Gamma-adverse$ (recall \textbf{Definition} \ref{adverse}) if and only if $G_i(0,\Gamma)>1$.

\begin{restatable}{lemmma}{rhoRho}\label{rho1>rho2}
Suppose that for functions $\tilde{p}, \rho_1,\rho_2:[0,1]\rightarrow [0,1]$, such that $\tilde{p}$ is strictly increasing on $[0,1]$, we have
$$\rho_1(x)\geq \rho_2(x)>0$$
for all $x \in [0,1]$, then
$$\int_0^1 \frac{\Gamma}{{\rho}_1(x)\exp\left(\Gamma \cdot \int_0^x\frac{\tilde{p}(y)}{{\rho}_1(y)}dy\right)}dx \leq 1 \textrm{\ \ \ if \ \ \ } \int_0^1 \frac{\Gamma}{{\rho}_2(x)\exp\left(\Gamma \cdot \int_0^x\frac{\tilde{p}(y)}{{\rho}_2(y)}dy\right)}dx \leq 1.$$
\end{restatable}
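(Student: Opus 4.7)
The plan is to prove the stronger statement that the functional
\[I(\rho):=\int_0^1\frac{\Gamma}{\rho(x)\exp\bigl(\Gamma\int_0^x\tilde p(y)/\rho(y)\,dy\bigr)}\,dx\]
is monotone non-increasing in $\rho$: whenever $\rho_1(x)\geq \rho_2(x)>0$ for all $x$, one has $I(\rho_1)\leq I(\rho_2)$, which immediately yields the lemma. The obstacle to a direct comparison is that increasing $\rho$ shrinks the prefactor $1/\rho(x)$ but simultaneously shrinks the exponential, so the two appearances of $\rho$ pull the integrand in opposite directions and appear to cancel.

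To decouple them, I will use the substitution $u=\Gamma\int_0^x \tilde p(y)/\rho(y)\,dy$. Since $\tilde p$ is strictly increasing, $\tilde p(y)>0$ for $y>0$, so $u(x)$ is strictly increasing; let $x_\rho(u)$ denote its inverse and set $U(\rho):=\Gamma\int_0^1 \tilde p(y)/\rho(y)\,dy$. With $dx=\rho(x)\,du/(\Gamma\tilde p(x))$, the factor $\rho(x)$ in the integrand cancels and I obtain the clean representation
\[I(\rho)=\int_0^{U(\rho)}\frac{du}{\tilde p(x_\rho(u))\,e^u}.\]
In this form, $\rho$ enters only through the upper limit $U(\rho)$ and through the inverse $x_\rho(u)$, both of which are visibly monotone in $\rho$.

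Explicitly, first, $U(\rho_1)\leq U(\rho_2)$ since $\tilde p/\rho_1\leq \tilde p/\rho_2$ pointwise. Second, for each fixed $u$, the defining equation $\Gamma\int_0^{x_\rho(u)}\tilde p(y)/\rho(y)\,dy=u$ has a left-hand side pointwise non-increasing in $\rho$, so restoring equality forces $x_{\rho_1}(u)\geq x_{\rho_2}(u)$; then monotonicity of $\tilde p$ gives $\tilde p(x_{\rho_1}(u))\geq\tilde p(x_{\rho_2}(u))$, so the integrand at $u$ is smaller for $\rho_1$. Chaining the two inequalities,
\[I(\rho_1)=\int_0^{U(\rho_1)}\frac{du}{\tilde p(x_{\rho_1}(u))\,e^u}\leq \int_0^{U(\rho_1)}\frac{du}{\tilde p(x_{\rho_2}(u))\,e^u}\leq \int_0^{U(\rho_2)}\frac{du}{\tilde p(x_{\rho_2}(u))\,e^u}=I(\rho_2),\]
and the lemma follows.

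The only delicate point I foresee is the validity of the substitution at $x=0$ in the case $\tilde p(0)=0$, where $u(\cdot)$ has vanishing derivative; however, $u$ is still a continuous bijection from $[0,1]$ onto $[0,U(\rho)]$, so the change of variables is unambiguous as an improper integral. The main conceptual step is finding the right change of variables; once that is in hand, the proof reduces to the two pointwise monotonicity observations above, and I do not anticipate further obstacles.
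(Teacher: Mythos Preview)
Your argument is correct and takes a genuinely different route from the paper. The paper defines the tail functionals
\[
R_j(z)=\int_z^1\frac{\Gamma}{\rho_j(x)}\exp\Bigl(-\Gamma\int_z^x\frac{\tilde p(y)}{\rho_j(y)}\,dy\Bigr)\,dx,
\]
derives $R_j'(z)=\frac{\Gamma}{\rho_j(z)}\bigl(\tilde p(z)R_j(z)-1\bigr)$, and runs a two-step comparison argument: first it shows $R_2(z)\leq 1$ for all $z$ (from the hypothesis $R_2(0)\leq 1$), and then argues by contradiction that $R_1(z)\leq R_2(z)$ for all $z$. Your substitution $u=\Gamma\int_0^x\tilde p/\rho$ collapses both occurrences of $\rho$ into a single monotone dependence and yields the \emph{unconditional} inequality $I(\rho_1)\leq I(\rho_2)$, which is strictly stronger than what the paper proves: the paper's ODE comparison genuinely uses the hypothesis $I(\rho_2)\leq 1$ (it is what forces $\tilde p(z)R_2(z)-1\leq 0$, without which the sign analysis of $R_1'-R_2'$ breaks down). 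Your approach is shorter and more transparent for this lemma; the paper's ODE viewpoint, on the other hand, dovetails with the differential-inequality machinery it uses repeatedly for the functions $\hat G_i(z,\Gamma)$ elsewhere in the analysis.
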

\begin{proof}
We defer the proof for this lemma to Appendix \ref{proof of rho1>rho2}.
\end{proof}

It follows from Lemma \ref{rho1>rho2} that $\hat{G}_i(0,\Gamma) \leq 1$ implies $G_i(0,\Gamma) \leq 1$. We now define a weaker version of $\Gamma-adverse$-ness.

\begin{definition}[$\Gamma-weakly-adverse$ instance, $\Gamma-weakly-adverse$ item]
For some $\Gamma \in (0,1)$, we call an instance of distributions $\{D_i\}$ with $n$ items $\Gamma-weakly-adverse$, if there exists at least one item $i \in [n]$ such that

$$\hat{G}_i(0,\Gamma) > 1.$$
Also, we call such an item $\Gamma -weakly-adverse$. Note that an item being $\Gamma-weakly-adverse$ or not is dependent on the distributions of the remaining items in the instance as well.
\end{definition}
 It is easy to see that every $\Gamma-adverse$ instance/item is $\Gamma-weakly-adverse$ as well.
\begin{restatable}{lemmma}{ordering}\label{Ordering on gamma}
For $\Gamma_1,\Gamma_2 \in (0,1)$ (with $\Gamma_1<\Gamma_2$), if an item $i$ in an instance of distributions $\{D_i\}$ is $\Gamma_1-weakly-adverse$, then item $i$ is also $\Gamma_2-weakly-adverse$.
\end{restatable}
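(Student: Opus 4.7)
The plan is to reduce Lemma~\ref{Ordering on gamma} to Lemma~\ref{rho1>rho2} by exhibiting two comparison functions whose corresponding integrals share a common value of $\Gamma$, thereby making the latter lemma applicable. The key tool is the scaling identity
\[
I(c\rho, \Gamma, \tilde{p}) \;=\; I(\rho, \Gamma/c, \tilde{p}), \quad \text{where } I(\rho,\Gamma,\tilde{p}) := \int_0^1 \frac{\Gamma}{\rho(x)\exp\!\left(\Gamma\int_0^x \tilde{p}(y)/\rho(y)\, dy\right)}\, dx,
\]
which is immediate by pulling $1/c$ through both appearances of $\rho$ in the integrand, together with the tautology $\hat{G}_i(0,\Gamma) = I(\hat{g}_i(\cdot,\Gamma),\Gamma,\tilde{p}_i)$.

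Before applying Lemma~\ref{rho1>rho2}, I would record the preliminary fact that $\phi(x) := -(1-x)\ln(1-x)(1-\tilde{p}_i(x))-x$ satisfies $-1 \le \phi \le 0$ on $[0,1]$; the upper bound follows from the tangent-line inequality $-(1-x)\ln(1-x) \le x$ for the convex function $u\ln u$ at $u=1$, and the lower bound is immediate. In particular, $\hat{g}_i(x,\Gamma) = 1 + \Gamma\phi(x) \in (0,1]$ for any $\Gamma \in (0,1)$, which will ensure the codomain constraint of Lemma~\ref{rho1>rho2}. Given $0<\Gamma_1<\Gamma_2<1$ with item $i$ being $\Gamma_1$-weakly-adverse, I would then define
\[
\rho_1(x) := \hat{g}_i(x,\Gamma_1) = 1 + \Gamma_1\phi(x), \qquad
\rho_2(x) := \tfrac{\Gamma_1}{\Gamma_2}\,\hat{g}_i(x,\Gamma_2) = \tfrac{\Gamma_1}{\Gamma_2} + \Gamma_1\phi(x).
\]
A one-line computation gives $\rho_1 - \rho_2 = 1 - \Gamma_1/\Gamma_2 > 0$, and both functions lie in $(0,1]$ by the bounds on $\phi$ together with $\Gamma_1/\Gamma_2 < 1$.

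Finally, invoking the scaling identity with $c = \Gamma_1/\Gamma_2$ yields $I(\rho_2,\Gamma_1,\tilde{p}_i) = I(\hat{g}_i(\cdot,\Gamma_2),\Gamma_2,\tilde{p}_i) = \hat{G}_i(0,\Gamma_2)$, while $I(\rho_1,\Gamma_1,\tilde{p}_i) = \hat{G}_i(0,\Gamma_1)$ holds by definition. Applying the contrapositive of Lemma~\ref{rho1>rho2} with common parameter $\Gamma_1$ and $\rho_1 \ge \rho_2 > 0$ then lifts the hypothesis $\hat{G}_i(0,\Gamma_1) > 1$ to the conclusion $\hat{G}_i(0,\Gamma_2) > 1$, establishing that item $i$ is $\Gamma_2$-weakly-adverse. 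The main subtlety is the choice of scaling constant $\Gamma_1/\Gamma_2$ in front of $\hat{g}_i(\cdot,\Gamma_2)$: this factor is precisely what is needed so that, after the scaling identity is invoked, both integrals live at the common parameter $\Gamma_1$ and Lemma~\ref{rho1>rho2} can be applied; without this alignment, $\hat{G}_i(0,\Gamma_1)$ and $\hat{G}_i(0,\Gamma_2)$ naturally carry distinct values of $\Gamma$ and are out of reach of the lemma as stated.
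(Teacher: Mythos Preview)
Your proposal is correct and follows essentially the same route as the paper. The paper rewrites $\hat{g}_i(x,\Gamma_2)=\tfrac{\Gamma_2}{\Gamma_1}\bigl[\hat{g}_i(x,\Gamma_1)+\tfrac{\Gamma_1}{\Gamma_2}-1\bigr]$ and then applies Lemma~\ref{rho1>rho2} with $\rho_1=\hat{g}_i(\cdot,\Gamma_1)$ and $\rho_2=\hat{g}_i(\cdot,\Gamma_1)+\tfrac{\Gamma_1}{\Gamma_2}-1$, proving the contrapositive of the lemma; your scaling identity and choice $\rho_2=\tfrac{\Gamma_1}{\Gamma_2}\hat{g}_i(\cdot,\Gamma_2)$ yield exactly the same $\rho_2$, and taking the contrapositive of Lemma~\ref{rho1>rho2} instead is a cosmetic reordering.
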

\begin{proof}
We defer the proof for this lemma to Appendix \ref{proof of rho1>rho2}.
\end{proof}

\subsection{Auxiliary Functions}

We now define some auxiliary functions that will be useful in the proof. Some of these definitions have been borrowed from \citet{PT}. For $z\in[0,1)$ and $\Gamma \in (0,1)$, let

\begin{itemize}
    \item $H(z, \Gamma) \overset{\underset{\mathrm{def}}{}}{=} \frac{\Gamma(-(1-z)\cdot\ln(1-z))}{\Gamma[-(1-z)\cdot\ln(1-z)-z]+1}$
    \item $K(z, \Gamma) \overset{\underset{\mathrm{def}}{}}{=} \frac{\Gamma \cdot(1-z)}{1-\Gamma \cdot z}$
    \item $M(z, \Gamma) \overset{\underset{\mathrm{def}}{}}{=} 1 - \int_0^z\frac{\Gamma}{\Gamma \cdot [-(1-x)\cdot\ln(1-x)-x]+1}dx$
\end{itemize}
We will use the shorthand notations $H_\Gamma(z)$, $K_\Gamma(z)$ and $M_\Gamma(z)$ to denote $H(z, \Gamma)$, $K(z, \Gamma)$ and $M(z, \Gamma)$. We also define the following values that are dependent on $\Gamma$.
\begin{itemize}
    \item $\beta_\Gamma\overset{\underset{\mathrm{def}}{}}{=}\inf\{z|M(z,\Gamma)\leq H(z,\Gamma), z\in[0,1)\}$
    \item $\gamma_\Gamma$ is defined as the unique root of $H(z,\Gamma)=K(z,\Gamma)$ on $z\in[0,1)$. We defer the proof for the uniqueness of this root to Appendix \ref{Uniqueness of gamma}.
\end{itemize} 
We now define a few more useful functions in terms of the auxiliary functions defined above.
\begin{itemize}
    \item $\mu_\Gamma(x)\overset{\underset{\mathrm{def}}{}}{=}[H_\Gamma(\gamma_\Gamma)-M_\Gamma(\gamma_\Gamma)]\cdot\frac{\Gamma[-(1-x)\cdot\ln(1-x)-x]+1}{\Gamma\cdot(M_\Gamma(x)-H_\Gamma(x))}  $
for $x\in[0,\beta_\Gamma)$
    \item $Y_\Gamma(z,p)\overset{\underset{\mathrm{def}}{}}{=}\left[\frac{\Gamma \cdot (1-K_\Gamma(z))}{1-\Gamma\cdot z}-\frac{\Gamma \cdot (1-p\cdot K_\Gamma(z))}{\Gamma\cdot [-(1-z)\cdot\ln(1-z)\cdot(1-p)-z]+1}\right]\cdot(1-\Gamma z)$
for $z\in[\gamma_\Gamma,1),\ p \in [0,1]$ 
    \item $W_\Gamma(z,p)\overset{\underset{\mathrm{def}}{}}{=}\frac{\Gamma}{\Gamma\cdot [-(1-z)\cdot\ln(1-z)-z]+1}-\frac{\Gamma \cdot (1-p\cdot M_\Gamma(z))}{\Gamma\cdot [-(1-z)\cdot\ln(1-z)\cdot(1-p)-z]+1}$
for $z\in[0,\beta_\Gamma],\ p \in [0,1]$
\end{itemize}

\subsection{Properties of $\Gamma-weakly-adverse$ instances and items}
In this subsection, we list out some properties that a $\Gamma-weakly-adverse$ item must satisfy, where $\Gamma \in (\Gamma_{PT},1)$, where $\Gamma_{PT}\approx0.7251$ defined as in Theorem \ref{Condition for Peng and Tang to be valid} is the competitive-ratio achieved by Peng and Tang's \citep{PT} algorithm. We defer the proofs for these properties to Appendix \ref{Proof of properties}.
\bigbreak
\begin{restatable}{lemmma}{Properties}\label{Properties}
For an instance of distributions $\{D_i\}$ with $n$ items and some $\Gamma \in (\Gamma_{PT},1)$, if the following inequality holds for some $i\in[n]$
$$\hat{G}_i(0,\Gamma) >1 $$
then the following properties are satisfied by the function $\tilde{p}_i(x):$
\bigbreak
\noindent \textbf{Property A:} For any $x\in [0,\beta_\Gamma)$ such that $x<1-\frac{1}{e}$,
$$\int_0^x \tilde{p}_i(z)dz \leq \mu_\Gamma(x).$$
\bigbreak
\noindent \textbf{Property B:} For any $x \in (\gamma_\Gamma,1]$,
$$\int_{\gamma_\Gamma}^{x}Y_\Gamma(z,\tilde{p}_i(x))\cdot dz  < \Gamma\cdot (1-\beta_\Gamma)-H_\Gamma(\beta_\Gamma)\cdot(1-\Gamma\cdot\beta_\Gamma).$$
\bigbreak
% \noindent \textbf{Property B:} For any $\gamma_\Gamma = x_0 \leq x_1 \leq x_2 \leq ... \leq x_n <1$,
% $$\sum_{j=1}^n\int_{x_{j-1}}^{x_{j}}Y_\Gamma(z,\tilde{p}_i(x_j))\cdot dz  < \Gamma\cdot (1-\beta_\Gamma)-H_\Gamma(\beta_\Gamma)\cdot(1-\Gamma\cdot\beta_\Gamma)$$
% \bigbreak
\noindent \textbf{Property C:} For any $x \in [0,\beta_\Gamma)$,
$$\int_{x}^{\beta_\Gamma}W_\Gamma(z,\tilde{p}_i(x))\cdot dz  < H_\Gamma(\gamma_\Gamma)-M_\Gamma(\gamma_\Gamma).$$
\bigbreak
% \noindent \textbf{Property C:} For any $0\leq x_1 \leq x_2 \leq ... \leq x_n \leq  x_{n+1}= \beta_\Gamma$,
% $$\sum_{j=1}^n\int_{x_j}^{x_{j+1}}W_\Gamma(z,\tilde{p}_i(x_j))\cdot dz  < H_\Gamma(\gamma_\Gamma)-M_\Gamma(\gamma_\Gamma)$$
% \bigbreak
% We also claim the following:
% \begin{itemize}
%     \item $Y_\Gamma(z,p)$ and $W_\Gamma(z,p)$ are non-negative in their respective domains.
%     % \item $K_\Gamma(z)-M_\Gamma(z)$ is an increasing function of $z$ for $0<z<\gamma_\Gamma$. 
%     % \item $H_\Gamma(z)$ is an increasing function of $z$ for $0<z<\gamma_\Gamma$. 
%     \item $Y_\Gamma(z,p)$ is non-increasing with respect to $p$ for a fixed $z \in [\gamma_\Gamma,1)$ and $W_\Gamma(z,p)$ is non-decreasing with respect to $p$ for a fixed $z \in [0,\beta_\Gamma]$
% \end{itemize}
\end{restatable}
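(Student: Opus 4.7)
The plan is to convert the integral inequality $\hat{G}_i(0,\Gamma) > 1$ into the three pointwise bounds on $\tilde{p}_i$ by replacing $\hat{g}_i$ with piecewise explicit lower bounds and applying Lemma~\ref{rho1>rho2}. I would prove the three properties separately via contradiction, exploiting the fact that the extremal cases in Peng and Tang's analysis are tight only at $\Gamma = \Gamma_{PT}$, leaving strict slack for every $\Gamma > \Gamma_{PT}$.

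The common setup is as follows. The function $\hat{g}_i(x,\Gamma) = \Gamma[-(1-x)\ln(1-x)(1-\tilde{p}_i(x)) - x] + 1$ is pointwise controlled from below by its value at $\tilde{p}_i(x) = 1$, namely $1 - \Gamma x$, and from above by its value at $\tilde{p}_i(x) = 0$, namely $\Gamma[-(1-x)\ln(1-x) - x] + 1$. For a candidate violation of Property A, B, or C, I would split $[0,1]$ at the critical thresholds $\beta_\Gamma$ and $\gamma_\Gamma$, and on each piece replace $\hat{g}_i$ by a suitable combination of these bounds. Call the resulting piecewise function $\rho(x)$. Lemma~\ref{rho1>rho2} then guarantees that the modified integral $\int_0^1 \Gamma / \bigl(\rho(x)\exp(\Gamma \int_0^x \tilde{p}_i(y)/\rho(y)\,dy)\bigr)\,dx$ still exceeds $1$. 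The goal is to choose $\rho$ so that the modified integral evaluates in closed form in terms of $H_\Gamma$, $K_\Gamma$, and $M_\Gamma$, thereby forcing the sought bound on $\tilde{p}_i$.

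For Property A, assume toward contradiction that $\int_0^x \tilde{p}_i(z)\,dz > \mu_\Gamma(x)$ at some $x \in [0,\beta_\Gamma)$ with $x < 1-1/e$. I would pin $\tilde{p}_i$ to its worst-case profile on $[x,1]$ (for example, to $0$ on $[x,\gamma_\Gamma]$ and to an appropriate step function on $[\gamma_\Gamma,1]$) so that the contribution to the integral over $[x,1]$ is expressible via $M_\Gamma(x) - H_\Gamma(x)$ and $H_\Gamma(\gamma_\Gamma) - M_\Gamma(\gamma_\Gamma)$. The definition of $\mu_\Gamma(x)$ is calibrated precisely so that the combined integral equals $1$, contradicting $\hat{G}_i(0,\Gamma) > 1$. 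Properties B and C would follow the same template with different interval splits, where $Y_\Gamma$ and $W_\Gamma$ record the "excess" contribution to the integral from the sub-interval $[\gamma_\Gamma, x]$ and $[x, \beta_\Gamma]$ respectively. The right-hand sides of B and C are then exactly the maximum such excesses compatible with $\hat{G}_i(0,\Gamma) > 1$, given the slack created by $\Gamma > \Gamma_{PT}$.

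The main obstacle is the design of the comparison function $\rho$ for each property: it must simultaneously (i) pointwise lower-bound $\hat{g}_i$ (which itself depends on $\tilde{p}_i$), (ii) preserve enough structure for Lemma~\ref{rho1>rho2} to yield a tight conclusion, and (iii) integrate to a closed form producing the precise algebraic shape of $\mu_\Gamma$, $Y_\Gamma$, and $W_\Gamma$. The calibration hinges on the defining identities of $\beta_\Gamma$ (the first point where $M_\Gamma = H_\Gamma$) and $\gamma_\Gamma$ (the unique root of $H_\Gamma = K_\Gamma$), together with the crucial slack produced by $\Gamma > \Gamma_{PT}$. A secondary technical issue is establishing the strict inequalities in Properties B and C; this would be handled by ruling out degenerate extremal profiles of $\tilde{p}_i$ via a continuity or compactness argument on the family of feasible candidate functions.
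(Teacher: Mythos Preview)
Your plan has a genuine gap. Lemma~\ref{rho1>rho2} only lets you swap the denominator function $\rho$ while keeping $\tilde p$ fixed; after replacing $\hat g_i$ by a piecewise lower bound $\rho$, the modified integral
\[
\int_0^1 \frac{\Gamma}{\rho(x)\exp\!\bigl(\Gamma\int_0^x \tilde p_i(y)/\rho(y)\,dy\bigr)}\,dx
\]
still depends on the unknown $\tilde p_i$ through the exponent and is \emph{not} expressible in closed form via $H_\Gamma,K_\Gamma,M_\Gamma$. The step where you ``pin $\tilde p_i$ to its worst-case profile on $[x,1]$'' is precisely where the argument breaks: nothing in the paper (and nothing easy to prove) lets you replace $\tilde p_i$ by a step profile while controlling the direction in which the integral moves. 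The dependence of $\hat G_i(0,\Gamma)$ on $\tilde p_i$ is not monotone, since $\tilde p_i$ enters both $\hat g_i$ (in the denominator) and the exponent, with opposing effects.

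The paper's proof avoids this obstacle by differentiating $\hat G_\Gamma(z):=\hat G_i(z,\Gamma)$ in the lower limit $z$ to obtain the scalar ODE
\[
\hat G_\Gamma'(z)=\frac{\Gamma\bigl(\tilde p_i(z)\,\hat G_\Gamma(z)-1\bigr)}{\hat g_i(z,\Gamma)}.
\]
At each fixed $z$ the right-hand side is a linear-fractional function of the single number $\tilde p_i(z)\in[0,1]$, so its minimum is attained at an endpoint; this yields two pointwise differential inequalities, one active when $\hat G_\Gamma(z)\le H_\Gamma(z)$ and the other when $\hat G_\Gamma(z)>H_\Gamma(z)$. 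The paper then locates the crossing $z_0=\inf\{z:\hat G_\Gamma(z)\le H_\Gamma(z)\}$, proves $\beta_\Gamma<z_0\le\gamma_\Gamma$, and integrates these differential inequalities over $[0,\beta_\Gamma]$, $[\beta_\Gamma,z_0]$ and $[z_2,1]$. The functions $W_\Gamma$ and $Y_\Gamma$ appear as the residual terms in those integrated inequalities (not as integrands of a modified $\hat G$-type integral), and Properties A--C then follow from the monotonicity of $W_\Gamma,Y_\Gamma$ in $p$ combined with the monotonicity of $\tilde p_i$. Lemma~\ref{rho1>rho2} is not used anywhere in this argument.
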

We shall also use that $Y_\Gamma(z,p)$ and $W_\Gamma(z,p)$ are non-negative in their respective domains, and that $Y_\Gamma(z,p)$ is non-increasing with respect to $p$ for a fixed $z \in [\gamma_\Gamma,1)$, while $W_\Gamma(z,p)$ is non-decreasing with respect to $p$ for a fixed $z \in [0,\beta_\Gamma]$.
The proofs for these claims are also contained in the proof of Lemma \ref{Properties} (in Appendix \ref{Proof of properties}).

\subsection{A bound on the weak-adverseness of the remaining items in an instance when one item is $\Gamma^*-weakly-adverse$}
In this subsection, we prove the following lemma. 

\begin{lemmma}\label{Simultaneous hardness}
If for an instance of distributions $\{D_i\}$ with $n$ items, there exists $i\in[n]$ such that item $i$ is $\Gamma^*-weakly-adverse$, then no other item in the instance is $\Gamma'-weakly-adverse$, where $\Gamma^*=0.7258$ and $\Gamma'=0.7276$. 
\end{lemmma}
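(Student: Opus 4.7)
The plan is to argue by contradiction. Suppose item $i$ is $\Gamma^*$-weakly-adverse and some other item $j \neq i$ is $\Gamma'$-weakly-adverse, and derive inconsistent constraints on the pair $(\tilde{p}_i, \tilde{p}_j)$.

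First I would invoke Lemma \ref{Properties} twice: once for item $i$ at $\Gamma = \Gamma^*$ and once for item $j$ at $\Gamma = \Gamma'$ (both parameters exceed $\Gamma_{PT}$, so the hypotheses are met). This yields Properties A, B, C for $\tilde{p}_i$ at $\Gamma^*$ and the analogous properties for $\tilde{p}_j$ at $\Gamma'$. Since Lemma \ref{Ordering on gamma} makes weak-adverseness monotone in $\Gamma$, item $i$ is automatically $\Gamma'$-weakly-adverse, so I also obtain the Properties for $\tilde{p}_i$ at $\Gamma'$ for free.

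Next I would couple the two items through independence of the $v_k$'s: the identity $(1-p_i(t))(1-p_j(t)) \geq \prod_{k}(1-p_k(t)) = 1-t$ gives $p_i(t) + p_j(t) - p_i(t) p_j(t) \leq t$. Translating via $\tilde{p}_i(x) = 1 - (1-q_i^{-1}(x))/(1-x)$ and the analogous identity for $j$, this becomes a pointwise coupling: wherever $\tilde{p}_i$ is forced large, $\tilde{p}_j$ must be small on the corresponding range of $q_j^{-1}$, and the upper bounds on $\int_0^x \tilde{p}_i(z)\,dz$ supplied by Property A translate directly into lower bounds on quantities governing $\tilde{p}_j$.

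The main step is then a numerical compatibility check. Combining the Properties on $\tilde{p}_i$ at $\Gamma^*$ with the coupling and the Properties on $\tilde{p}_j$ at $\Gamma'$, and using that $Y_\Gamma(z,p)$ is nonincreasing in $p$ while $W_\Gamma(z,p)$ is nondecreasing in $p$, yields pointwise bounds on $\tilde{p}_j$. Plugging these into the defining integral for $\hat{G}_j(0, \Gamma')$ should force $\hat{G}_j(0, \Gamma') \leq 1$, contradicting the assumption that $j$ is $\Gamma'$-weakly-adverse. I would split the verification across $[0,\beta_{\Gamma^*}]$ (where Property A dominates), $[\gamma_{\Gamma^*}, 1]$ (where Properties B and C dominate), and the intermediate region $[\beta_{\Gamma^*}, \gamma_{\Gamma^*}]$.

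The hard part will be quantitative rather than conceptual. The gap $\Gamma' - \Gamma^* = 0.0018$ is intentionally narrow, so the slack between the two parameter settings is thin and the numerical constants must be tracked with care. I expect the argument to reduce ultimately to a small number of single-variable inequalities verifiable on compact subintervals of $[0,1]$ at the specific values $\Gamma^* = 0.7258$ and $\Gamma' = 0.7276$, with the particular shapes of $H_\Gamma, K_\Gamma, M_\Gamma, \mu_\Gamma, Y_\Gamma, W_\Gamma$ near these parameters supplying the necessary rigidity.
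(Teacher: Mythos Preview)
Your high-level structure --- argue by contradiction, invoke Lemma~\ref{Properties} for both items, and couple the two items through the product formula --- matches the paper. The concrete gap is in the coupling step.

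The inequality $(1-p_i(t))(1-p_j(t)) \geq 1-t$ that you write down is correct but only yields an \emph{upper} bound on $p_j(t)$ from a lower bound on $p_i(t)$ at the \emph{same} time~$t$; and when $p_i(t)$ is small it says nothing about $p_j(t)$ at all. So your intended deduction ``Property~A bounds $\int\tilde{p}_i$ from above $\Rightarrow$ lower bounds on quantities governing $\tilde{p}_j$'' does not go through. The paper's contradiction in fact requires a \emph{lower} bound on $p_j$ at an \emph{earlier} time, and the missing idea is that the ratio
\[
\frac{(1-p_i(t))(1-p_j(t))}{1-t}=\frac{1}{\mathbb{P}[\max_{k\neq i,j} v_k\leq\tau(t)]}
\]
is \emph{non-decreasing} in $t$, which allows comparison across two different times.

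The paper's actual route is: Property~B on item~$i$ at $\Gamma^*$ (with $x=0.9$) and on item~$j$ at $\Gamma'$ (with $x=0.947$), together with the monotonicity of $Y_\Gamma(z,p)$ in $p$, force $\tilde{p}_i(0.9)>0.9$ and $\tilde{p}_j(0.947)>0.811$; via Lemma~\ref{p_i(x)>y} this gives $p_i(0.99)>0.9$ and $p_j(0.99)>0.811$. Property~C on item~$i$ (with $x=0.67$) and monotonicity of $W_\Gamma(z,p)$ in $p$ give $\tilde{p}_i(0.67)<0.2$, hence $p_i(0.736)<0.2$. Now the monotonicity of the ratio above between $t=0.736$ and $t=0.99$ forces $p_j(0.736)>0.3763$, which translates back (Lemma~\ref{p_i(x)>y}) to $\tilde{p}_j(0.58)>0.3763$. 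This single pointwise lower bound already violates Property~C for item~$j$ at $\Gamma'$: numerically $\int_{0.58}^{\beta_{\Gamma'}} W_{\Gamma'}(z,0.3763)\,dz\approx 0.0096$, while the right-hand side $H_{\Gamma'}(\gamma_{\Gamma'})-M_{\Gamma'}(\gamma_{\Gamma'})$ is below $0.00555$.

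Two further remarks: Property~A is not used here at all, and the endgame is not to bound $\hat{G}_j(0,\Gamma')$ directly but to exhibit one point at which Property~C for item~$j$ fails.
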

 Before we jump into the proof for Lemma \ref{Simultaneous hardness}, we prove a useful result.
\begin{lemmma}\label{p_i(x)>y}
% For any 2 items $i$ and $j$ in an instance of distributions $\{D_i\}$, the following is an non-decreasing function of $t$ on $[0,1)$.
% $$\frac{1-p_i(t)}{1-q_j(t)}$$
For $i\in[n]$, $\tilde{p}_i(x)>y$
holds for some $x,y \in [0,1]$ if and only if
$p_i(1-(1-x)\cdot(1-y)) > y$.
\begin{proof}
Assume that $p_i(x)>y$ for some $x,y\in[0,1]$. Since $1-t=(1-p_i(t))(1-q_i(t))$ for all $t \in [0,1]$, we have that 
$1-q_i^{-1}(x)=(1-\tilde{p}_i(x))\cdot(1-x)$.
Using $\tilde{p}_i(x)>y$, we get $q_i^{-1}(x) > 1-(1-x)\cdot (1-y)$.
Since $q_i(t)$ is increasing on $t\in[0,1]$, we now have
$$x > q_i(1-(1-x)\cdot(1-y)).$$
From $1-t=(1-p_i(t))(1-q_i(t))$, we also have
$$1-[1-(1-x)\cdot(1-y)]=(1-p_i(1-(1-x)\cdot(1-y)))(1-q_i(1-(1-x)\cdot(1-y))).$$
Since $q_i(1-(1-x)\cdot(1-y)) < x$, we obtain
$$p_i(1-(1-x)\cdot(1-y)) > y.$$
The proof for the converse can be obtained by inverting the signs of the inequalities in this proof.

\end{proof}
\end{lemmma}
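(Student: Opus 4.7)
The plan is to exploit the single algebraic identity $(1-p_i(t))(1-q_i(t))=1-t$, which follows from the independence of the $v_j$'s together with the definitions $p_i(t)=\mathbb{P}[v_i>\tau(t)]$, $q_i(t)=\mathbb{P}[\max_{j\neq i}v_j>\tau(t)]$, and $\mathbb{P}[\max_j v_j>\tau(t)]=t$. Since everything in the statement is a chain of strict inequalities between monotone functions, I expect each manipulation to be reversible, so I will prove the forward direction and observe that the reverse direction is obtained by reading the chain backwards.

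First I would set $t=q_i^{-1}(x)$ so that $q_i(t)=x$ and the hypothesis $\tilde{p}_i(x)>y$ becomes $p_i(t)>y$, i.e.\ $1-p_i(t)<1-y$. Plugging into the identity gives $1-t=(1-p_i(t))(1-x)<(1-x)(1-y)$, which rearranges to $t>1-(1-x)(1-y)$. Next I would let $s=1-(1-x)(1-y)$ and apply the same identity at $s$: $(1-p_i(s))(1-q_i(s))=1-s=(1-x)(1-y)$. Since $q_i$ is strictly increasing and $q_i(t)=x$ with $t>s$, we get $q_i(s)<x$, i.e.\ $1-q_i(s)>1-x$; substituting into the identity at $s$ forces $1-p_i(s)<1-y$, which is exactly $p_i(1-(1-x)(1-y))>y$.

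For the converse, one starts with $p_i(s)>y$ at $s=1-(1-x)(1-y)$, uses the identity at $s$ to conclude $1-q_i(s)>1-x$ and hence $q_i(s)<x$, so by monotonicity of $q_i$ we have $s<q_i^{-1}(x)=:t$. Then the identity at $t$ gives $1-t=(1-p_i(t))(1-x)$, and combining $1-t<1-s=(1-x)(1-y)$ with $1-x>0$ yields $1-p_i(t)<1-y$, i.e.\ $\tilde{p}_i(x)=p_i(t)>y$.

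There is no real obstacle here: the proof is a short bookkeeping argument that uses only the definition of $\tilde{p}_i$, the factorization $(1-p_i(t))(1-q_i(t))=1-t$, and the monotonicity of $q_i$. The only thing to be mindful of is the assumed regularity from Section~2.2 (strict monotonicity and surjectivity of $q_i$, so that $q_i^{-1}$ is well defined on $[0,1]$) and keeping the inequality directions aligned when multiplying by the positive factors $1-x$ and $1-y$; the corner cases $x=1$ or $y=1$ make the statement vacuous and can be dismissed in one line.
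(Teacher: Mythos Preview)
Your argument is correct and follows essentially the same approach as the paper: both proofs set $t=q_i^{-1}(x)$, use the factorization $(1-p_i(t))(1-q_i(t))=1-t$ to deduce $t>1-(1-x)(1-y)=:s$, invoke the monotonicity of $q_i$ to get $q_i(s)<x$, and then apply the same identity at $s$ to conclude $p_i(s)>y$. The only cosmetic difference is that you spell out the converse explicitly, whereas the paper simply says the converse is obtained by inverting the inequality signs.
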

% \begin{proof}
% $$\frac{1-p_i(t)}{1-q_j(t)}=\frac{1-p_i(t)}{\mathbb{P}[\max_{k \neq i}v_k \leq \tau(t)]}=\frac{1-p_i(t)}{\mathbb{P}[\max_{k \neq i,j}v_k \leq \tau(t)]\cdot \mathbb{P}[v_i \leq \tau(t)]}=\frac{1}{\mathbb{P}[\max_{k \neq i,j}v_k \leq \tau(t)]}$$
% Since $\tau(t)$ is a non-increasing function of $t$, we have that the RHS is a non-decreasing function of $t$ on $[0,1)$.
% \end{proof}
 We now prove Lemma \ref{Simultaneous hardness}.
\begin{proof}
We have $\Gamma^*=0.7258$ and $\Gamma'=0.7276$. Numerically, we obtain the following values:
$$0.7879 \leq \beta_{\Gamma^*} \leq 0.7880 \textrm{\ \ \ \ \  \ \ \ } 0.7850 \leq \beta_{\Gamma'} \leq 0.7851$$
$$0.7893 \leq \gamma_{\Gamma^*} \leq 0.7894 \textrm{\ \ \ \ \  \ \ \ } 0.7900 \leq \gamma_{\Gamma'} \leq 0.7901$$
We also numerically obtain the following: 
$$H_{\Gamma^*}(\gamma_{\Gamma^*})-M_{\Gamma^*}(\gamma_{\Gamma^*}) < 0.00163 \ \ \ \ \ \ \ \ \ 
H_{\Gamma'}(\gamma_{\Gamma'})-M_{\Gamma'}(\gamma_{\Gamma'}) <0.00555 $$
$$\Gamma^*\cdot (1-\beta_{\Gamma^*})-H_{\Gamma^*}(\beta_{\Gamma^*})\cdot(1-\Gamma^*\cdot\beta_{\Gamma^*}) < 0.00068 \ \ \ \ \ \ \ \ 
\Gamma'\cdot (1-\beta_{\Gamma'})-H_{\Gamma'}(\beta_{\Gamma'})\cdot(1-\Gamma'\cdot\beta_{\Gamma'}) < 0.00237$$
Now suppose, for the sake of contradiction, that in an instance of distributions $\{D_i\}$ item $i$ is $\Gamma^*-weakly-adverse$ and item $j$ is $\Gamma'-weakly-adverse$. Applying \textbf{Property B} of Lemma \ref{Properties} on item $i$ with $x=0.9$, we get
$$\int_{\gamma_{\Gamma^*}}^{0.9} Y_{\Gamma^*}(z,\tilde{p}_i(0.9))\cdot dz< \Gamma^*\cdot (1-\beta_{\Gamma^*})-H_{\Gamma^*}(\beta_{\Gamma^*})\cdot(1-\Gamma^*\cdot\beta_{\Gamma^*})<0.00068.$$
Since $Y_{\Gamma^*}(z,p)$ is non-negative and $\gamma_{\Gamma^*} \leq 0.7894$, we have
$$\int_{0.7894}^{0.9} Y_{\Gamma^*}(z,\tilde{p}_i(0.9))\cdot dz<0.00068.$$
We now show that $\tilde{p}_i(0.9)>0.9$. We numerically obtain the following result,
$$\int_{0.7894}^{0.9} Y_{\Gamma^*}(z,0.9)\cdot dz\approx 0.000693>0.00068.$$
Since $Y_{\Gamma^*}(z,p)$ is non-increasing with respect to $p$ for a fixed $z \in [\gamma_\Gamma,1)$, we must have $\tilde{p}_i(0.9)>0.9$. We do a similar analysis for item $j$. Since $j$ is $\Gamma'-weakly-adverse$, we must have (putting $x=0.947$ in \textbf{Property B} of Lemma \ref{Properties})
$$\int_{\gamma_{\Gamma'}}^{0.947} Y_{\Gamma'}(z,\tilde{p}_j(0.947))\cdot dz< \Gamma'\cdot (1-\beta_{\Gamma'})-H_{\Gamma'}(\beta_{\Gamma'})\cdot(1-\Gamma'\cdot\beta_{\Gamma'}) < 0.00237.$$
Since $Y_{\Gamma'}(z,p)$ is non-negative and $\gamma_{\Gamma'} \leq 0.7901$, we must have
$$\int_{0.7901}^{0.947} Y_{\Gamma'}(z,\tilde{p}_j(0.947))\cdot dz<0.00237.$$
We numerically obtain
$$\int_{0.7901}^{0.947} Y_{\Gamma'}(z,0.811)\cdot dz\approx 0.002384 >0.00237.$$
From here, we conclude that $\tilde{p}_j(0.947)>0.811$.
We now do a similar analysis using the $W_\Gamma(z,p)$ function. Applying \textbf{Property C} of Lemma \ref{Properties} on item $i$, we obtain (putting $x=0.67$)
$$\int_{0.67}^{\beta_{\Gamma^*}} W_{\Gamma^*}(z,\tilde{p}_i(0.67))\cdot dz< H_{\Gamma^*}(\beta_{\Gamma^*})-M_{\Gamma^*}(\beta_{\Gamma^*})<0.00163.$$
Also, we numerically obtain the following result,
$$\int_{0.67}^{0.7879} W_{\Gamma^*}(z,0.2)\cdot dz \approx0.00165 >0.00163.$$
Since  $W_{\Gamma}(z,p)$ is non-decreasing with respect to $p$ at a fixed $z \in [0,\beta_\Gamma]$, we have that $\tilde{p}_i(0.67)<0.2$. We now have the following bounds:
$$\tilde{p}_i(0.9)>0.9 {\ \ \ \ \ } \tilde{p}_j(0.947)>0.811 {\ \ \ \ \ }\tilde{p}_i(0.67)<0.2 $$
We now use the above results to derive a contradiction. Directly applying Lemma \ref{p_i(x)>y} on the above inequalities, we obtain the following results:
$$p_i(0.99)>0.9 {\ \ \ \ \ } p_j(0.989983)>0.811 {\ \ \ \ \ } p_i(0.736)<0.2$$
Since $p_j(t)$ is an increasing function, we have:
$$p_j(0.99)>0.811.$$
Observe that $\frac{(1-p_i(t))\cdot(1-p_j(t))}{1-t}$ is a non-decreasing function of $t$ because 
$\frac{(1-p_i(t))\cdot(1-p_j(t))}{1-t}=\frac{1}{\mathbb{P}[\max_{k\neq i,j }v_k \leq \tau(t)]}$.
Hence, we have the following:
$$\frac{(1-p_i(0.99))\cdot(1-p_j(0.99))}{1-0.99}\geq \frac{(1-p_i(0.736))\cdot(1-p_j(0.736))}{1-0.736}.$$
Using the bounds computed above, we obtain from the above inequality:
$$p_j(0.736)>0.3763.$$
Note that the following statement follows from Lemma \ref{p_i(x)>y}:
$$p_i(t)>y\textrm{\ \ \ if and only if\ \ \ } \tilde{p}_i\left(1-\frac{1-t}{1-y}\right)>y$$
for $y,t \in [0,1)$ and $y < t$.
Applying this to $p_j(0.736)>0.3763$, we get that  $\tilde{p}_j(0.58)>0.3763$. Note that we have safely rounded up the argument of $\tilde{p}_j(x)$ in this obtained inequality, since $\tilde{p}_j(x)$ is non-decreasing.
Finally, we test \textbf{Property C} of Lemma \ref{Properties} for item $j$, by putting $x=0.58$. Since we have assumed item $j$ to be $\Gamma'-weakly-adverse$, we must have
$$\int_{0.58}^{\beta_{\Gamma'}}W_{\Gamma'}(z,\tilde{p}_i(0.58))\cdot dz < H_{\Gamma'}(\beta_{\Gamma'})-M_{\Gamma'}(\beta_{\Gamma'})< 0.00555.$$
Since $W_{\Gamma}(z,p)$ is non-negative and non-decreasing with respect to $p$, and $\beta_{\Gamma'}\geq 0.7850$, we have
$$\int_{0.58}^{0.7850}W_{\Gamma'}(z,0.3763)\cdot dz \leq \int_{0.58}^{\beta_{\Gamma'}}W_{\Gamma'}(z,\tilde{p}_i(0.58))\cdot dz < 0.00555.$$
The LHS of the above inequality evaluates to approximately $0.0096$, which is a contradiction. From here we conclude that there can not exist two items $i$ and $j$ in an instance such that item $i$ is $0.7258-weakly-adverse$ and item $j$ is $0.7276-weakly-adverse$.
\end{proof}

\subsection{Wrapping up}
We are now all set to prove Theorem \ref{Correctness of 2-scheme ALG}. Recall the statement of the theorem:
\correctness*
\begin{proof}
We begin with the assumption that item $1$ is $\Gamma^*-adverse$. Hence, it is also $\Gamma^*-weakly-adverse$ and satisfies the properties from Lemma \ref{Properties}. Since $c=0.28<\beta_{\Gamma^*}\approx0.7876$ and $c<1-\frac{1}{e}$, we can apply \textbf{Property A} to obtain
$$\int_0^x \tilde{p}_1(z)dz \leq \mu_{\Gamma^*}(x)$$
for all $x \in [0,c-\epsilon)$. From Lemma \ref{Bound on g1 tilde}, we now have for all $x \in [0,c)$,
$$\tilde{\bar{g}}_1(x,\Gamma^*)  \geq 1-\Gamma^*\cdot\int_0^x\tilde{p}_1(s)\cdot ds - \Gamma^* \cdot \epsilon\geq 1-\Gamma^*\cdot \mu_{\Gamma^*}(x)- \Gamma^* \cdot \epsilon$$
and for all $x \in [c,1)$,
$$\tilde{\bar{g}}_1(x,\Gamma^*) \geq \Gamma^*\cdot [-(1-x)\cdot \ln(1-x)\cdot (1-\tilde{p}_1(x))-x]+1. $$
\bigbreak
\begin{restatable}{lemmma}{Gmyu}\label{Bound on G as a function of myu}
Suppose that item 1 is $\Gamma-weakly-adverse$, and for some $\Gamma \in (0,1)$, $c\in[0,\beta_\Gamma)$ and small $\epsilon>0$, the functions $\tilde{\bar{g}},\eta:[0,1]\rightarrow[0,1]$ satisfy
$$\tilde{\bar{g}}(x)\geq \eta(x) > 0 $$
for $x\in[0,c-\epsilon)$ and 
$$\tilde{\bar{g}}(x) \geq \Gamma^\cdot [-(1-x)\cdot \ln(1-x)\cdot (1-\tilde{p}_1(x))-x]+1 $$
for $x\in[c,1]$, then
$$\int_0^1 \frac{\Gamma}{\tilde{\bar{g}}(x)\exp\left(\Gamma \cdot \int_0^x\frac{\tilde{p}_1(y)}{\tilde{\bar{g}}(y)}dy\right)}dx \leq K_\Gamma(\gamma_\Gamma)+M_\Gamma(c)-M_\Gamma(\gamma_\Gamma)+\int_0^{c} \frac{\Gamma}{\eta(x)}\cdot dx+O(\epsilon).$$

\end{restatable}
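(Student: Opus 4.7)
The plan is to decompose the integral $\int_0^1$ at the two cutoffs $c-\epsilon$ and $c$, writing $I=I_1+I_2+I_3$ for the contributions over $[0,c-\epsilon)$, $[c-\epsilon,c)$ and $[c,1]$ respectively. These three pieces should correspond to the three main terms on the right-hand side: $\int_0^c \Gamma/\eta(x)\,dx$, the $O(\epsilon)$ error, and $K_\Gamma(\gamma_\Gamma)+M_\Gamma(c)-M_\Gamma(\gamma_\Gamma)$.

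For $I_1$, observe that the integrand in the exponential is non-negative, so $\exp\bigl(\Gamma\int_0^x \tilde{p}_1/\tilde{\bar{g}}\bigr)\geq 1$; combining with the hypothesis $\tilde{\bar{g}}\geq\eta$ on $[0,c-\epsilon)$ yields $I_1\leq \int_0^{c-\epsilon}\Gamma/\eta(x)\,dx\leq\int_0^{c}\Gamma/\eta(x)\,dx$ immediately. For $I_2$, the second hypothesis evaluated at $x=c$ gives $\tilde{\bar{g}}(c)\geq 1-\Gamma c$, a positive constant; invoking continuity of $\tilde{\bar{g}}$ (as holds for $\tilde{\bar{g}}_1$ in the application) this lower bound extends to a neighborhood of $c$, so the integrand is uniformly bounded on $[c-\epsilon,c)$ and $I_2=O(\epsilon)$.

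For $I_3$, where the main work lies, I would first use $\tilde{\bar{g}}\geq\hat{g}_1(\cdot,\Gamma)$ on $[c,1]$ together with the quantitative monotonicity content underlying Lemma~\ref{rho1>rho2} to replace $\tilde{\bar{g}}$ by the smaller $\hat{g}_1$, absorbing any boundary mismatch at $x=c$ into an $O(\epsilon)$ term. I would then split the resulting integral at $\gamma_\Gamma$ and exploit the properties of $\tilde{p}_1$ from Lemma~\ref{Properties}. On $[c,\gamma_\Gamma]$, Property C constrains $\tilde{p}_1$ to be essentially zero, so $\hat{g}_1(x)$ collapses to the $M_\Gamma$-integrand and the exponential factor stays within $o(1)$ of $1$, delivering $M_\Gamma(c)-M_\Gamma(\gamma_\Gamma)$. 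On $[\gamma_\Gamma,1]$, Property B forces $\tilde{p}_1$ close to $1$, so $\hat{g}_1(x)\approx 1-\Gamma x$ and the ODE $E'/E=\Gamma/(1-\Gamma x)$ integrates explicitly to $E(x)=E(\gamma_\Gamma)(1-\Gamma\gamma_\Gamma)/(1-\Gamma x)$; the resulting integral evaluates to exactly $K_\Gamma(\gamma_\Gamma)/E(\gamma_\Gamma)\leq K_\Gamma(\gamma_\Gamma)$, since $E(\gamma_\Gamma)\geq 1$.

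The main obstacle is the $I_3$ analysis. The integrand couples $\hat{g}_1(x)$ (which shrinks as $\tilde{p}_1$ grows) with $E(x)$ (which grows as $\tilde{p}_1$ accumulates), and the two effects compete. To derive the claimed bound one must argue that the extremal step configuration $\tilde{p}_1\equiv 0$ on $[c,\gamma_\Gamma]$ and $\tilde{p}_1\equiv 1$ on $[\gamma_\Gamma,1]$ realizes the worst case; this is precisely where Properties A--C of Lemma~\ref{Properties} come in, quantifying how far $\tilde{p}_1$ is forced to lie from this step function on each subinterval and translating those deviations into explicit integral estimates. A secondary subtlety is the narrow gap $[\beta_\Gamma,\gamma_\Gamma]$ where Properties A and C lose direct applicability; this gap should contribute only $O(\epsilon)$-type corrections in the limit.
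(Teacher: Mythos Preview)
Your bounds for $I_1$ and $I_2$ are fine, but the $I_3$ argument has a genuine gap. Properties B and C of Lemma~\ref{Properties} do not force $\tilde{p}_1$ close to the step function jumping at $\gamma_\Gamma$; they are integral inequalities with right-hand sides on the order of $10^{-3}$, which translate only to loose pointwise bounds (the paper itself extracts merely $\tilde{p}_i(0.67)<0.2$ from Property~C, for instance). Your claim that on $[c,\gamma_\Gamma]$ ``the exponential factor stays within $o(1)$ of $1$'' therefore fails, and so does the collapse of $\hat{g}_1$ to the $M_\Gamma$-integrand. More fundamentally, the target bound must hold uniformly over all admissible $\tilde{p}_1$, not just for the step profile, and nothing in your plan establishes that the step profile is extremal for $I_3$. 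Lemma~\ref{rho1>rho2} does not help either: its conclusion is only of the form ``$\leq 1$'', strictly weaker than the specific numerical bound required here.

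The paper's route is different and does not invoke Lemma~\ref{Properties} at all. It sets $\bar{G}_\Gamma(z)=\int_z^1 \frac{\Gamma}{\tilde{\bar{g}}(x)\exp(\Gamma\int_z^x \tilde{p}_1/\tilde{\bar{g}}\,dy)}\,dx$, so that the target is $\bar{G}_\Gamma(0)$, and works with the ODE $\bar{G}'_\Gamma(z)=\Gamma(\tilde{p}_1(z)\bar{G}_\Gamma(z)-1)/\tilde{\bar{g}}(z)$. On $[c,1]$, after inserting $\tilde{\bar{g}}(z)\geq \hat{g}_1(z,\Gamma)$, the right side is a M\"obius (hence monotone) function of $\tilde{p}_1(z)\in[0,1]$, so it is bounded below by the minimum of its values at $\tilde{p}_1=0$ and $\tilde{p}_1=1$:
\[
\bar{G}'_\Gamma(z)\ \geq\ \min\left(\frac{-\Gamma}{\Gamma[-(1-z)\ln(1-z)-z]+1},\ \frac{-\Gamma(1-\bar{G}_\Gamma(z))}{1-\Gamma z}\right).
\]
This differential inequality is what pins down the worst case uniformly in $\tilde{p}_1$. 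Re-running the comparison argument of Appendix~\ref{Proof of properties} with it yields a point $z_0\leq\gamma_\Gamma$ at which $\bar{G}_\Gamma(z_0)=H_\Gamma(z_0)$; writing $\bar{G}_\Gamma(0)=-\int_0^{z_0}\bar{G}'_\Gamma\,dz+H_\Gamma(z_0)$, the piece over $[c,z_0]$ is bounded by $M_\Gamma(c)-M_\Gamma(z_0)$ via the first branch of the minimum, the piece over $[0,c-\epsilon)$ by $\int_0^c \Gamma/\eta$ via the crude bound $\bar{G}'_\Gamma\geq -\Gamma/\tilde{\bar{g}}\geq -\Gamma/\eta$, and finally $z_0\leq\gamma_\Gamma$ together with the monotonicity of $H_\Gamma$ and $M_\Gamma$ on $[0,\gamma_\Gamma]$ gives the stated conclusion.
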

\begin{proof}
We defer the proof for this Lemma to Appendix \ref{Proof of bound on G as a function of myu}.
\end{proof}
\bigbreak
Applying Lemma \ref{Bound on G as a function of myu}
on the result obtained from Lemma \ref{Bound on g1 tilde}, by substituting $\Gamma=\Gamma^*=0.7258$, $c=0.28$, $\tilde{\bar{g}}(x)=\tilde{\bar{g}}_1(x, \Gamma)$, $\epsilon \rightarrow 0^+$ and $\eta(x)=1-\Gamma^*\cdot \mu_{\Gamma^*}(x)-\Gamma^*\cdot \epsilon$, we obtain
$$\int_0^1 \frac{\Gamma^*}{\tilde{\bar{g}}_1(x, \Gamma^*)\exp\left(\Gamma^* \cdot \int_0^x\frac{\tilde{p}_1(y)}{\tilde{\bar{g}}_1(y, \Gamma^*)}dy\right)}dx \leq K_{\Gamma^*}(\gamma_{\Gamma^*})+M_{\Gamma^*}(c)-M_{\Gamma^*}(\gamma_{\Gamma^*})+\int_0^c \frac{\Gamma^*}{1-\Gamma^*\cdot \mu_{\Gamma^*}(x)}\cdot dx.$$
For the chosen values, the RHS evaluates to approximately $0.9998 < 1$. Now, from (\ref{bar f_i(t) with changed variables}), we have
\begin{equation} \label{f1 bar final result}
\int_0^1 \bar{f}_1(t,\Gamma^*)\cdot dt =\int_0^1 \frac{\Gamma^*}{\tilde{\bar{g}}_1(x, \Gamma^*)\exp\left(\Gamma^* \cdot \int_0^x\frac{\tilde{p}_1(y)}{\tilde{\bar{g}}_1(y, \Gamma^*)}dy\right)}dx<1.
\end{equation}

\bigbreak
\begin{lemmma}\label{Proof for i neq 1}
If item $1$ is a $\Gamma^*-adverse$ item, then for all $i \neq 1$, $$\int_0^1 \bar{f}_i(t,\Gamma^*)\cdot dt \leq 1 $$
\end{lemmma}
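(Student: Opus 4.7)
The goal is to show $\int_0^1 \bar{f}_i(t, \Gamma^*) dt \leq 1$ for every $i \neq 1$, under the assumption that item 1 is $\Gamma^*$-adverse. I would combine three ingredients already developed in the paper: the lower bound on $\tilde{\bar{g}}_i$ from Lemma \ref{Bound on gi tilde}, Property A of Lemma \ref{Properties} applied to item 1, and Lemma \ref{Simultaneous hardness}.

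First, since item 1 is $\Gamma^*$-adverse it is in particular $\Gamma^*$-weakly-adverse, so Property A at $x = c = 0.28$ (which lies in $[0, \beta_{\Gamma^*})$ and below $1 - 1/e$) gives $\int_0^c \tilde{p}_1(s)\,ds \leq \mu_{\Gamma^*}(c)$. Plugging this into Lemma \ref{Bound on gi tilde} yields, for every $x \in [0,1)$ and $i \neq 1$,
$$\tilde{\bar{g}}_i(x, \Gamma^*) \geq \hat{g}_i(x, \Gamma^*) - \Gamma^* \mu_{\Gamma^*}(c) =: \rho(x).$$
Lemma \ref{rho1>rho2} then reduces the task to showing
$$\int_0^1 \frac{\Gamma^*}{\rho(x)\,\exp\!\left(\Gamma^* \int_0^x \frac{\tilde{p}_i(y)}{\rho(y)}\,dy\right)}\,dx \leq 1.$$

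Second, Lemma \ref{Simultaneous hardness} asserts that item $i$ is not $\Gamma'$-weakly-adverse for $\Gamma' = 0.7276$, i.e.\ $\hat{G}_i(0, \Gamma') \leq 1$. This single integral inequality, phrased on the original (unshifted) $\hat{g}_i$ at the larger parameter $\Gamma'$, supplies a small amount of slack compared to what would be tight at $\Gamma^*$. The plan is to convert this slack into a proof of the displayed bound by pairing it with the explicit form of $\rho$ and the concrete value of $\mu_{\Gamma^*}(c)$; since $\Gamma' - \Gamma^* = 0.0018$ and both $\mu_{\Gamma^*}(c)$ and $\hat{g}_i$ are explicit, the final step should amount to a direct numerical verification in the spirit of the $\approx 0.9998 < 1$ inequality already established for item 1.

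The main obstacle is that the additive shift $-\Gamma^* \mu_{\Gamma^*}(c)$ is \emph{not} uniformly dominated by the pointwise increase $\hat{g}_i(x, \Gamma') - \hat{g}_i(x, \Gamma^*) = (\Gamma' - \Gamma^*)[-(1-x)\ln(1-x)(1-\tilde{p}_i(x)) - x]$, since the bracket turns negative for $x$ close to $1$. A one-shot comparison with $\hat{G}_i(0, \Gamma')$ therefore fails, and I expect the closing argument to split $[0,1]$ at $\beta_{\Gamma^*}$ (or a nearby anchor), handling the region $x \leq \beta_{\Gamma^*}$, where the bracket is positive and the $\Gamma' - \Gamma^*$ slack must dominate the shift, separately from $x > \beta_{\Gamma^*}$, where the shift actually helps because the bracket is negative. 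Constructing the right split and squeezing out the narrow numerical margin is where I expect the bulk of the work to lie.
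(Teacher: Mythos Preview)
Your setup is correct up to the definition of $\rho(x) = \hat{g}_i(x, \Gamma^*) - \Gamma^* \mu_{\Gamma^*}(c)$, and you correctly identify Lemma~\ref{Simultaneous hardness} as the source of the needed slack. However, the ``obstacle'' you describe is not real: a one-line algebraic identity turns the additive shift into an exact reparametrization, after which no interval-splitting and no further numerical work is required.

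Write $B(x) = -(1-x)\ln(1-x)(1-\tilde{p}_i(x)) - x$, so that $\hat{g}_i(x,\Gamma) = \Gamma B(x) + 1$. Then
\[
\rho(x) \;=\; \Gamma^* B(x) + 1 - \Gamma^* \mu_{\Gamma^*}(c)
\;=\; \bigl(1 - \Gamma^* \mu_{\Gamma^*}(c)\bigr)\!\left(\frac{\Gamma^*}{1 - \Gamma^* \mu_{\Gamma^*}(c)}\, B(x) + 1\right)
\;=\; \frac{\Gamma^*}{\Gamma'}\,\hat{g}_i(x, \Gamma'),
\]
where $\Gamma' := \Gamma^*/(1-\Gamma^*\mu_{\Gamma^*}(c)) \approx 0.72759$. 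This is an \emph{equality}, so the sign of $B(x)$ never enters. Substituting $\rho = \tfrac{\Gamma^*}{\Gamma'}\hat{g}_i(\cdot,\Gamma')$ into your reduced integral, the scalar $\tfrac{\Gamma^*}{\Gamma'}$ cancels simultaneously in the outer $\Gamma^*/\rho$ and in the exponent $\Gamma^*\!\int \tilde{p}_i/\rho$, leaving exactly $\hat{G}_i(0,\Gamma')$. Since $\Gamma' < 0.7276$, Lemmas~\ref{Simultaneous hardness} and~\ref{Ordering on gamma} give $\hat{G}_i(0,\Gamma') \leq 1$, and the proof is complete. This rescaling is precisely what the paper does; the pointwise comparison you attempt between $\rho$ and $\hat{g}_i(\cdot,\Gamma')$ is the wrong framing, and the split at $\beta_{\Gamma^*}$ is unnecessary.
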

\begin{proof}
From (\ref{bar f_i(t) with changed variables}), we have
$$\int_0^1 \bar{f}_i(t,\Gamma^*) =\int_0^1 \frac{\Gamma^*}{\tilde{\bar{g}}_i(x, \Gamma^*)\exp\left(\Gamma^* \cdot \int_0^x\frac{\tilde{p}_i(y)}{\tilde{\bar{g}}_i(y, \Gamma^*)}dy\right)}dx.$$
From Lemma \ref{Bound on gi tilde}, we  have for $i\neq 1$ and $x \in [0,1)$,
$$\tilde{\bar{g}}_i(x,\Gamma^*) \geq \Gamma^*\cdot [-(1-x)\cdot \ln(1-x)\cdot (1-\tilde{p}_i(x))-x]+1-\Gamma^*\cdot\mu_{\Gamma^*}(c),$$
which can be rewritten as
$$\tilde{\bar{g}}_i(x,\Gamma^*) \geq (1-\Gamma^*\cdot\mu_{\Gamma^*}(c)) \cdot \left[\frac{\Gamma^*}{(1-\Gamma^*\cdot\mu_{\Gamma^*}(c))}\cdot [-(1-x)\cdot \ln(1-x)\cdot (1-\tilde{p}_i(x))-x]+1\right].$$
Recall the definition of $\hat{g}_i(x, \Gamma)$.
$$\hat{g}_i(x, \Gamma) \overset{\underset{\mathrm{def}}{}}{=} \Gamma\cdot [-(1-x)\cdot \ln(1-x)\cdot (1-\tilde{p}_i(x))-x]+1.$$
Let $\Gamma'=\frac{\Gamma^*}{1-\Gamma^*\cdot\mu_{\Gamma^*}(c)} \approx 0.72759$. We have
$$\tilde{\bar{g}}_i(x,\Gamma^*) \geq \frac{\Gamma^*}{\Gamma'} \cdot \left(\Gamma'\cdot [-(1-x)\cdot \ln(1-x)\cdot (1-\tilde{p}_i(x))-x]+1\right) = \frac{\Gamma^*}{\Gamma'} \cdot \hat{g}_i(x, \Gamma').$$
Since we have $\tilde{\bar{g}}_i(x,\Gamma^*)\geq \frac{\Gamma^*}{\Gamma'} \cdot \hat{g}_i(x, \Gamma')$, we can apply Lemma \ref{rho1>rho2} to obtain
$$\int_0^1 \bar{f}_i(t,\Gamma^*) \leq1 \textrm{\ \ \ if \ \ \ }  \int_0^1 \frac{\Gamma^*}{\frac{\Gamma^*}{\Gamma'} \cdot \hat{g}_i(x, \Gamma')\exp\left(\Gamma^* \cdot \int_0^x\frac{\tilde{p}_i(y)}{\frac{\Gamma^*}{\Gamma'} \cdot \hat{g}_i(x, \Gamma')}dy\right)}dx \leq 1.$$
Simplifying, we get
$$\int_0^1 \bar{f}_i(t,\Gamma^*) \leq 1 \textrm{\ \ \ if \ \ \ } \int_0^1 \frac{\Gamma'}{ \hat{g}_i(x, \Gamma')\exp\left(\Gamma' \cdot \int_0^x\frac{\tilde{p}_i(y)}{  \hat{g}_i(x, \Gamma')}dy\right)}dx=\hat{G}_i(0,\Gamma')\leq 1.$$
Since item $1$ is 0.7258-weakly-adverse, from Lemma \ref{Simultaneous hardness} we have that item $i (i \neq 1)$ can not be 0.7276-weakly-adverse, and since $\Gamma'=\frac{\Gamma^*}{1-\Gamma^*\cdot\mu_{\Gamma^*}(c)} \approx 0.72759 < 0.7276$, we can apply Lemma \ref{Ordering on gamma} to conclude that item $i$ can not be $\Gamma'-weakly-adverse$. Hence, we have $G_i(0,\Gamma')
\leq 1$, and finally we have for $i \neq 1$
$$\int_0^1 \bar{f}_i(t,\Gamma^*)\cdot dt \leq 1.$$

\end{proof}
 From (\ref{f1 bar final result}) and Lemma \ref{Proof for i neq 1}, we have for all $i\in[n]$,
$$\int_0^1 \bar{f}_i(t,\Gamma^*) \cdot dt \leq 1.$$
 This completes our proof for Theorem \ref{Correctness of 2-scheme ALG}. 
\end{proof}

\section{Proof of Lemmas \ref{rho1>rho2} and \ref{Ordering on gamma}}\label{proof of rho1>rho2}
In this section, we prove Lemmas \ref{rho1>rho2} and \ref{Ordering on gamma}.
\rhoRho*
\begin{proof}
Define 
$$R_1(z) \overset{\underset{\mathrm{def}}{}}{=} \int_z^1 \frac{\Gamma}{{\rho}_1(x)\exp\left(\Gamma \cdot \int_z^x\frac{\tilde{p}(y)}{{\rho}_1(y)}dy\right)}dx {\ \ \ \ and \ \ \ \ }R_2(z) \overset{\underset{\mathrm{def}}{}}{=} \int_z^1 \frac{\Gamma}{{\rho}_2(x)\exp\left(\Gamma \cdot \int_z^x\frac{\tilde{p}(y)}{{\rho}_2(y)}dy\right)}dx.$$
Our aim is to show that $R_1(0)\leq 1$ if $R_2(0)\leq 1$.
Taking derivative, we obtain
$$R_1'(z) = \frac{\Gamma}{\rho_1(z)}\cdot(\tilde{p}(z)\cdot R_1(z)-1) {\ \ \ \ and \ \ \ \ }R_2'(z) = \frac{\Gamma}{\rho_2(z)}\cdot(\tilde{p}(z)\cdot R_2(z)-1).$$
From their definitions, we have $R_1(1)=R_2(1)=0$. Suppose we have $R_2(0)\leq 1$. We can show that $R_2(x)\leq 1$ for all $x\in[0,1]$ by contradiction. Suppose that $R_2(x_0)>1$ for some $x_0\in(0,1)$. There must be some $x_1\in[0,x_0)$ for which $R_2(x_1)=1$ and $R_2'(x_1)\geq0$. From the expression for $R_2'(x)$, it can be seen that this is not possible, since $\tilde{p}(x)-1 < 0$ for $x\in[0,1)$.

 We now have that $R_2(x)\leq 1$ for all $x\in[0,1]$ if $R_2(0)\leq 1$. We now show that $R_1(x)\leq R_2(x)$ for all $x\in[0,1]$ if $R_2(0)\leq 1$, by contradiction. Suppose that for some $x_0\in[0,1)$, $R_1(x_0)>R_2(x_0)$. Since $R_1(1)=R_2(1)=0$, there must be point $x_1\in[x_0,1]$ for which $R_1(x_1)>R_2(x_1)$ and $R_1'(x_1)<R_2'(x_1)$. From the expressions for $R_1'(x)$ and $R_2'(x)$, and using $\rho_1(x) \geq \rho_2(x)$ and $R_2(x)\leq 1$, it can be seen that this is not possible.

Hence, we have that $R_1(0)\leq R_2(0)$ if $R_2(0)\leq 1$.
This completes our proof for Lemma \ref{rho1>rho2}.
\bigbreak
\end{proof}
\ordering*
\begin{proof}
In order to prove Lemma \ref{Ordering on gamma}, we  show the contrapositive: For $0<\Gamma_1<\Gamma_2< 1$, $\hat{G}_i(0,\Gamma_1) \leq 1 $ if $\hat{G}_i(0,\Gamma_2) \leq 1 $.
We have that 

$$\hat{G}_i(z,\Gamma_1)=\int_z^1 \frac{\Gamma_1}{\hat{g}_i(x, \Gamma_1)\exp\left(\Gamma_1 \cdot \int_z^x\frac{\tilde{p}_i(y)}{\hat{g_i}(y, \Gamma_1)}dy\right)}dx {\ \ \ and \ \ \ } \hat{G}_i(z,\Gamma_2)=\int_z^1 \frac{\Gamma_2}{\hat{g}_i(x, \Gamma_2)\exp\left(\Gamma_2 \cdot \int_z^x\frac{\tilde{p}_i(y)}{\hat{g_i}(y, \Gamma_2)}dy\right)}dx.$$
We also have
$$\hat{g}_i(x, \Gamma_2) = \Gamma_2\cdot [-(1-x)\cdot \ln(1-x)\cdot (1-\tilde{p}_i(x))-x]+1 = \frac{\Gamma_2}{\Gamma_1} \cdot \left[\hat{g}_i(x, \Gamma_1)+\frac{\Gamma_1}{\Gamma_2}-1\right].$$
This gives us
$$\hat{G}_i(z,\Gamma_2)=\int_z^1 \frac{\Gamma_1}{\left[\hat{g}_i(x, \Gamma_1)+\frac{\Gamma_1}{\Gamma_2}-1\right]\exp\left(\Gamma_1 \cdot \int_z^x\frac{\tilde{p}_i(y)}{\left[\hat{g}_i(x, \Gamma_1)+\frac{\Gamma_1}{\Gamma_2}-1\right]}dy\right)}dx.$$
Since $\hat{g}_i(x, \Gamma_1)\geq \left[\hat{g}_i(x, \Gamma_1)+\frac{\Gamma_1}{\Gamma_2}-1\right] = \frac{\Gamma_1}{\Gamma_2}\cdot\hat{g}_i(x, \Gamma_2)>0$ for all $x\in[0,1]$, we can directly apply Lemma \ref{rho1>rho2} with $\rho_1(x)=\hat{g}_i(x, \Gamma_1)$,  $\rho_2(x)=\left[\hat{g}_i(x, \Gamma_1)+\frac{\Gamma_1}{\Gamma_2}-1\right]$, $\tilde{p}(x)=\tilde{p}_i(x)$ and $\Gamma=\Gamma_1$ to obtain
$$\hat{G}_i(0,\Gamma_1) \leq 1 \textrm{\ \ \  if\ \ \ } \hat{G}_i(0,\Gamma_2) \leq 1 .$$
This completes our proof for Lemma \ref{Ordering on gamma}.
\end{proof}
\section{Uniqueness of $\gamma_\Gamma$}\label{Uniqueness of gamma}
Recall that $\gamma_\Gamma$ is the root of $H_\Gamma(z)=K_\Gamma(z)$ on $z\in[0,1)$. This gives us 
$$\frac{\Gamma\cdot(-(1-\gamma_\Gamma)\cdot\ln(1-\gamma_\Gamma))}{\Gamma\cdot(-(1-\gamma_\Gamma)\cdot\ln(1-\gamma_\Gamma)-\gamma_\Gamma)+1}=\frac{\Gamma\cdot (1-\gamma_\Gamma)}{1-\Gamma\cdot\gamma_\Gamma}.$$
Simplifying, we get 
$$\Gamma=\frac{\ln(1-\gamma_\Gamma)+1}{\ln(1-\gamma_\Gamma)+\gamma_\Gamma}.$$
Observe that
$$\frac{d}{dx}\left(\frac{\ln(1-x)+1}{\ln(1-x)+x}\right) = \frac{-\ln(1-x)}{(\ln(1-x)+x)^2}>0$$ for $x\in[0,1)$. Also, $\frac{\ln(1-x)+1}{\ln(1-x)+x}$ equals 0 at $x=1-\frac{1}{e}$ and approaches 1 as $x$ approaches 1. Hence, there is a unique value $\gamma_\Gamma \in (1-\frac{1}{e},1)$ that satisfies $\Gamma=\frac{\ln(1-\gamma_\Gamma)+1}{\ln(1-\gamma_\Gamma)+\gamma_\Gamma}$. 
\section{Proof of Lemma \ref{Properties}: Properties of $\Gamma-weakly-adverse$ items }\label{Proof of properties}
Recall the statement of Lemma \ref{Properties}.
\Properties*
\begin{proof}
Parts of this proof have been borrowed from \citet{PT}.
\bigbreak
 Consider a $\Gamma-weakly-adverse$ item $i$ from an instance of distributions $\{D_i\}$, where $\Gamma>\Gamma_{PT}$. We have that $\hat{G}_i(0,\Gamma)>1$. For the ease of presentation, we will denote $\hat{G}_i(z,\Gamma)$ by $\hat{G}_\Gamma(z)$. We will denote the derivative of $\hat{G}_i(z,\Gamma)$ with respect to $z$ as $\hat{G}_\Gamma'(z)$. Taking the derivative of $\hat{G}_\Gamma(z)$ with respect to $z$, we get
$$\hat{G}_\Gamma'(z)=\frac{d}{dz}\int_z^1 \frac{\Gamma}{\hat{g}_i(x, \Gamma)\exp\left(\Gamma \cdot \int_z^x\frac{\tilde{p}_i(y)}{\hat{g}_i(y, \Gamma)}dy\right)}dx$$
$$=-\frac{\Gamma}{\hat{g}_i(z,\Gamma)}+\int_z^1 \frac{\Gamma}{\hat{g}_i(x, \Gamma)\exp\left(\Gamma \cdot \int_z^x\frac{\tilde{p}_i(y)}{\hat{g}_i(y, \Gamma)}dy\right)} \cdot \frac{\Gamma\cdot\tilde{p}_i(z)}{\hat{g}_i(z, \Gamma)}\cdot dx$$
$$=\frac{\Gamma}{\hat{g}_i(z,\Gamma)}\cdot(\tilde{p}_i(z)\cdot \hat{G}_\Gamma(z)-1) =\frac{\Gamma\cdot\left(\tilde{p}_i(z)\cdot \hat{G}_\Gamma(z)-1\right)}{\Gamma\cdot[-(1-z)\cdot\ln(1-z)\cdot(1-\tilde{p}_i(z))-z]+1} .$$
From the form of the expression above, it is evident that $\hat{G}_\Gamma'(z)$ is monotonic with respect to $\tilde{p}_i(z)$ at a fixed $z$. Hence, its minimum value must occur at either $\tilde{p}_i(z)=0$ or $\tilde{p}_i(z)=1$. We have 
$$\hat{G}_\Gamma'(z) \geq \min\left(\frac{-\Gamma}{\Gamma\cdot[-(1-z)\cdot\ln(1-z)-z]+1}, \frac{-\Gamma\cdot\left(1-\hat{G}_\Gamma(z)\right)}{1-\Gamma\cdot z}\right).$$
For $\hat{G}_\Gamma(z)\leq H_\Gamma(z)$, we have
$$\hat{G}_\Gamma'(z) \geq  \frac{-\Gamma\cdot\left(1-\hat{G}_\Gamma(z)\right)}{1-\Gamma\cdot z},$$ and for $\hat{G}_\Gamma(z)> H_\Gamma(z)$, we have $$\hat{G}_\Gamma'(z) \geq \frac{-\Gamma}{\Gamma\cdot[-(1-z)\cdot\ln(1-z)-z]+1}.$$

 From the analysis in Appendix \ref{Uniqueness of gamma} we have that $H_\Gamma(z) < K_\Gamma(z)$ for $z\in[0,\gamma_\Gamma)$ and $H_\Gamma(z) > K_\Gamma(z)$ for $z\in(\gamma_\Gamma,1)$. 
 Define $z_0\overset{\underset{\mathrm{def}}{}}{=}\inf\{z|\hat{G}_\Gamma(z)\leq H_\Gamma(z), z\in[0,1]\}$. $z_0$ is well defined because $G(1)=H(1)=1$. We now show that $z_0\leq \gamma_\Gamma$. It suffices to show that $\hat{G}_\Gamma(\gamma_\Gamma)\leq H_\Gamma(\gamma_\Gamma)$, which we now show by contradiction. Suppose, for contradiction, that  $\hat{G}_\Gamma(\gamma_\Gamma) > H_\Gamma(\gamma_\Gamma)$. For sufficiently small $\epsilon>0$, it holds that $\hat{G}_\Gamma(\gamma_\Gamma) < H_\Gamma(\gamma_\Gamma)$ for $z\in[1-\epsilon,1)$. This is because for $z\in[1-\epsilon,1)$,
$$\hat{G}'_\Gamma(z)-H_\Gamma'(z)=\hat{G}'_\Gamma(z)-\frac{\Gamma\cdot ((1-\Gamma)\cdot\ln(1-z)-\Gamma\cdot z+1)}{(\Gamma\cdot[-(1-z)\cdot\ln(1-z)-z]+1)^2}$$
$$\geq -\frac{\Gamma}{1-\Gamma\cdot z}-\frac{\Gamma\cdot ((1-\Gamma)\cdot\ln(1-z)-\Gamma\cdot z+1)}{(\Gamma\cdot[-(1-z)\cdot\ln(1-z)-z]+1)^2} >0.$$
The last inequality holds since $$\lim_{x\rightarrow 1^-} \left( -\frac{\Gamma}{1-\Gamma\cdot z}-\frac{\Gamma\cdot ((1-\Gamma)\cdot\ln(1-z)-\Gamma\cdot z+1)}{(\Gamma\cdot[-(1-z)\cdot\ln(1-z)-z]+1)^2} \right) = +\infty.$$
For $z\in[1-\epsilon,1)$, we have
$$\hat{G}_\Gamma(z)-H_\Gamma(z) = G(1)-H(1)-\int_z^1 (\hat{G}'_\Gamma(t)-H'_\Gamma(t))\cdot dt=-\int_z^1 (\hat{G}'_\Gamma(t)-H'_\Gamma(t))\cdot dt<0.$$
Let $z_2\overset{\underset{\mathrm{def}}{}}{=}\inf\{z|z\in(0,1], \hat{G}_\Gamma(y)\leq H_\Gamma(y) \textrm{\ for\ all \ } y \in [z,1) $. From the derivation above, we have $z_2\leq 1-\epsilon$. We also have $\hat{G}_\Gamma(z_2)=H_\Gamma(z_2)$, since both $\hat{G}_\Gamma$ and $H_\Gamma$ are continuous functions. Since $\hat{G}_\Gamma(\gamma_\Gamma)>H_\Gamma(\gamma_\Gamma)$, we have $z_2\in(\gamma_\Gamma, 1-\epsilon]$. According to the definition of $z_2$ and $H_\Gamma(z)$, for every $z \in [z_2, 1]$, we have $\hat{G}_\Gamma(z)\leq\hat{H}_\Gamma(z)$, and hence:
$$\hat{G}'_\Gamma(z) \geq \frac{-\Gamma\cdot(1-\Gamma(z))}{1-\Gamma\cdot z}$$
$$\implies ((1-\Gamma z)\cdot\hat{G}_\Gamma(z) )'= (1-\Gamma z)\cdot \hat{G}'_\Gamma(z) - \Gamma\cdot \hat{G}_\Gamma(z) \geq - \Gamma$$
$$\implies(1-\Gamma)\cdot\hat{G}_\Gamma(1) - (1-\Gamma z)\cdot \hat{G}_\Gamma(z) \geq -\Gamma\cdot(1-z)$$
$$\implies\hat{G}_\Gamma(z)\leq \frac{\Gamma\cdot (1-z)}{1-\Gamma z}=K(z).$$
We already have that $K_\Gamma(z)<H_\Gamma(z)$ for $z\in(\gamma_\Gamma,1)$. This implies that $\hat{G}_\Gamma(z)<H_\Gamma(z)$ for $z\in[z_2,1)$, which contradicts $\hat{G}_\Gamma(z_2)=H_\Gamma(z_2)$. Therefore, we have $z_0\leq\gamma_\Gamma$. In fact, since the contradiction arises only from the assertion that $z_2>\gamma_\Gamma$, we have that $z_2\leq\gamma_\Gamma$, and $\hat{G}_\Gamma(z)\leq K_\Gamma(z)$ holds for $z\in[z_2,1]$.
\bigbreak
 Taking the derivative of $H_\Gamma(z)$, we get
$$H'_\Gamma(z)=\Gamma\cdot\frac{1+\ln(1-z)-\Gamma\cdot(z+\ln(1-z))}{\left(\Gamma\cdot\left[-(1-z)\cdot\ln(1-z)-z\right]+1\right)^2}.$$
It can be verified from the definition of $\gamma_\Gamma$ that $H'_\Gamma(z)>0$ for $z\in[0,\gamma_\Gamma)$ and $H'_\Gamma(z)<0$ for $z\in(\gamma_\Gamma,1)$. 
\linebreak
 We now show that $\beta_\Gamma < z_0$. Recall that $\beta_\Gamma\overset{\underset{\mathrm{def}}{}}{=}\inf\{z|M_\Gamma(z)\leq H_\Gamma(z), z\in[0,1)\}$. Since we have that $\hat{G}_\Gamma(0)>1$, and since $\hat{G}_\Gamma(z)\geq H_\Gamma(z)$ for $z \in [0,z_0]$, we have for all $z\in[0,z_0],$
$$\hat{G}'_\Gamma(z) \geq \frac{-\Gamma}{\Gamma\cdot[-(1-z)\cdot\ln(1-z)-z]+1} $$
$$\implies \hat{G}_\Gamma(z) > 1+ \int_0^z \frac{-\Gamma}{\Gamma\cdot[-(1-x)\cdot\ln(1-x)-x]+1}\cdot dx = M_\Gamma(z) .$$
Hence, we have that $\hat{G}_\Gamma(z_0)> M_\Gamma(z_0)$. From the definition of $z_0$, we have $\hat{G}_\Gamma(z_0)=H_\Gamma(z_0)$. This implies that $H_\Gamma(z_0)> M_\Gamma(z_0)$. From the definition of $\beta_\Gamma$ it is now clear that $\beta_\Gamma< z_0$.
\bigbreak
 Before we go ahead and prove the properties from Lemma \ref{Properties}, we make the following observation. Here $V_\Gamma(z)$ is some function of $z$ parametrized by $\Gamma$.
$$\frac{\partial}{\partial p}\left[\frac{\Gamma\cdot(1-p\cdot V_\Gamma(z))}{\Gamma\cdot \left[-(1-z)\cdot\ln(1-z)\cdot(1-p)-z\right]+1}\right] = \frac{\Gamma\cdot \left(\Gamma\cdot \left[-(1-z)\cdot\ln(1-z)-z\right]+1\right)\cdot(H_\Gamma(z)-V_\Gamma(z))}{\left(\Gamma\cdot \left[-(1-z)\cdot\ln(1-z)\cdot(1-p)-z\right]+1\right)^2} $$
Substituting $V_\Gamma(z)=M_\Gamma(z)$ in the equation above and using the fact that $H_\Gamma(z) \leq M_\Gamma(z)$ on $z\in[0,\beta_\Gamma]$, we have that $$\frac{\partial}{\partial p}W_\Gamma(z,p) = - \frac{\partial}{\partial p}\left[\frac{\Gamma\cdot(1-p\cdot M_\Gamma(z))}{\Gamma\cdot \left[-(1-z)\cdot\ln(1-z)\cdot(1-p)-z\right]+1}\right]\geq 0$$ on $z\in[0,\beta_\Gamma]$. Similarly, substituting $V_\Gamma(z)=K_\Gamma(z)$ and using the fact that $H_\Gamma(z)\geq K_\Gamma(z)$ on $z\in[\gamma_\Gamma,1)$, we have that $$\frac{\partial}{\partial p}Y_\Gamma(z,p) = -(1-\Gamma z)\cdot \frac{\partial}{\partial p}\left[\frac{\Gamma\cdot(1-p\cdot K_\Gamma(z))}{\Gamma\cdot \left[-(1-z)\cdot\ln(1-z)\cdot(1-p)-z\right]+1}\right] \leq 0$$ on $z\in[\gamma_\Gamma,1)$. Since $W_\Gamma(z,0)=0$ and $Y_\Gamma(z,1)=0$, we also have that $W_\Gamma(z,p)$ is non-negative on $z\in[0,\beta_\Gamma]$, $p\in[0,1]$ and $Y_\Gamma(z,p)$ is non-negative on $z\in[\gamma_\Gamma,1)$, $p\in[0,1]$.
\bigbreak
 We now use $\hat{G}_\Gamma(0)>1$ and  $\hat{G}_\Gamma(z_0)=H_\Gamma(z_0)$.
$$\hat{G}_\Gamma(z_0)=\hat{G}_\Gamma(0)+\int_0^{z_0} \hat{G}'_\Gamma(z)\cdot dz$$
$$\implies H_\Gamma(z_0)>1+\int_0^{z_0} \hat{G}'_\Gamma(z)\cdot dz=1+\int_0^{z_0} \frac{\Gamma\cdot\left(\tilde{p}_i(z)\cdot \hat{G}_\Gamma(z)-1\right)}{\Gamma\cdot[-(1-z)\cdot\ln(1-z)\cdot(1-\tilde{p}_i(z))-z]+1} dz$$
$$\geq1+ \int_{\beta_\Gamma}^{z_0}-\frac{\Gamma}{\Gamma\cdot[-(1-z)\cdot\ln(1-z)-z]+1} dz+\int_0^{\beta_\Gamma} \frac{\Gamma\cdot\left(\tilde{p}_i(z)\cdot \hat{G}_\Gamma(z)-1\right)}{\Gamma\cdot[-(1-z)\cdot\ln(1-z)\cdot(1-\tilde{p}_i(z))-z]+1} dz $$
$$\geq1+ \int_{\beta_\Gamma}^{z_0}-\frac{\Gamma}{\Gamma\cdot[-(1-z)\cdot\ln(1-z)-z]+1} dz+\int_0^{\beta_\Gamma} \frac{\Gamma\cdot\left(\tilde{p}_i(z)\cdot M_\Gamma(z)-1\right)}{\Gamma\cdot[-(1-z)\cdot\ln(1-z)\cdot(1-\tilde{p}_i(z))-z]+1} dz .$$
The final inequality follows from the fact that $\hat{G}_\Gamma(z)>M_\Gamma(z)$ on $[0,a_\Gamma]$. We now have,
$$\int_0^{\beta_\Gamma} \frac{\Gamma\cdot\left(\tilde{p}_i(z)\cdot M_\Gamma(z)-1\right)}{\Gamma\cdot[-(1-z)\cdot\ln(1-z)\cdot(1-\tilde{p}_i(z))-z]+1} dz < H_\Gamma(z_0)+\int_{\beta_\Gamma}^{z_0}\frac{\Gamma}{\Gamma\cdot[-(1-z)\cdot\ln(1-z)-z]+1} dz-1$$
$$\implies \int_0^{\beta_\Gamma} \frac{\Gamma\cdot\left(\tilde{p}_i(z)\cdot M_\Gamma(z)-1\right)}{\Gamma\cdot[-(1-z)\cdot\ln(1-z)\cdot(1-\tilde{p}_i(z))-z]+1} dz < H_\Gamma(z_0)-M_\Gamma(z_0)+M_\Gamma(\beta_\Gamma)-1.$$
$H_\Gamma(z)$ is an increasing function on $[0,\gamma_\Gamma)$ and $M_\Gamma(z)$ is a decreasing function on $[0,1]$. Since $z_0 \leq \gamma_\Gamma$, we now have 
$$\int_0^{\beta_\Gamma} \frac{\Gamma\cdot\left(\tilde{p}_i(z)\cdot M_\Gamma(z)-1\right)}{\Gamma\cdot[-(1-z)\cdot\ln(1-z)\cdot(1-\tilde{p}_i(z))-z]+1} dz < H_\Gamma(\gamma_\Gamma)-M_\Gamma(\gamma_\Gamma)+M_\Gamma(\beta_\Gamma)-1$$
$$= H_\Gamma(\gamma_\Gamma)-M_\Gamma(\gamma_\Gamma)+\int_0^{\beta_\Gamma}-\frac{\Gamma}{\Gamma\cdot[-(1-z)\cdot\ln(1-z)-z]+1}dz$$
\begin{equation}\label{Pre property C}
\implies \int_0^{\beta_\Gamma} W_\Gamma(z,\tilde{p}_i(z))\cdot dz < H_\Gamma(\gamma_\Gamma)-M_\Gamma(\gamma_\Gamma)
\end{equation}

 Since $\frac{\partial}{\partial p}W_\Gamma(z,p) \geq 0$, $W_\Gamma(z,p) \geq 0$ on $z \in [0,\beta_\Gamma]$ and $\tilde{p}_i(z)$ is an increasing function, \textbf{Property C} follows from the above inequality, and we also obtain $W_\Gamma(z,p_0) \geq p_0\cdot \left[\frac{\partial}{\partial p}W_\Gamma(z,p)\right]_{p=0}$ for $p_0 \in [0,1]$. Hence, for $x\in [0,\beta_\Gamma]$,
$$\int_0^x W_\Gamma(z,\tilde{p}_i(z))\cdot dz \geq \int_0^x \tilde{p}_i(z) \cdot\left[\frac{\partial}{\partial p}W_\Gamma(z,p)\right]_{p=0} \cdot dz$$
$$= \frac{\Gamma}{\Gamma\cdot[-(1-z)\cdot\ln(1-z)-z]+1}\cdot(M_\Gamma(z)-H_\Gamma(z))\cdot \int_0^x \tilde{p}_i(z)\cdot dz.$$
Since $H_\Gamma(z)$ is non-decreasing on $z \in [0,\gamma_\Gamma]$ and $x\leq\beta_\Gamma\leq\gamma_\Gamma$, it is clear that $\frac{-H_\Gamma(z)}{\Gamma\cdot[-(1-z)\cdot\ln(1-z)-z]+1}$ is non-increasing on $[0,x]$. Also $\frac{d}{dz}\frac{M_\Gamma(z)}{\Gamma\cdot[-(1-z)\cdot\ln(1-z)-z]+1}=\frac{-\Gamma\cdot(1-M_\Gamma(z)\cdot\ln(1-z))}{\left(\Gamma\cdot[-(1-z)\cdot\ln(1-z)-z]+1\right)^2}$. Since $M_\Gamma(z)\leq 1$ (from definition) and assuming $\ln(1-z)\leq1$ (by taking $x\leq1-\frac{1}{e}$), we have that $\frac{M_\Gamma(z)}{\Gamma\cdot[-(1-z)\cdot\ln(1-z)-z]+1}$ is non-increasing on $z\in[0,x]$. Hence, we have for $z\in[0,x]$,
$$\frac{\Gamma}{\Gamma\cdot[-(1-z)\cdot\ln(1-z)-z]+1}\cdot(M_\Gamma(z)-H_\Gamma(z)) \geq \frac{\Gamma}{\Gamma\cdot[-(1-x)\cdot\ln(1-x)-x]+1}\cdot(M_\Gamma(x)-H_\Gamma(x)) .$$
This gives us
$$\int_0^x W_\Gamma(z,\tilde{p}_i(z))\cdot dz \geq  \frac{\Gamma}{\Gamma\cdot[-(1-x)\cdot\ln(1-x)-x]+1}\cdot(M_\Gamma(x)-H_\Gamma(x))\cdot \int_0^x \tilde{p}_i(z)\cdot dz$$
Since $x\leq\beta_\Gamma$, we have 
$$\int_0^{\beta_\Gamma} W_\Gamma(z,\tilde{p}_i(z))\cdot dz \geq  \frac{\Gamma}{\Gamma\cdot[-(1-x)\cdot\ln(1-x)-x]+1}\cdot(M_\Gamma(x)-H_\Gamma(x))\cdot \int_0^x \tilde{p}_i(z)\cdot dz.$$
Finally, from (\ref{Pre property C})
$$H_\Gamma(\gamma_\Gamma) - M_\Gamma(\gamma_\Gamma) \geq  \frac{\Gamma}{\Gamma\cdot[-(1-x)\cdot\ln(1-x)-x]+1}\cdot(M_\Gamma(x)-H_\Gamma(x))\cdot \int_0^x \tilde{p}_i(z)\cdot dz.$$
This completes our proof for \textbf{Property A}. 
\bigbreak
 Recall that $z_2\overset{\underset{\mathrm{def}}{}}{=}\inf\{z|z\in(0,1], \hat{G}_\Gamma(y)\leq H_\Gamma(y) \textrm{\ for\ all \ } y \in [z,1) \}$, and that $\hat{G}_\Gamma(z) \leq K_\Gamma(z)$ on $z\in[z_2,1]$. Since $\hat{G}_\Gamma(z) \leq H_\Gamma(z)$ on $z\in[z_2,1]$ (from the definition of $z_2$), we have 
$$\hat{G}'_\Gamma(z) = \frac{\Gamma\cdot\left(\tilde{p}_i(z)\cdot \hat{G}_\Gamma(z)-1\right)}{\Gamma\cdot[-(1-z)\cdot\ln(1-z)\cdot(1-\tilde{p}_i(z))-z]+1} \geq \frac{-\Gamma\cdot(1-\hat{G}_\Gamma(z))}{1-\Gamma z} + \frac{Y_\Gamma(z,\tilde{p}_i(z))}{1-\Gamma z}.$$
Since $Y_\Gamma(z,p)$ is only defined on $z \in [\gamma_\Gamma,1)$, we extend it to $[z_2,1]$ by assuming $Y_\Gamma(z,p)=0$ for $z\in[z_2,\gamma_\Gamma)$ in order to make the above inequality well defined and correct. We will be using this extended form of $Y_\Gamma(z,p)$ in the upcoming steps as well. We now have for $z\in[z_2,1]$,
$$\hat{G}'_\Gamma(z)\cdot(1-\Gamma z)-\Gamma\cdot\hat{G}_\Gamma(z) \geq   -\Gamma+Y_\Gamma(z,\tilde{p}_i(z))$$
$$\implies (\hat{G}_\Gamma(z)\cdot(1-\Gamma z))' \geq   -\Gamma+Y_\Gamma(z,\tilde{p}_i(z)).$$
On integrating both sides from $z_2$ to $1$, we obtain
$$-\hat{G}_\Gamma(z_2)\cdot(1-\Gamma z_2)\geq   -\Gamma\cdot(1-z_2)+\int_{z_2}^1 Y_\Gamma(z,\tilde{p}_i(z))\cdot dz.$$
Since $Y_\Gamma(z,\tilde{p}_i(z))$ is 0 on $z\in[z_2,\gamma_\Gamma)$, we have
$$\int_{\gamma_\Gamma}^1Y_\Gamma(z,\tilde{p}_i(z))\cdot dz \leq -\hat{G}_\Gamma(z_2)\cdot(1-\Gamma z_2)+\Gamma\cdot(1-z_2) .$$
$$\implies \int_{\gamma_\Gamma}^1Y_\Gamma(z,\tilde{p}_i(z))\cdot dz \leq -{H}_\Gamma(z_2)\cdot(1-\Gamma z_2)+\Gamma\cdot(1-z_2). $$
Notice that $\frac{d}{dz}\left[-{H}_\Gamma(z)\cdot(1-\Gamma z)+\Gamma\cdot(1-z)\right] = \Gamma\cdot(H_\Gamma(z)-1)-H_\Gamma'(z)\cdot(1-\Gamma z)<0$ on $z \in [0,\gamma_\Gamma)$ since $H_\Gamma'(z)>0$ on this interval and $H_\Gamma(z)<1$ always holds. Since $z_2\geq z_0$ (from their definitions) and $\beta_\Gamma < z_0$, we have $z_2>\beta_\Gamma$. This gives us
$$\int_{\gamma_\Gamma}^1Y_\Gamma(z,\tilde{p}_i(z))\cdot dz \leq -{H}_\Gamma(\beta_\Gamma)\cdot(1-\Gamma\cdot \beta_\Gamma)+\Gamma\cdot(1-\beta_\Gamma) .$$
Since $\frac{\partial}{\partial p}Y_\Gamma(z,p) \leq 0$, $Y_\Gamma(z,p) \geq 0$ on $z \in [\gamma_\Gamma,1]$ and $\tilde{p}_i(z)$ is an increasing function, \textbf{Property B} follows from the above inequality.
\end{proof}

\section{Proof of Lemma \ref{Bound on G as a function of myu} }\label{Proof of bound on G as a function of myu}
Recall the statement of Lemma \ref{Bound on G as a function of myu}.
\Gmyu*
\begin{proof}
Define
$$\bar{G}_\Gamma(z)\overset{\underset{\mathrm{def}}{}}{=}\int_z^1 \frac{\Gamma}{\tilde{\bar{g}}(x)\exp\left(\Gamma \cdot \int_z^x\frac{\tilde{p}_1(y)}{\tilde{\bar{g}}(y)}dy\right)}dx.$$
Performing the same analysis as done in Appendix \ref{Proof of properties}, we get for $z\in[c,1]$
$$\bar{G}'_\Gamma(z)=\frac{\Gamma\cdot\left(\tilde{p}_1(z)\cdot \bar{G}_\Gamma(z)-1\right)}{\tilde{\bar{g}}(z)} \geq \min\left(\frac{-\Gamma}{\Gamma\cdot[-(1-z)\cdot\ln(1-z)-z]+1}, \frac{-\Gamma\cdot\left(1-\bar{G}_\Gamma(z)\right)}{1-\Gamma\cdot z}\right).$$
From the analysis in Appendix \ref{Proof of properties}, we know that this is sufficient for us to show that there exists a $z_0 \in [\beta_\Gamma,\gamma_\Gamma]$ such that $\bar{G}_\Gamma(z_0)=H_\Gamma(z_0)$ and $z_0=\inf\{z|\bar{G}_\Gamma(z)\leq H_\Gamma(z), z\in[0,1]\}$. We refrain from rewriting the proofs of these claims for conciseness. Now, we have
\begin{equation}\label{Integral bound}
\bar{G}_\Gamma(0)=-\int_0^{z_0}\bar{G}'_\Gamma(z)\cdot dz + G_\Gamma(z_0) \leq -\int_0^{c-\epsilon}\bar{G}'_\Gamma(z)\cdot dz-\int_c^{z_0}\bar{G}'_\Gamma(z)\cdot dz + H_\Gamma(z_0) + O(\epsilon).
\end{equation}
Since $\bar{G}_\Gamma(z)\geq H_\Gamma(z)$ for $z<z_0$, we have that $$\int_c^{z_0}\bar{G}'_\Gamma(z)\cdot dz \geq \int_c^{z_0} \frac{-\Gamma}{\Gamma\cdot[-(1-z)\cdot\ln(1-z)-z]+1} \cdot dz = M_\Gamma(z_0)-M_\Gamma(c).$$
Also, since $\tilde{\bar{g}}(z) \geq \eta(z)>0$ for $z \in [0,c-\epsilon)$, we have
$\bar{G}_\Gamma'(z)=\frac{\Gamma\cdot\left(\tilde{p}_1(z)\cdot \bar{G}_\Gamma(z)-1\right)}{\tilde{\bar{g}}(z)} \geq -\frac{\Gamma}{\eta(z)}$.
Now, from (\ref{Integral bound}), we have
$$\bar{G}_\Gamma(0) \leq \int_0^{c-\epsilon}\frac{\Gamma}{\eta(z)}\cdot dz-M_\Gamma(z_0)+M_\Gamma(c) + H_\Gamma(z_0)+ O(\epsilon).$$
Note that $z_0\leq \gamma_\Gamma$ and $H_\Gamma(z)$ is increasing for $z<\gamma_\Gamma$, while $M_\Gamma(z)$ is a decreasing function. We have 
$$\bar{G}_\Gamma(0) \leq \int_0^{c}\frac{\Gamma}{\eta(z)}\cdot dz-M_\Gamma(\gamma_\Gamma)+M_\Gamma(c) + H_\Gamma(\gamma_\Gamma)+ O(\epsilon).$$
From the definition of $\gamma_\Gamma$, we have $H_\Gamma(\gamma_\Gamma)=K_\Gamma(\gamma_\Gamma)$. This gives us
$$\bar{G}_\Gamma(0) \leq \int_0^{c}\frac{\Gamma}{\eta(z)}\cdot dz-M_\Gamma(\gamma_\Gamma)+M_\Gamma(c) + K_\Gamma(\gamma_\Gamma)+ O(\epsilon).$$
This completes our proof for the lemma.
\end{proof}
\bibliographystyle{ACM-Reference-Format}
\bibliography{bibfile}
\end{document}